\newcites{main}{References}
\newcites{appendix}{References}
\DeclareMathOperator*{\argmax}{argmax}
\newcommand{\bE}{\mathbb E}
\newcommand{\convd}{\stackrel{d}{\longrightarrow}}
\renewcommand{\hat}{\widehat}
\renewcommand{\tilde}{\widetilde}
\theoremstyle{definition}
\newtheorem{theorem}{Theorem}[section]
\newtheorem{assumption}{Assumption}[section]
\newtheorem{lemma}{Lemma}[section]
\newtheorem{remark}{Remark}[section]
\numberwithin{equation}{section}
\title{Treatment Effect Models with Strategic Interaction \\ in Treatment Decisions}
\author{Tadao Hoshino\thanks{Corresponding author: School of Political Science and Economics, Waseda University, 1-6-1 Nishi-waseda, Shinjuku, Tokyo 169-8050, Japan. Email: \href{mailto:thoshino@waseda.jp}{thoshino@waseda.jp}.} \
	 and Takahide Yanagi\thanks{Graduate School of Economics, Kyoto University, Yoshida Honmachi, Sakyo, Kyoto, 606-8501, Japan. Email: \href{mailto:yanagi@econ.kyoto-u.ac.jp}{yanagi@econ.kyoto-u.ac.jp}.}}
\date{This version: February 2023 \hspace{1cm} First version: October 2018}
\begin{document}

\onehalfspacing

\maketitle
\vspace{-1cm}
\begin{abstract}
	This study considers treatment effect models in which others' treatment decisions can affect both one's own treatment and outcome.
	Focusing on the case of two-player interactions, we formulate treatment decision behavior as a complete information game with multiple equilibria.
	Using a latent index framework and assuming a stochastic equilibrium selection, we prove that the marginal treatment effect from one's own treatment and that from the partner are identifiable on the conditional supports of certain threshold variables determined through the game model.
	Based on our constructive identification results, we propose a two-step semiparametric procedure for estimating the marginal treatment effects using series approximation.
	We show that the proposed estimator is uniformly consistent and asymptotically normally distributed.
	As an empirical illustration, we investigate the impacts of risky behaviors on adolescents' academic performance.
	
	\noindent \textbf{Keywords}: binary games; latent index models; marginal treatment effects; series estimation; strategic interaction.
	
	\noindent \textbf{JEL Classification}: C14, C31, C57.
\end{abstract}

\newpage 

\section{Introduction}\label{sec:intro}

Estimating the marginal treatment effects (MTE) is essential in treatment evaluation.
The MTE can provide rich information on how treatment effects vary across economic agents in terms of their observed and unobserved characteristics.
Furthermore, after estimating the MTE, researchers can identify many treatment parameters of interest, such as the average treatment effects (ATE), local ATE (LATE), and policy-relevant treatment effects (PRTE) as some weighted averages of the MTE (\citealpmain{heckman1999local, heckman2005structural}).
Prior studies clearly demonstrate the usefulness of MTE methods in various empirical fields (e.g., \citealpmain{basu2007use}; \citealpmain{carneiro2011estimating}; \citealpmain{cornelissen2016late}; \citealpmain{felfe2018does}).

An important but often neglected issue in studies of treatment effects is the presence of potential interference between agents.
For example, when evaluating the effect of smoking behavior on health outcomes for couples, one partner's smoking behavior would affect the health outcome of the other.
That is, the stable unit treatment value assumption (SUTVA) does not hold due to the ``treatment spillover''.
In addition, it is natural to expect that one partner's smoking behavior interacts ``strategically'' with the other partner's smoking behavior, in the sense that one's action can directly affect the utility of smoking of the other, and vice versa.
This example suggests the presence of two different types of interference that need to be addressed: (i) treatment spillover and (ii) strategic interaction in the treatment decisions.

In reality, the co-existence of treatment spillover and strategic interaction should be common.
\phantomsection\label{page:AE-2}\Copy{AE-2}{
	Here, we provide three empirical examples that would fit into our framework.
	First, many previous studies empirically demonstrate that strategic/social interaction between friends is a primary cause of delinquency.
	It is also often observed that delinquent activities among close friends have significant impacts on students' academic performance.
	Understanding such spillover effects, as well as direct effects, is important in the education literature.
	Second, consider two competing airlines deciding whether to introduce a direct flight between a given pair of cities.
	Then, it would be interesting to investigate to what extent the number of passengers using one airline's transfer flight to travel between that city pair drops when the other introduces a direct flight.
	Third, suppose there are two candidates in a mayoral election.
	The outcome of interest is the vote share of each candidate.
	Here, we would be interested in the effect of the candidates' political position (e.g., pro-choice or pro-life) or their main campaign strategy (e.g., online or grassroots) on their outcomes.
}
\bigskip

This study aims to develop identification and estimation procedures for MTE models that allow both treatment spillover and strategic interaction in the treatment decisions.
\phantomsection\label{page:AE-3-1}\Copy{AE-3-1}{
	In particular, we focus on an empirically relevant setup in which interactions occur only within a pair of agents (e.g., couples, best friends, duopoly firms, or incumbent and challenger), and there are no treatment spillovers across the pairs.
	We postulate that they decide their treatment status simultaneously in a binary game of complete information.
	Within this framework, we formulate a set of sufficient conditions under which it is possible to point-identify the MTE parameters of interest.
	Note that if we consider each pair of players as a single observation unit (ignoring the strategic interaction), such data do not violate SUTVA, and we may be able to apply an existing causal inference method with multiple treatments to identify some causal parameters.
	However, we do not adopt such approach because doing so would make it difficult to uncover the nature of the interaction structure and, more importantly, may not allow us to perform a precise player-level treatment evaluation.
}

To achieve our goal, we need to address the following two issues.
The first issue is the non-applicability of {\it (unordered) monotonicity} (\citealpmain{imbens1994identification}; \citealpmain{heckman1999local, heckman2005structural}; \citealpmain{heckman2018unordered}).
The monotonicity conditions require that shifts in instrumental variables (IVs) determine the direction of changes in the treatment choices uniformly for all agents.
\phantomsection\label{page:AE-4-1}\Copy{AE-4-1}{
	\citetmain{vytlacil2002independence} and \citetmain{heckman2018unordered} demonstrate that these monotonicity conditions are equivalent to assuming that each treatment realization is characterized by a single threshold-crossing equation in which the IVs and an unobserved error term are weakly separable.
	However, in our case, an agent's treatment choice may complexly depend on that of another agent through strategic interaction. 
	As a result, each player's treatment choice cannot be expressed as a weakly separable threshold crossing model, implying the failure of the monotonicity.
}
The second issue is the possibility of multiple equilibria in the treatment decisions. 
The presence of multiple equilibria leads to an {\it incomplete} econometric model (e.g., \citealpmain{tamer2003incomplete}; \citealpmain{lewbel2007coherency}; \citealpmain{ciliberto2009market}; \citealpmain{chesher2020structural}) in the sense that the model-consistent treatment assignment is not unique.
The issue of incompleteness is common in the literature on game model estimation; however, it is not yet well understood in the context of treatment evaluation.

Our identification strategy solves these two issues simultaneously by combining the local IV (LIV) method in \citemain{heckman1999local, heckman2005structural} and a stochastic equilibrium selection rule in the treatment decision game.
The key idea is to use local variations of player-specific continuous IVs that alter players' treatment status, but do not directly affect their outcomes.
Although this is a natural extension of the LIV method to a multi-dimensional space, it is still insufficient to point-identify the MTEs due to the presence of multiple equilibria.
We overcome this issue by explicitly introducing an equilibrium selection rule.

To identify the MTE parameters, we first need to identify the parameters of the treatment decision game.
We model the treatment decision game as a binary game of complete information.
In particular, for the identification of MTE parameters, the model needs to have strategic interaction effects that are functions of the IV.
This would be empirically plausible under many different situations.
\phantomsection\label{page:R1-6-1}\Copy{R1-6-1}{
	We then provide a new identification result for our game model based on a large support condition and a particular dependence property for a parametric copula function with a scalar correlation parameter (cf. \citealpmain{han2017identification}).
}
We also show that the large support condition can be mitigated when we introduce additional parametric model restrictions.
\phantomsection\label{page:R1-6-2}\Copy{R1-6-2}{
	Although treatment evaluation is our main concern, these identification results provide independent contributions to the literature; to the best of our knowledge, we would be the first to address the identification of game models where the joint distribution of the unobserved payoff components is characterized by a certain class of parametric copulas.
}

Given the identification of the treatment decision game, we present the following identification results for treatment parameters.
First, we can point-identify the following MTE parameters: the {\it direct} MTE, in which only one agent's treatment status switches from untreated to treated, whereas the partner's remains unchanged; the {\it indirect} MTE, in which only the partner's treatment status switches from untreated to treated, and the focal agent's status remains unchanged;
and the {\it total} MTE, in which the treatment status of both players switches from untreated to treated.
In contrast to the conventional MTE framework, our MTEs can reveal the treatment effect heterogeneity in terms of the pair of unobservables. 
We show that the regions in which the MTEs are identifiable are determined by the conditional supports of appropriately transformed threshold variables in the game model.
This result extends \citemain{heckman1999local,heckman2005structural} and \citemain{carneiro2009estimating}, who prove the identification of conventional MTEs in terms of the supports of propensity scores.
\phantomsection\label{page:AE-6-1}\Copy{AE-6-1}{
	Second, we demonstrate that the MTE parameters are over-identified under our identification conditions, which is essentially a consequence of the stochastic equilibrium selection assumption.
	This over-identification result provides an opportunity to improve the efficiency of MTE estimation.
}
Finally, we present the identification of several other treatment parameters, including the LATE and PRTE (these results are relegated to Appendix \ref{sec:severalparameter}).

Our identification is constructive in that we can estimate the MTEs directly by following the identification strategy.
We propose a two-step semiparametric procedure for estimating the MTEs.
In the first step, we estimate the parameters in the treatment decision game using a maximum likelihood (ML) approach based on a fully parametric model specification.
Using the ML estimates, we estimate the MTEs in the second step by employing semiparametric series (sieve) technique.
The proposed estimator is uniformly consistent with the optimal convergence rate and is asymptotically normally distributed.
In addition, the estimator possesses an oracle property; that is, its limiting distribution is the same as that of the infeasible estimator where the parameters in the treatment decision game are known.

To illustrate our methods empirically, we investigate the impacts of the delinquency (e.g., smoking and drinking) of an opposite-gender best friend on the academic outcomes of an adolescent, i.e., their grade point average (GPA).
Following the literature (e.g., \citealpmain{card2013peer}), we model the decision to participate in risky activities as a complete information game.
Our method revealed the following empirical evidence: (i) the susceptibility to peer effects varies with the personality of the students, (ii) the direct treatment effect of risky behaviors on the GPA is significantly negative for both male and female students, and (iii) the total treatment effect is even larger than the direct effect for both genders.
The third finding implies that the delinquency of a best friend has a negative causal impact on the academic performance of the students.
These results partially overlap with those obtained in previous empirical studies, but the current study would be the first to report such results by formally integrating social interactions in risky activities and the resulting causal effects on academic performance in a single framework.
We also demonstrate that if we ignore the strategic interaction between each pair of students and estimate the treatment model simply as a bivariate probit model, the resulting MTE estimates can differ significantly from those obtained in our framework.
This indicates the importance of correctly addressing the interaction structure in the treatment choices, not just as correlated multiple treatments.

\bigskip

\phantomsection\label{page:R2-1-2}\Copy{R2-1-2}{
	This study has a clear connection to both the treatment effect and the structural estimation literature.
	By formulating the treatment choice as a structural game model within the MTE framework, we can take advantage of both reduced-form causal inference and structural estimation.
	In particular, unlike a structural approach that fully specifies the outcome and treatment choice equations, our approach provides some robustness to the problem of model misspecification for the outcome equation.
	At the same time, compared to the conventional reduced-form analysis, we can perform a certain class of interesting counterfactual exercises through the structural estimation of treatment decision game.
}

To the best of our knowledge, few studies have addressed treatment evaluation in the presence of strategic interactions modeled explicitly as games.
One important exception is \citetmain{balat2020multiple}.
Their approach is more general than ours in that it allows for nonparametric strategic interactions with more than two players and does not require point identification of the game parameters.
To overcome the multiple equilibria issue, \citetmain{balat2020multiple} employ nonparametric shape restrictions and variations of possibly discrete IVs.
However, because of such generality, the treatment parameter of interest in their study, ATE, is only partially identified in general.
In contrast, we focus only on two-player games with a stochastic equilibrium selection to establish point identification for the treatment decision game and MTE.
While this loses some generality, we can obtain rich information on the unobserved heterogeneity in the treatment effects that cannot be learned by estimating ATE alone.
Thus, these two studies complement each other from different perspectives.

Another closely related study is \citetmain{lee2018identifying}, who showed that it is possible to identify MTE parameters by modeling the treatment selection with a set of threshold crossing rules.
Their identification strategy is essentially the same as ours.
They characterize a two-player binary game by combining five threshold crossing rules and identify the MTE parameters by computing the changes in the expected outcome with respect to local variations of all these threshold variables (see Appendix C of \citealpmain{lee2018identifying}).
\phantomsection\label{page:AE-7-1}\Copy{AE-7-1}{
	Compared to their model, each treatment realization in our model can be characterized by a smaller set of threshold crossing equations and consequently the target MTE parameters are not completely identical between these studies.
	This difference is crucial from a practical perspective because our MTE parameter can be estimated reasonably well even with moderate sample size (see Remark \ref{remark:leesalanie} for more details).
}
\phantomsection\label{page:AE-8-1}\Copy{AE-8-1}{
	In addition, they do not discuss in depth how to cope with the identification problem in the presence of multiple equilibria, whereas we formally prove the identification of both the game model and the treatment parameters under multiplicity.
}

\bigskip

The rest of the paper is organized as follows.
In Section \ref{sec:model}, we introduce our model and review the incompleteness problem for discrete game models.
Section \ref{sec:identification} presents the identification analysis.
We develop our MTE estimators and study their asymptotic properties in Section \ref{sec:estimation}. 
Section \ref{sec:numeric} provides the numerical illustrations, including Monte Carlo simulations and the empirical analysis of adolescents' academic performance.
Section \ref{sec:conclusion} concludes the paper.
The proofs of all theorems are provided in Appendix \ref{sec:proof}, and the other supplementary technical results in Appendix \ref{appendix:supptech}.
In Appendix \ref{sec:severalparameter}, we discuss the identification of several treatment parameters besides the MTE.
Appendix \ref{sec:MC} presents detailed information about the Monte Carlo experiments in Section \ref{sec:numeric}.
Finally, we provide supplementary material for the empirical analysis in Appendix \ref{sec:supp}.

\section{Model} \label{sec:model}

In this section, we introduce our treatment effect model with strategic interaction.
We denote a player by $j \in \{1, 2\}$ and his/her partner (or opponent) by $-j$. 
We aim to evaluate the effects of player $j$'s treatment $D_j \in \{0, 1\}$ and/or partner's treatment $D_{-j} \in \{0, 1\}$ on player $j$'s outcome $Y_j$ and/or the partner's outcome $Y_{-j}$.
The outcomes may or may not be common to both players; that is, we allow both $Y_j = Y_{-j}$ and $Y_j \neq Y_{-j}$.
For $(d_j, d_{-j}) \in \{ 0, 1 \}^2$, let $Y_j^{(d_j, d_{-j})}$ be the potential outcome when $j$'s own treatment status is $D_j = d_j$ and the partner's is $D_{-j} = d_{-j}$.
Then, the observed outcome can be written as $Y_j = \sum_{d_j}\sum_{d_{-j}} I^{(d_j, d_{-j})} Y_j^{(d_j, d_{-j})}$, where $I^{(d_j, d_{-j})} \coloneqq \mathbf{1}\{(D_j, D_{-j}) = (d_j, d_{-j})\}$.
\phantomsection\label{page:AE-9-1}\Copy{AE-9-1}{
	Suppose that player $j$'s potential outcome can be written as
	\begin{align}\label{eq:model-Y1}
		Y_j^{(d_j, d_{-j})} = \mu_j^{(d_j, d_{-j})} \left( X_j, U_j^{(d_j, d_{-j})} \right),
	\end{align}
	where $X_j \in \mathbb{R}^{\mathrm{dim}(X)}$ is a vector of observable covariates, $U_j^{(d_j, d_{-j})}$ is an unobserved random variable with arbitrary dimension, and $\mu_j^{(d_j, d_{-j})}$ is an unknown structural function.
}
The covariates $X_1$ and $X_2$ may contain common elements as well as some player-specific elements.
For simplicity, we assume that the dimensions of $X_1$ and $X_2$ are both equal to $\mathrm{dim}(X)$, and the same assumption will be made for the other variables.

\phantomsection\label{page:AE-18-1}\Copy{AE-18-1}{
	Note that we do not explicitly consider the stochastic process on who becomes whose partner, but we treat the pair formation as a given object.
	In other words, all subsequent analyses are conditioned on the existing pair formation and may not be generalizable to other pair data.
	While this could be a shortcoming of this study, such an approach has been adopted quite commonly in the social interaction literature to circumvent the notorious endogenous network formation problem (with a few exceptions: e.g., \citealpmain{goldsmith2013social,hsieh2016social,johnsson2021estimation}).
}

Hereinafter, for a generic player-specific variable, say $A_j$, we write $A$ without a subscript to denote the union $A = (A_1, A_2)$.
For example, $Y = (Y_1, Y_2)$, $X = (X_1, X_2)$, and $U^{(d_1, d_2)} = ( U_1^{(d_1, d_2)}, U_2^{(d_2, d_1)} )$.
In addition, let $F_{A | E}$ denote the cumulative distribution function (CDF) of $A$ conditional on $E$.
When $A$ is continuously distributed given $E$, we write its conditional probability density function (PDF) as $f_{A|E}$.

\subsection{Strategic treatment decision}\label{subsec:treatment}

Suppose that each player $j$ obtains the payoff $\pi_j(D_{-j}, W_j) - \varepsilon_j$ when $D_j = 1$, and he/she obtains $0$ (for normalization) when $D_j = 0$, where $W_j \coloneqq (X_j^\top, Z_j^\top)^\top$ is a vector that includes $X_j$ and the instruments $Z_j \in \mathbb{R}^{\mathrm{dim}(Z)}$, $\varepsilon_j \in \mathbb{R}$ is an unobserved continuous variable, and $\pi_j$ is an unknown function.
Then, for $j = 1,2$, the payoff function can be written as follows:
\begin{align*}
	u_j(d_j, d_{-j}) \coloneqq d_j \left[ \pi_j^0(W_j) + d_{-j} \cdot \Delta_j(W_j) - \varepsilon_j \right],
\end{align*}
where $\Delta_j(W_j) \coloneqq \pi_j^1(W_j) - \pi_j^0(W_j)$ corresponds to the strategic interaction effect with $\pi_j^0(W_j) \coloneqq \pi_j(0, W_j)$ and $\pi_j^1(W_j) \coloneqq \pi_j(1, W_j)$.\footnote{
	This expression clearly indicates that the additive separability imposed on the payoff function implies the interaction effect to be a function of only the observables.
	In our empirical setting, this requires that the impact of friends' delinquency is independent of own unobserved factors, such as his/her latent attitude towards risky activities.
	With few exceptions (e.g., \citealpmain{kline2015identification}), models where the interaction effects can depend on some unobservables have not been studied in detail in the literature.
	\label{foot:additivity}
 	}
Whereas the strategic interaction effect is often assumed to be constant in the game econometrics literature, we require it to be a non-constant function of the instruments for identification of the MTE parameters.

Based on the payoff-maximization principle, the best response is $D_j(d_{-j}) = \argmax_{d_j} u_j(d_j, d_{-j})$.
Here, we assume that $(W, \varepsilon)$ is common knowledge for both players (i.e., a complete information setup).
Furthermore, assume that the set of realized treatments $(D_j, D_{-j})$ is characterized as a pure strategy Nash equilibrium.\footnote{
	Most of the studies in the literature focus on pure strategy Nash equilibria as the solution concept for $2 \times 2$ games, except when no Nash equilibria in pure strategies exist, for example, because of asymmetric strategic interaction (e.g., \citealpmain{bjorn1984simultaneous}; \citealpmain{tamer2003incomplete}; \citealpmain{depaula2013econometric}).
	For the motivations of this choice, see, for example, Section 6 of \citemain{bjorn1984simultaneous}.
}
Then, the players' treatment decisions follow the simultaneous binary response model:
\begin{align}\label{eq:model-D1}
	\begin{split}
		D_j 
		& = \mathbf{1} \left\{ \pi_j (D_{-j}, W_j) \ge \varepsilon_j \right\} \\
		& = \mathbf{1} \left\{ \pi_j^0(W_j) + D_{-j} \cdot \Delta_j(W_j) \ge \varepsilon_j \right\}.
	\end{split}
\end{align}
The first line of \eqref{eq:model-D1} shows that our treatment decision model can be viewed as a direct extension of the latent index model of \citetmain{heckman1999local, heckman2005structural}.\footnote{\label{page:R2-1-e2}\Copy{R2-1-e2}{
	We can see that the additive separability in the treatment payoff function essentially rules out unobserved preference heterogeneity in the treatment choice. 
	One possible way to relax this restriction is to use a finite-mixture model to permit the preference heterogeneity across pairs, as in \citetmain{hoshino2022estimating}, but we leave this task for a future study.
}
}
\phantomsection\label{page:AE-10-1}\Copy{AE-10-1}{
	As we will discuss in Remark \ref{remark:monotonicity}, the model in \eqref{eq:model-D1} does not satisfy the IV monotonicity.
	Studies in the MTE literature that consider a ``non-monotonic'' treatment model similar to ours include \citemain{klein2010heterogeneous} and \citemain{lee2018identifying}.
}

In the following, we assume that the treatment decisions are known to be strategic complements (to be in line with our empirical application).

\begin{assumption}\label{as:complement}
	\hfil
	\begin{enumerate}[(i)]
		\item For both $j = 1, 2$, $\Delta_j(W_j) > 0$ almost surely (a.s.).
		\item \phantomsection\label{page:AE-11-1}\Copy{AE-11-1}{
			$(\varepsilon_1, \varepsilon_2)$ are conditionally independent of $Z$ given $X$ and continuously distributed with strictly increasing marginal CDFs $F_{\varepsilon_1 | X}$ and $F_{\varepsilon_2 | X}$, respectively.
		}
	\end{enumerate}
\end{assumption}

Note that Assumption \ref{as:complement}(i) is empirically testable.\footnote{
	For example, \citetmain{aradillas2019nonparametric} has developed nonparametric tests for the presence and direction of strategic interaction effects in $2 \times 2$ games of complete information.
	For another example, one may use a Vuong-type model selection test for non-nested alternatives (e.g., \citealpmain{hsu2017model}; \citealpmain{schennach2017simple}).
	}
We also note that our analysis can be easily converted to the situation with strategic substitutes where $\Delta_j(W_j) < 0$ a.s. for both $j = 1, 2$.
We introduce Assumption \ref{as:complement}(ii) not only for the identification of the game model but also for that of MTE.

Define
\begin{align}\label{eq:definition-P}
	V_j \coloneqq F_{\varepsilon_j | X} ( \varepsilon_j ),
	\qquad 
	P_j^0 \coloneqq F_{\varepsilon_j | X} ( \pi_j^0(W_j) ),
	\qquad
	P_j^1 \coloneqq F_{\varepsilon_j | X} ( \pi_j^1(W_j) ).
\end{align}
By construction, $V_j $ is distributed as $\text{Uniform}[0,1]$ conditional on $X$.
Then, we can rewrite \eqref{eq:model-D1} as
\begin{align*}\label{eq:model-D2}
	D_j = \mathbf{1} \left\{ P_j^0 + D_{-j} (P_j^1 - P_j^0) \ge V_j \right\}.
\end{align*}
By Assumption \ref{as:complement}, $P_j^1 - P_j^0 > 0$ a.s. for both $j = 1, 2$.

\subsection{Incompleteness} \label{subsec:incomplete}

A major difficulty in our model is the {\it incompleteness} of the treatment decision model.
\phantomsection\label{page:AE-11-2}\Copy{AE-11-2}{
	We have the following relationships:
	\begin{equation*} 
		\begin{array}{lcllllcl}
			D = (1, 0) & \Longleftrightarrow & V_1 \le P_1^0, \; V_2 > P_2^1, & & & 
			D = (0, 1) & \Longleftrightarrow & V_1 > P_1^1, \; V_2 \le P_2^0, \\
			D = (1, 1) & \Longrightarrow & V_1 \le P_1^1, \; V_2 \le P_2^1, & & &
			D = (0, 0) & \Longrightarrow & V_1 > P_1^0, \; V_2 > P_2^0.
		\end{array}
	\end{equation*}
}
Figure \ref{fig:nash-comp1} visually summarizes these relationships.
As shown in the figure, the space of $(V_1, V_2)$ cannot be partitioned into non-overlapping regions associated with the four alternative realizations of $D$.
Both $D = (1, 1)$ and $D = (0, 0)$ can occur when $P_1^0 < V_1 \le P_1^1$ and $P_2^0 < V_2 \le P_2^1$ (i.e., multiple equilibria).
This non-uniqueness of model-consistent decisions is called incompleteness and has been extensively studied in the literature on simultaneous equation models for discrete outcomes.

\begin{figure}[!h]
	\centering
	\fbox{\includegraphics[width = 8cm, bb = 0 0 720 540]{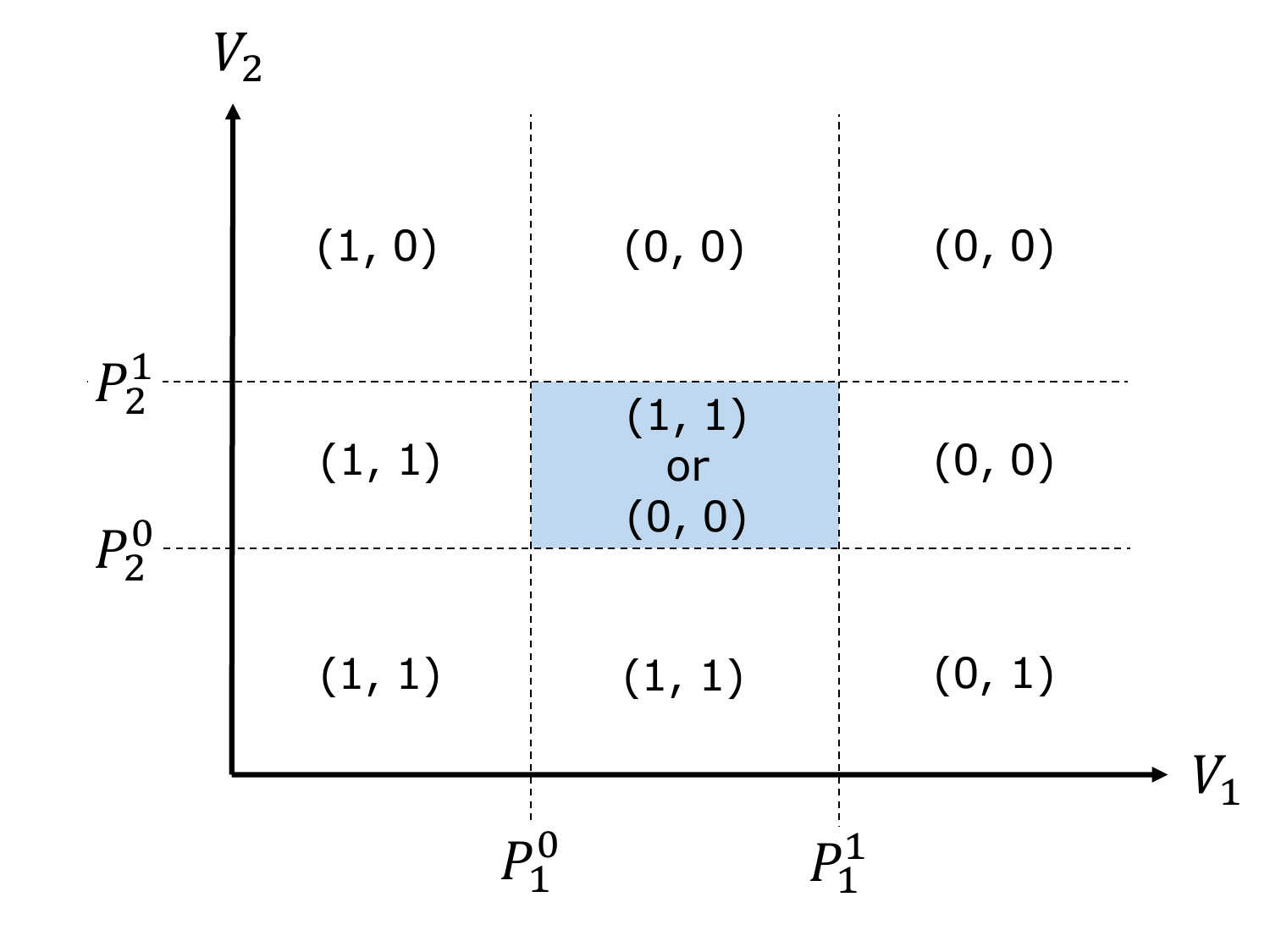}}
	\caption{Nash equilibrium under strategic complementarity.}
	\label{fig:nash-comp1}
\end{figure}

In the game econometrics literature, there are several approaches to handle the incompleteness problem to achieve point identification.\footnote{
	See \citetmain{depaula2013econometric} for an excellent survey on this topic.
} 
One major approach is to explicitly introduce a stochastic (or possibly deterministic) equilibrium selection mechanism (e.g., \citealpmain{bjorn1984simultaneous}; \citealpmain{kooreman1994estimation}; \citealpmain{soetevent2007discrete}; \citealpmain{bajari2010identification}; \citealpmain{card2013peer}).
This approach allows us to predict the choice behavior in the region of multiplicity and to perform counterfactual exercises, which is essential to establish the identification of the MTE parameters, and thus is adopted in this study.

\begin{figure}[!h]
	\centering
	\fbox{\includegraphics[width = 8cm, bb = 0 0 720 540]{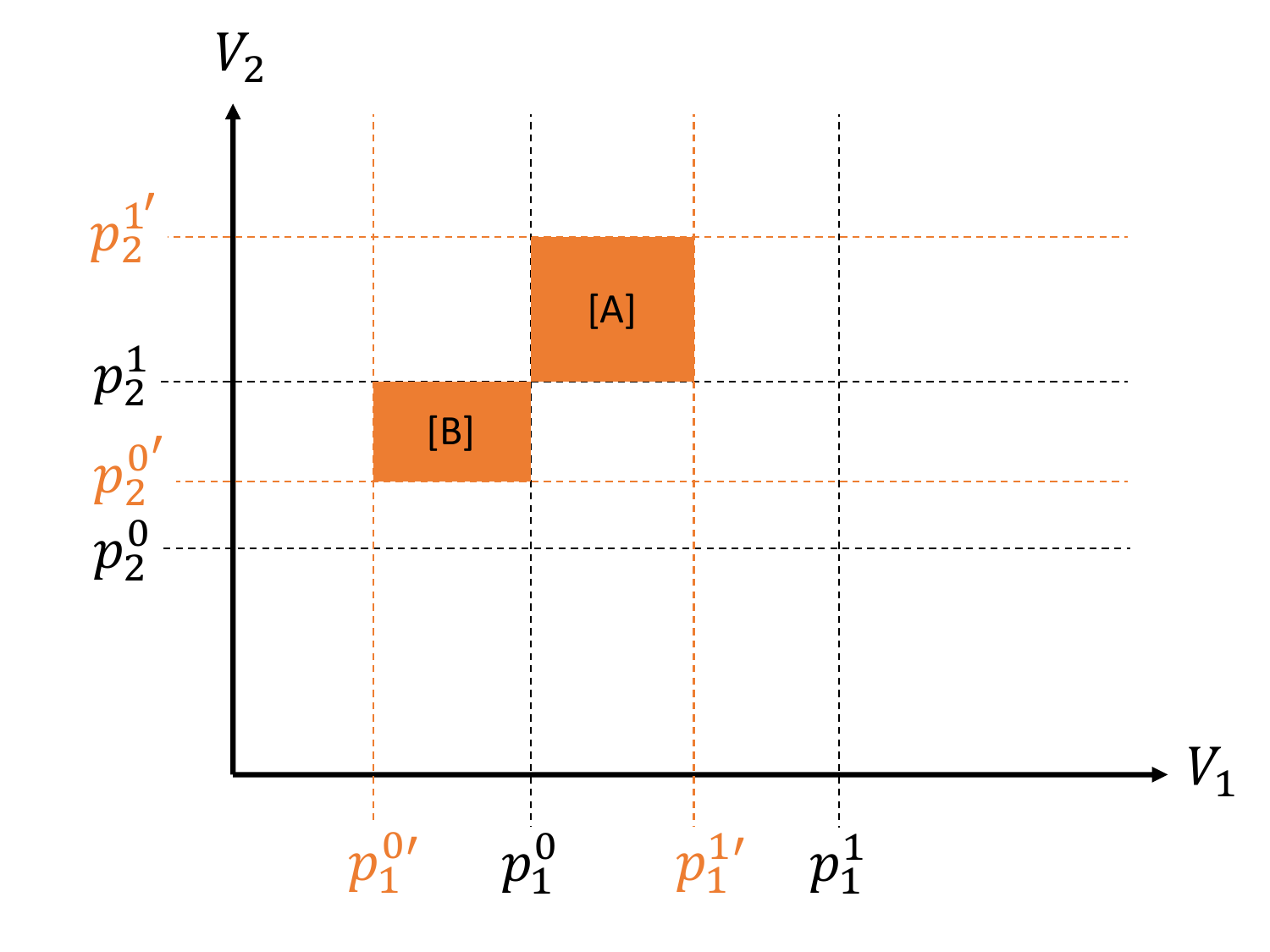}}
	\caption{Failure of the monotonicity conditions.}
	\label{fig:monotonicity}
\end{figure}

\begin{remark}[Monotonicity] \label{remark:monotonicity}
	To see why the monotonicity conditions are not applicable to our situation, let $D_j(w)$ be the potential treatment status of player $j$ when $W = w$, and $D(w) = (D_1(w), D_2(w))$.
	The (ordered) monotonicity in \citetmain{imbens1994identification} is that, for any $w, w' \in \text{supp}[W]$ and $j = 1, 2$, either $\Pr[ D_j(w) \ge D_j(w') ] = 1$ or $\Pr[ D_j(w) \le D_j(w') ] = 1$ should hold.
	The unordered monotonicity in \citetmain{heckman2018unordered} is that, for any $w, w' \in \text{supp}[W]$ and $(d_1, d_2) \in \{ 0, 1 \}^2$, either $\Pr[ \mathbf{1}\{ D(w) = (d_1, d_2) \} \ge \mathbf{1}\{ D(w') = (d_1, d_2) \} ] = 1$ or $\Pr[ \mathbf{1}\{ D(w) = (d_1, d_2) \} \le \mathbf{1}\{ D(w') = (d_1, d_2) \} ] = 1$ should hold.
	Here, consider two groups of pairs: the one in the region of $D = (0, 0)$ and the other in the region of $D = (1, 1)$ for a given $W = w$.
	Suppose that shifting $W$ from $w$ to $w'$ induces both groups to the multiple equilibria.
	Figure \ref{fig:monotonicity} illustrates such a situation where the pairs in the former group and those in the latter are respectively located in regions [A] and [B].
	Here, $(p_1^0, p_1^1, p_2^0, p_2^1)$ and $(p_1^{0'}, p_1^{1'}, p_2^{0'}, p_2^{1'})$ correspond the values of $(P_1^0, P_1^1, P_2^0, P_2^1)$ when $W = w$ and $W = w'$, respectively.
	In this situation, some of the former pairs would switch their treatment statuses from $D = (0, 0)$ to $(1, 1)$, while some of the latter pairs would switch from $D = (1, 1)$ to $(0, 0)$.
	Thus, the impact of $W$ on the treatment choice may be non-monotonic.
\end{remark}

\section{Identification}\label{sec:identification}

In this section, we present the identification results for the treatment decision game and the MTE parameters.
We refer to the treatment decision game \eqref{eq:model-D1} as the ``first stage'' and the realization of the outcome \eqref{eq:model-Y1} as the ``second stage''.
Because the identification of MTE relies on the knowledge of the first-stage parameters, we first examine the identification of them in Subsection \ref{subsec:first}, and then present the identification results for the second-stage parameters in Subsection \ref{subsec:second}.

\subsection{Identification of the first-stage parameters} \label{subsec:first}

We first introduce the following assumptions on the joint distribution of $(\varepsilon_1, \varepsilon_2)$.

\begin{assumption}
	\hfil 
	\label{as:copula}
	\begin{enumerate}[(i)]
		\item 
		The joint distribution of $(\varepsilon_1, \varepsilon_2)$ given $X = x$ is represented by the copula function $H_{\rho_x}$ such that $\Pr[\varepsilon_1 \le a_1, \varepsilon_2 \le a_2 | X = x] = H_{\rho_x}(F_{\varepsilon_1 | X = x}(a_1), F_{\varepsilon_2 | X = x}(a_2))$, where $\rho_x \in (\underline{c}, \bar c)$ is a scalar correlation parameter and $\underline{c}$ and $\bar c$ are real numbers whose values depend on the choice of the copula function.
		\item $H_{\rho_x}(\cdot, \cdot)$ is twice differentiable in its arguments and ${\rho_x}$.
		\item $H_{\rho_x}(\cdot, \cdot)$ is strictly more stochastically increasing in joint distribution with respect to ${\rho_x}$ (see Definition 3.3 of \citetmain{han2017identification}).
	\end{enumerate}
\end{assumption}
\phantomsection\label{page:AE-12-1}\Copy{AE-12-1}{
	Assumption \ref{as:copula}(iii) restricts the dependence ordering of the copula function in terms of stochastic monotonicity.
	\citemain{han2017identification} use this property to identify generalized bivariate probit models, and they demonstrate that various commonly used copula functions satisfy it.
	Note that this dependence property is only for the class of copulas with a scalar correlation parameter.
	Investigating the identification with a more general multi-parameter copula is beyond the scope of our study.
}
A typical example satisfying this assumption is a standard bivariate normal distribution, in which $H_{\rho_x}$ corresponds to the Gaussian copula $H_{\rho_x}(v_1, v_2) = \Phi_2(\Phi^{-1}(v_1), \Phi^{-1}(v_2); \rho_x)$, with $\Phi_2(\cdot, \cdot; \rho_x)$ and $\Phi(\cdot)$ being the standard bivariate normal CDF with correlation $\rho_x$ and the standard normal CDF, respectively.
For another example, the assumption is also satisfied with the Farlie--Gumbel--Morgenstern (FGM) copula $H_{\rho_x}(v_1, v_2) = v_1v_2[1 + \rho_x(1 - v_1)(1 - v_2)]$.
For other examples and further discussions on the dependence ordering properties of copula functions, see \citemain{han2017identification}.

\begin{assumption} \label{as:multiple}
	In the case of multiple equilibria for a given $X = x$, $D = (0, 0)$ occurs if and only if $\epsilon \le \lambda_x$, where $\lambda_x \in [0, 1]$ is a constant and $\epsilon$ is a random variable distributed as $\text{Uniform}[0, 1]$ independent of $(Z, \varepsilon, U^{(d_1, d_2)})$ given $X$ for $(d_1,d_2)\in\{0,1\}^2$.
\end{assumption}

This assumption states that, given $X = x$, $D = (0, 0)$ is observed with probability $\lambda_x$ in the multiple equilibria situation.
For example, some authors assume that under multiple equilibria, one of model-consistent actions is selected uniformly at random (e.g., \citealpmain{bjorn1984simultaneous}; \citealpmain{soetevent2007discrete}; \citealpmain{card2013peer}).
If we adopt the same assumption, we can set $\lambda_x = 0.5$ a priori.
As another example, one may assume that the realized treatment status corresponds to the ``largest'' Nash equilibrium, as in \citemain{xu2015estimation}.
In this case, because $u_j(1,1) > u_j(0,0)$ in the multiple equilibria region for both players, $D = (1,1)$ holds almost surely, i.e., $\lambda_x = 0$.
Assumption \ref{as:multiple} allows for a more general equilibrium selection in that $\lambda_x$ can be unknown and can depend on $X$.
\phantomsection\label{page:R2-2-1}\Copy{R2-2-1}{
	However, note that the IV $Z$ should not affect equilibrium selection.
	Such an assumption is not necessary for identifying the game model  but is required for nonparametrically identifying MTE (see also Remark \ref{remark:brinch}).
}

In addition, Assumption \ref{as:multiple} rules out the possibility of ``endogenous'' equilibrium selection such that the selection probability depends on unobservables $(\varepsilon, U^{(d_1,d_2)})$.
In our empirical setting, this requires that the latent academic abilities and attitudes toward risky activities do not affect which is selected between the two equilibria.
Although this requirement might be restrictive for certain empirical applications, it would be challenging to identify such a game model (cf. \citealpmain{jun2020counterfactual}).

For any given $x \in \text{supp}[X]$ and $w \in \text{supp}[W | X = x]$, we write the conditional probability of $D = (d_1, d_2)$ as $\mathcal{L}^{(d_1, d_2)}(w) \coloneqq \Pr[D = (d_1, d_2) | W = w]$ and the value of $P_j^d$ as $p_j^d = F_{\varepsilon_j | X = x}(\pi_j^d(w_j))$.
Then, the above assumptions give
\begin{equation} \label{eq:moments}
	\begin{array}{ll}
		\mathcal{L}^{(1, 0)}(w) = p_1^0 - H_{\rho_x}(p_1^0, p_2^1), & \quad \mathcal{L}^{(1, 1)}(w) = H_{\rho_x}(p_1^1, p_2^1) - \lambda_x \cdot \mathcal{L}_{\text{mul}}(w), \\
		\mathcal{L}^{(0, 1)}(w) = p_2^0 - H_{\rho_x}(p_1^1, p_2^0), & \quad \mathcal{L}^{(0, 0)}(w) = 1 - \sum_{(d_1,d_2) \neq (0,0)} \mathcal{L}^{(d_1, d_2)}(w),
	\end{array}
\end{equation}
where $\mathcal{L}_{\text{mul}}(w) \coloneqq H_{\rho_x}(p_1^1, p_2^1) - H_{\rho_x}(p_1^1, p_2^0) - H_{\rho_x}(p_1^0, p_2^1) + H_{\rho_x}(p_1^0, p_2^0)$ is the probability that the pair of players is in the multiple equilibria region.
There are six unknown parameters $(\mathbf{p}, \rho_x, \lambda_x)$ in the moment equations in \eqref{eq:moments}, where $\mathbf{p} = (p_1^0, p_1^1, p_2^0, p_2^1)$.

\begin{assumption}
	\hfil 
	\label{as:infinity}
	\begin{enumerate}[(i)]
		\item For both $j = 1, 2$, there exists a player-specific continuous random variable, say $W_{j,1}$, whose PDF is everywhere positive on $\mathbb{R}$ given $(W_{j,-1}, W_{-j})$, where $W_j = (W_{j, 1}, W_{j, -1})$, so that each of $\pi_j^0(W_j)$ and $\pi_j^1(W_j)$ is non-degenerate and continuously distributed conditional on $(W_{j,-1}, W_{-j})$.
		\item For both $j = 1, 2$, $\pi_j^1(w_{j, 1}, w_{j, -1}) \to -\infty$ as $w_{j, 1} \to -\infty$ for any $w_{j, -1} \in \text{supp}[W_{j, -1} | X = x]$.
	\end{enumerate}
\end{assumption}

Assumption \ref{as:infinity}(i) requires that at least one player-specific element of $W_j$ can tend to $-\infty$ and $\infty$, and Assumption \ref{as:infinity}(ii) says that this player-specific variable has a significant impact on the payoff function $\pi_j^1$.
We note that this assumption can be replaced by $\pi_j^1(w_{j, 1}, w_{j, -1}) \to -\infty$ as $w_{j, 1} \to \infty$.
Some prior studies have utilized assumptions similar to ours (e.g., \citealp{tamer2003incomplete}; \citealp{kline2015identification}).
Importantly, this assumption is empirically verifiable because almost all pairs with sufficiently small $w_{1, 1}$ and $w_{2, 1}$ should choose $D = (0, 0)$ if it is true.

The next theorem presents identification for a general parameter value $(\mathbf{p}, \rho_x, \lambda_x)$, for any given $x \in \text{supp}[X]$ and $w \in \text{supp}[W | X = x]$.
To the best of our knowledge, the identification result for the correlation parameter $\rho_x$ is novel in the literature of discrete games with complete information.

\begin{theorem}\label{thm:gameiden}
	\phantomsection\label{page:AE-11-3}\Copy{AE-11-3}{
		Fix arbitrary $x \in \text{supp}[X]$ and $w \in \text{supp}[W | X = x]$.
		Let $p_j^d = F_{\varepsilon_j | X = x}(\pi_j^d(w_j))$ with $d = 0, 1$ and $j = 1, 2$.
		\begin{enumerate}[(i)]
			\item Suppose that Assumptions \ref{as:complement}, \ref{as:copula}(i), and \ref{as:infinity} hold.
			Then, $p_j^0$ is identified for $j = 1, 2$.
			\item In addition, suppose that Assumptions \ref{as:copula}(ii)--(iii) hold.
			If $\text{supp}[P_j^1 | X = x] \times (\underline{c}, \bar c)$ is a simply connected set, then $(p_j^1, \rho_x)$ are identified for $j = 1,2$.
			\item If Assumption \ref{as:multiple} additionally holds, $\lambda_x$ is identified.
		\end{enumerate}
	}
\end{theorem}

Note that $\mathbf{p} = (p_1^0, p_1^1, p_2^0, p_2^1)$ can be identified without assuming any particular form of equilibrium selection.
Indeed, the equilibrium selection assumption is imposed not to simplify the identification of the game model but to achieve the identification of the MTE parameters.

\begin{remark}[Identification without the large support condition]\label{remark:infinity}
	The large support condition in Assumption \ref{as:infinity} can be replaced by other identification conditions.
	For example, consider $(w_1, w_2)$, $(w_1', w_2)$, $(w_1, w_2')$, $(w_1', w_2') \in \text{supp}[W | X = x]$ such that $w_j \neq w_j'$ for $j = 1, 2$.
	Define $\vartheta_{w, w'} \coloneqq (\mathbf{p}, \mathbf{p}', \rho_x, \lambda_x)$ and
	\begin{align*}
		\mathcal{G}(w_1, w_2; \vartheta_{w, w'})
		\coloneqq 
		\begin{pmatrix}
			p_1^0 - H_{\rho_x}(p_1^0, p_2^1), 
			\;\; p_2^0 - H_{\rho_x}(p_1^1, p_2^0), 
			\;\; H_{\rho_x}(p_1^1, p_2^1) - \lambda_x \cdot \mathcal{L}_{\text{mul}}(w)
		\end{pmatrix}
		,
	\end{align*}
	where $\mathbf{p}' = (p_1^{0'}, p_1^{1'}, p_2^{0'}, p_2^{1'})$ with $p_j^{d'} = F_{\varepsilon_j | X = x} (\pi_j^d(w_j'))$.
	Further, let $\mathcal{G}(\vartheta_{w, w'}) \coloneqq [\mathcal{G}(w_1, w_2; \vartheta_{w, w'})$, $\mathcal{G}(w_1', w_2; \vartheta_{w, w'}), \mathcal{G}(w_1, w_2'; \vartheta_{w, w'}), \mathcal{G}(w_1', w_2'; \vartheta_{w, w'})]^\top$.
	Then, $\mathcal{G}(\vartheta_{w, w'})$ is a $12 \times 1$ vector whose value is knowable from data in view of \eqref{eq:moments}, while $\vartheta_{w, w'}$ contains 10 unknown parameters.
	Thus, we can locally identify $\vartheta_{w, w'}$ if and only if the rank of the Jacobian matrix $\partial \mathcal{G}(\vartheta_{w, w'}) / \partial \vartheta_{w, w'}$ is equal to 10 (Theorem 6, \citealpmain{rothenberg1971identification}).
	Moreover, we can achieve the global identification if we introduce additional conditions on the structure of the Jacobian matrix.\footnote{
		By Lemma 4.2 of \citetmain{han2017identification}, $\vartheta_{w, w'}$ is globally identified if there exists a $10 \times 1$ sub-vector $\mathcal{G}'$ of $\mathcal{G}$ such that (i) $\mathcal{G}'$ is proper, (ii) the Jacobian of $\mathcal{G}'$ vanishes nowhere, and (iii) the range of $\mathcal{G}'$ is simply connected.
	}
	In general, proving the full-rankness of the Jacobian matrix with easy-to-check conditions is difficult unless we introduce specific parametric specifications.
	As such an example, we consider a particular linear index form for the payoff function and the equilibrium selection in Appendix \ref{subsec:ident_wo_inf}, and demonstrate that the parameters in this game model can be identified globally without the large support condition.
	Another approach to circumvent the large support condition is to employ the strategy of \citetmain{kline2016empirical}, which is based on the unimodality of the error distribution and the linearity of the payoff function.
\end{remark}

Finally, we note that any counterfactual analysis requires that the structure of the game model, including the equilibrium selection mechanism, should remain invariant under the new environment.
Although this requirement sounds restrictive, it cannot be omitted in principle, except for a partial identification analysis. 
For a more specific discussion, see \citemain{jun2020counterfactual} and Appendix \ref{subsec:PRTE} of this paper.

\subsection{Identification of the second-stage parameters} \label{subsec:second}

Throughout this subsection, we denote the conditional joint CDF and PDF of $(V_1, V_2)$ given $X = x$ as $H(\cdot, \cdot | x)$ and $h(\cdot, \cdot | x)$, respectively.
Theorem \ref{thm:gameiden}(ii) ensures that these functions are identified via the copula function $H_{\rho_x}$.
Including these, all the first-stage components that are identifiable through Theorem \ref{thm:gameiden} are treated as observable information in the following analysis.

\begin{assumption}
	\hfil
	\label{as:IV}
	\begin{enumerate}[(i)]
		\item $Z$ is excluded from the structural functions in \eqref{eq:model-Y1} and is independent of the unobservables $(\varepsilon, U^{(d_1, d_2)})$ given $X$ for $(d_1, d_2) \in \{0, 1\}^2$.
		\item \phantomsection\label{page:AE-14-1}\Copy{AE-14-1}{
			For all $(d_1, d_1', d_2, d_2') \in \{ 0, 1 \}^4$ such that $d_1 \neq d_1'$ and $d_2 \neq d_2'$, $(P_1^{d_1}, P_2^{d_2})$ is continuously distributed conditional on $(X, P_1^{d_1'}, P_2^{d_2'})$.
		}
	\end{enumerate}
\end{assumption}

Assumption \ref{as:IV}(i) requires $Z$ not to directly affect $Y$ and to be conditionally independent of the error terms.
By construction, the transformed error $V$ is also conditionally independent of $Z$ given $X$.
\phantomsection\label{page:AE-15}\Copy{AE-15}{
	Assumption \ref{as:IV}(ii) can be read as the exclusion restriction assumption that is often required for nonparametric identification.
	That is, for this to hold in a nonparametric model, both player- and action-specific continuous IVs must exist.
	However, note that when considering a fully parametric game model with a linear index form (i.e., the one given in Assumption \ref{as:gamemodel}), the existence of action-specific IVs is not strictly required to maintain Assumption \ref{as:IV}(ii), and the assumption can be satisfied if some IVs are player-specific and continuously distributed.
}

\phantomsection\label{page:R2-1-1}\Copy{R2-1-1}{
	As an example of IV $Z$ satisfying these conditions, in the empirical application of this study, with $Y$ being students' academic performance and $D$ being their delinquency, we consider employing their non-best friends' parental characteristics.
	For another example, in the case of an airline market with two competitors, we may estimate the effect of introducing a direct flight ($D$) on the number of total passengers for each company ($Y$) by using cost variables as $Z$ (cf. \citealpmain{ciliberto2009market}).	
	As the last example, when we are interested in measuring how the choice of electoral campaign strategy ($D$) affects the candidate's vote share ($Y$) in a mayoral election, the amount spent in the previous election may be a good candidate for $Z$ (in a similar sense to \citealpmain{gerber1998estimating}).
	Then, once the MTE is obtained, we can estimate some interesting PRTEs; for example, what if there had been a cap on the election budget imposed by the government?
}

\bigskip

We provide below a series of identification results only for player 1 (the results for player 2 are symmetric and thus omitted). 
Define
\begin{align}\label{eq:m}
	m^{(d_1, d_2)}(x, p_1, p_2) 
	\coloneqq \bE [Y_1^{(d_1, d_2)} | X = x, V_1 = p_1, V_2 = p_2],
\end{align}
for a given $x \in \text{supp}[X]$ and $(p_1, p_2) \in [0, 1]^2$.
We call the function $m^{(d_1, d_2)}$ the marginal treatment response (MTR) function, as in \citetmain{mogstad2018using}.
From the viewpoint of player 1, the parameters of interest are the \textit{direct} MTE, \textit{indirect} MTE, and \textit{total} MTE:
\begin{align*}
	\text{MTE}_{\text{direct}}^{(d_2)}(x, p_1, p_2) &\coloneqq m^{(1, d_2)}(x, p_1, p_2) - m^{(0, d_2)}(x, p_1, p_2)   \;\; \text{ for $d_2 \in \{0,1\}$},\\
	\text{MTE}_{\text{indirect}}^{(d_1)}(x, p_1, p_2) &\coloneqq m^{(d_1, 1)}(x, p_1, p_2) - m^{(d_1, 0)}(x, p_1, p_2)    \;\; \text{ for $d_1 \in \{0,1\}$},\\
	\text{MTE}_{\text{total}}(x, p_1, p_2) &\coloneqq m^{(1, 1)}(x, p_1, p_2) - m^{(0, 0)}(x, p_1, p_2).
\end{align*}
Notably, these MTE parameters are informative about the variation of the treatment effects in terms of the pair of unobservables $V_1$ and $V_2$, in contrast to the conventional MTE framework.
The identification of the MTE parameters is straightforward once the MTR function for each $(d_1,d_2)$ is identified.

\subsubsection{Identification of the marginal treatment response functions}\label{subsec:condmean}

We first investigate the identification of the MTRs $m^{(1, 0)}(x, p_1, p_2)$ and $m^{(0, 1)}(x, p_1, p_2)$.
Since $D = (1, 0)$ and $D = (0, 1)$ are unique equilibria, the presence of multiple equilibria does not cause problems in these cases.
Indeed, we can achieve point identification of MTRs with a natural extension of the conventional LIV method without using the equilibrium selection rule in Assumption \ref{as:multiple}.\footnote{
	For this study, when it is stated that MTR $m^{(d_1,d_2)}(x, p_1, p_2)$ is ``(point) identified'', the value of the MTR at this specific $(p_1, p_2)$ is identified, but the whole functional form with respect to $(p_1, p_2)$ is not necessarily identified.
	When the latter is true, we say, for example, the MTR function $m^{(d_1,d_2)}(x, \cdot, \cdot)$ is ``fully'' (point) identified on $[0,1]^2$.
	The same expression applies to the identification of MTEs.
}

For the subsequent analysis, we introduce a standard overlap condition.

\begin{assumption}\label{as:overlap}
	\phantomsection\label{page:R1-1-1}\Copy{R1-1-1}{
		For all $(d_1, d_2) \in \{ 0, 1 \}^2$, $0 < \Pr[ D = (d_1, d_2) | W ] < 1$ a.s.
	}
\end{assumption}

\phantomsection\label{page:R1-1-2}\Copy{R1-1-2}{
	Define
	\begin{align*}
		\begin{array}{ll}
			\psi^{(1, 0)}(x, p_1, p_2) \coloneqq \bE [ I^{(1, 0)} Y_1| X = x, P_1^0 = p_1, P_2^1 = p_2],\\
			\psi^{(0, 1)}(x, p_1, p_2) \coloneqq \bE [ I^{(0, 1)} Y_1| X = x, P_1^1 = p_1, P_2^0 = p_2].
		\end{array}
	\end{align*}
	Under Assumption \ref{as:overlap}, these quantities are identified on $\mathcal{S}_x^{(1,0)} \coloneqq \text{supp}[P_1^0, P_2^1 | X = x]$ and $\mathcal{S}_x^{(0,1)} \coloneqq \text{supp}[P_1^1, P_2^0 | X = x]$, respectively.
	The next theorem demonstrates that the MTR functions $m^{(1,0)}(x, \cdot, \cdot)$ and $m^{(0,1)}(x, \cdot, \cdot)$ are also identified on these supports.
}
Hereinafter, we use the following differential operator notations: $\partial_{p_j} \coloneqq \partial / (\partial p_j)$ and $\partial_{p_1 p_2} \coloneqq \partial^2 / (\partial p_1 \partial p_2)$.

\begin{theorem}\label{thm:unique}
	Suppose that Assumptions \ref{as:complement}, \ref{as:copula}, \ref{as:infinity}, \ref{as:IV}, and \ref{as:overlap} hold.
	\phantomsection\label{page:R1-3-1}\Copy{R1-3-1}{
		Under these assumptions, $\mathbf{P} = (P_1^0, P_1^1, P_2^0, P_2^1)$ and $h(\cdot, \cdot | x)$ are identified from Theorem \ref{thm:gameiden}(i)--(ii).
	}
	If $m^{(1, 0)}(x, \cdot, \cdot)$, $m^{(0,1)}(x, \cdot, \cdot)$, and $h(\cdot, \cdot | x)$ are continuous, the MTRs are identified in the following way:
	\begin{align*}
		& m^{(1, 0)}(x, p_1, p_2) 
		= - \frac{\partial_{p_1 p_2} [\psi^{(1,0)}(x, p_1, p_2)]}{h(p_1, p_2 | x)} \quad \text{for} \;\; (p_1, p_2) \in \mathcal{S}_x^{(1,0)}, \\
		& m^{(0, 1)}(x, p_1, p_2)
		= - \frac{\partial_{p_1 p_2} [\psi^{(0,1)}(x, p_1, p_2)]}{h(p_1, p_2 | x)} \quad \text{for} \;\; (p_1, p_2) \in \mathcal{S}_x^{(0,1)}.
	\end{align*}
\end{theorem}

The main idea behind Theorem \ref{thm:unique} can be visually understood by Figure \ref{fig:nash-comp2}. 
We consider the pairs of players with $X = x$ and $\mathbf{P} = \mathbf{p}$, where $\mathbf{p} = (p_1^0, p_1^1, p_2^0, p_2^1)$.
Because $(V_1, V_2)$ are continuously distributed on $[0, 1]^2$, a certain proportion of the pairs are in the neighborhood of $(p_1^0, p_2^1)$, point A, at the margin of $D = (1, 0)$, $(0, 0)$, and $(1, 1)$.
Hence, for a small fluctuation in $(P_1^0, P_2^1)$ at this point, some pairs would switch their treatment status between $D = (1, 0)$ and $D = (0, 0)$ or $(1, 1)$.
As $\mathbf{P}$ is exogenous under Assumption \ref{as:IV}(i), such variation in the treatment status is generated exogenously, which in turn can be used to identify $m^{(1, 0)}(x, p_1^0, p_2^1)$.
Meanwhile, since fluctuations of $(P_1^1, P_2^1)$ around point B, that of $(P_1^0, P_2^0)$ around C, and of $(P_1^1, P_2^0)$ around D do not generate any shift of the treatment status from/to $D = (1, 0)$, they do not possess identification power for $m^{(1, 0)}(x, p_1^0, p_2^1)$.
The analogous argument applies to the identification of $m^{(0, 1)}(x, p_1^1, p_2^0)$ at point D.

\begin{figure}[!ht]
	\centering
	\fbox{\includegraphics[width = 8cm, bb = 0 0 720 540]{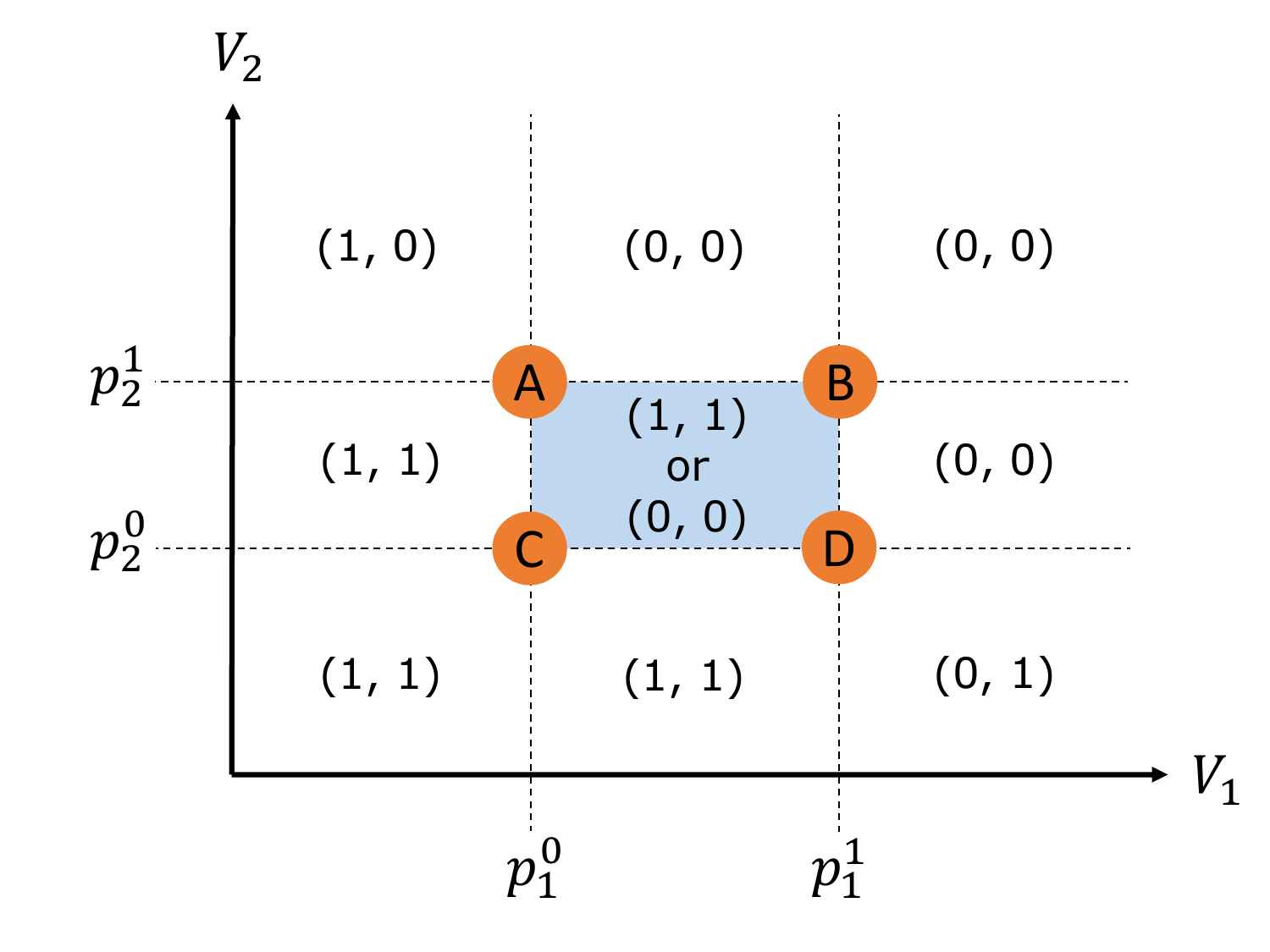}}
	\caption{Points at which the MTRs are identified.}
	\label{fig:nash-comp2} 
\end{figure}

Presenting a sketch of the proof for $m^{(1, 0)}(x, p_1, p_2)$ would be helpful.
It can be shown that
\begin{align} \label{eq:sketch}
\begin{split}
	\psi^{(1, 0)}(x, p_1, p_2)
	&= \bE [Y_1^{(1, 0)}| X = x, V_1 \le p_1, V_2 > p_2] \cdot \Pr[V_1 \le p_1, V_2 > p_2 | X = x]\\
	&= \int_{p_2}^1 \int_0^{p_1} m^{(1, 0)}(x, v_1, v_2) h(v_1, v_2 | x) \mathrm{d}v_1 \mathrm{d}v_2.
\end{split}
\end{align}
The first line follows from Assumption \ref{as:IV}(i) and the fact that the treatment status $D = (1, 0)$ is uniquely linked with the set of threshold crossing rules $V_1 \le P_1^0$ and $V_2 > P_2^1$.
Since $(P_1^0, P_2^1)$ are jointly continuously distributed under Assumption \ref{as:IV}(ii), we can take the partial derivatives of both sides with respect to $p_1$ and $p_2$, which yields the desired result.
Note that Assumptions \ref{as:copula} and \ref{as:infinity} are not used to show \eqref{eq:sketch}, but they are used to recover $\mathbf{P}$ and $h$.

In view of \eqref{eq:sketch}, we can see that Assumption \ref{as:IV}(ii) is in fact stronger than necessary for Theorem \ref{thm:unique}.
Indeed, the partial differentiation of \eqref{eq:sketch} with respect to $(p_1, p_2)$ is well-defined as long as only $(P_1^0, P_2^1)$ are continuously distributed conditional on $X = x$.

\bigskip

We move on to the identification of $m^{(0, 0)}(x, p_1, p_2)$ and $m^{(1, 1)}(x, p_1, p_2)$.
Due to the presence of multiple equilibria, the treatment statuses $D = (0, 0)$ and $D = (1, 1)$ are not uniquely determined by the threshold crossing rules regarding $V$ only.
Consequently, in general, we cannot achieve point identification of the MTR for $(d_1, d_2) \in  \{ (0, 0), (1, 1) \}$ with no assumptions on the equilibrium selection.
In the following, we demonstrate that Assumption \ref{as:multiple} is sufficient to overcome this issue, and enables us to utilize the same identification strategy as in Theorem \ref{thm:unique}.

\phantomsection\label{page:R1-1-3}\Copy{R1-1-3}{
	To state the next theorem, define
	\begin{align*}
		\psi^{(d_1, d_2)} (x, \mathbf{p})
		\coloneqq \bE [ I^{(d_1, d_2)} Y_1 | X = x, \mathbf{P} = \mathbf{p}],
	\end{align*}
	which is identified on $\text{supp}[\mathbf{P} | X = x]$.
}

\begin{theorem}\label{thm:multiple}
	Suppose that Assumptions \ref{as:complement} and \ref{as:copula}--\ref{as:overlap} hold.
	\phantomsection\label{page:R1-3-2}\Copy{R1-3-2}{
		Under these assumptions, $\mathbf{P} = (P_1^0, P_1^1, P_2^0, P_2^1)$, $h(\cdot, \cdot | x)$, and $\lambda_x$ are identified from Theorem \ref{thm:gameiden}(i)--(iii).
	}
	If $m^{(0, 0)}(x, \cdot, \cdot)$, $m^{(1, 1)}(x, \cdot, \cdot)$, and $h(\cdot, \cdot | x)$ are continuous and $0 < \lambda_x < 1$, the MTRs are identified in the following way: for $\mathbf{p} = (p_1^0, p_1^1, p_2^0, p_2^1) \in \text{supp}[\mathbf{P} | X = x]$,
	\begin{alignat*}{3}
		m^{(0, 0)}(x, p_1^0, p_2^0) 
		& = \frac{\partial_{p_1^0 p_2^0} [\psi^{(0, 0)}(x, \mathbf{p})]}{\lambda_x h(p_1^0, p_2^0 | x)}, 
		&\quad
		m^{(0, 0)}(x, p_1^1, p_2^1) 
		& = - \frac{\partial_{p_1^1 p_2^1} [\psi^{(0, 0)}(x, \mathbf{p})]}{(1 -\lambda_x) h(p_1^1, p_2^1|x)},\\ 
		m^{(0, 0)}(x, p_1^1, p_2^0) 
		& = \frac{\partial_{p_1^1 p_2^0} [\psi^{(0, 0)}(x, \mathbf{p})]}{(1 - \lambda_x) h(p_1^1, p_2^0|x)},
		&\quad
		m^{(0, 0)}(x, p_1^0, p_2^1) 
		& = \frac{\partial_{p_1^0 p_2^1} [\psi^{(0, 0)}(x, \mathbf{p})]}{(1 - \lambda_x) h(p_1^0, p_2^1|x)},
	\end{alignat*}
	and
	\begin{alignat*}{3}
		m^{(1, 1)}(x, p_1^0, p_2^0) 
		& = - \frac{\partial_{p_1^0 p_2^0} [\psi^{(1, 1)}(x, \mathbf{p})]}{\lambda_x h(p_1^0, p_2^0 | x)},
		&\quad
		m^{(1, 1)}(x, p_1^1, p_2^1) 
		& = \frac{\partial_{p_1^1 p_2^1} [\psi^{(1, 1)}(x, \mathbf{p})]}{(1 - \lambda_x) h(p_1^1, p_2^1|x)},\\
		m^{(1, 1)}(x, p_1^1, p_2^0) 
		& = \frac{\partial_{p_1^1 p_2^0} [\psi^{(1, 1)}(x, \mathbf{p})]}{\lambda_x h(p_1^1, p_2^0|x)},
		&\quad
		m^{(1, 1)}(x, p_1^0, p_2^1) 
		& = \frac{\partial_{p_1^0 p_2^1} [\psi^{(1, 1)}(x, \mathbf{p})]}{\lambda_x h(p_1^0, p_2^1|x)}.
	\end{alignat*}
\end{theorem}

The proof of Theorem \ref{thm:multiple} is straightforward from the following fact:
\begin{align}\label{eq:psi_00}
	\begin{split}
\psi^{(0, 0)} (x, \mathbf{p} )
	&= \int_{p_2^0}^1 \int_{p_1^0}^1 m^{(0, 0)}(x, v_1, v_2) h(v_1, v_2 | x) \mathrm{d}v_1 \mathrm{d}v_2 \\
	& \quad  - (1 - \lambda_x) \int_{p_2^0}^{p_2^1} \int_{p_1^0}^{p_1^1}  m^{(0, 0)}(x, v_1, v_2) h(v_1, v_2 | x) \mathrm{d}v_1 \mathrm{d}v_2.
	\end{split}
\end{align}
A similar representation can be obtained for $\psi^{(1, 1)} (x, \mathbf{p} )$.
Again, these identification results can be visually understood in Figure \ref{fig:nash-comp2}.
For example, a small fluctuation of $(P_1^0, P_2^0)$ around point C generates an exogenous shock to change the treatment status of a certain portion of pairs between $D = (0, 0)$ and $D = (1,1)$, provided that $\lambda_x > 0$.
This local exogenous variation in the treatment choice can be used to recover $m^{(0, 0)}(x, p_1^0, p_2^0)$.
Note that, if $\lambda_x = 0$, the identification at point C fails because $D = (0, 0)$ is never chosen at this point.
Similarly, if $\lambda_x < 1$, a variation in $(P_1^1, P_2^1)$ in the neighborhood of point B also causes an exogenous treatment shift between $D = (0, 0)$ and $D = (1, 1)$, which enables us to identify $m^{(0, 0)}(x, p_1^1, p_2^1)$.

Theorems \ref{thm:unique} and \ref{thm:multiple} differ in two important respects.
First, Assumption \ref{as:IV}(ii) is necessary for Theorem \ref{thm:multiple}.
This point should be obvious because the partial differentiation with respect to $(p_1^{d_1}, p_2^{d_2})$ with $(p_1^{d_1'}, p_2^{d_2'})$ being fixed ($d_1 \neq d_1'$, $d_2 \neq d_2'$) is not well-defined without Assumption \ref{as:IV}(ii).
Second, recall that Theorem \ref{thm:unique} requires only two out of $(P_1^0, P_1^1, P_2^0, P_2^1)$ as the conditioning variables because the point at which identification is achieved is well characterized as the upper-left or lower-right corner in the space of $V$.
By contrast, to characterize the multiple equilibria region, we should fix the values of all four items $(P_1^0, P_1^1, P_2^0, P_2^1)$, of which two are marginal probabilities of each player's counterfactual treatment.
This is the main reason why we need to identify all of the payoff functions, the equilibrium selection probability, and the correlation parameter in the first-stage game (as kindly pointed out by a referee).

\phantomsection\label{page:R1-1-4}\Copy{R1-1-4}{
	The results in Theorem \ref{thm:multiple} enable us to recover the MTR $m^{(0, 0)}(x, \cdot, \cdot)$ on the union of the supports of $(P_1^{d_1}, P_2^{d_2})$ for $(d_1, d_2) \in \{ 0, 1 \}^2$: 
	\begin{align*}
		\overline{\mathcal{S}}_x 
		\coloneqq \bigcup_{(d_1, d_2) \in \{ 0, 1 \}^2} \text{supp}[P_1^{d_1}, P_2^{d_2} | X = x].
	\end{align*}
	To observe this, note that, for arbitrary $(p_1^0, p_2^0) \in \text{supp}[P_1^0, P_2^0 | X = x]$, some $p_1^1$ and $p_2^1$ exist such that $(p_1^0, p_1^1, p_2^0, p_2^1) \in \text{supp}[\mathbf{P} | X = x]$.
	Therefore, $m^{(0, 0)}(x, p_1^0, p_2^0)$ is identified for any $(p_1^0, p_2^0) \in \text{supp}[P_1^0, P_2^0 | X = x]$ as in Theorem \ref{thm:multiple}.
	The same argument holds for the other cases.
	Consequently, $m^{(0, 0)}(x, p_1, p_2)$ is identifiable for any $(p_1, p_2) \in \overline{\mathcal{S}}_x$.
	Similarly, $m^{(1, 1)}(x, \cdot, \cdot)$ is identified on $\overline{\mathcal{S}}_x$.
}

\begin{remark}[Identifiable regions] \label{remark:region}
	\phantomsection\label{page:R1-3-3}\Copy{R1-3-3}{
		The identification regions in Theorems \ref{thm:unique} and \ref{thm:multiple}, namely, $\mathcal{S}_x^{(1,0)}$, $\mathcal{S}_x^{(0,1)}$, and $\overline{\mathcal{S}}_x$, are characterized indirectly through the parameters in the first-stage treatment decision game as $P_j^d = F_{\varepsilon_j | X = x}(\pi_j^d(W_j))$, not directly from the standard propensity score, say $\text{PS}_j \coloneqq \bE[D_j | W_j]$.
		In the canonical single agent setting, the MTR parameter for each agent $j$ can be identified on the conditional support of $\text{PS}_j$ (cf. \citealpmain{heckman1999local, heckman2005structural}).
		Even in the presence of treatment spillovers within each pair of players, if there is no strategic interaction in the treatment decision, it is straightforward to show that the MTR parameter can be identified on the joint support of $(\text{PS}_1, \text{PS}_2)$ (see Appendix \ref{subsec:nointeraction} for more details).
		Once a strategic interaction comes into play, such simple identification arguments no longer hold.
	}
\end{remark}

\begin{remark}[Role of the large support condition]\label{remark:large}
	\phantomsection\label{page:R1-5-1}\Copy{R1-5-1}{
		In Theorems \ref{thm:unique} and \ref{thm:multiple}, the large support condition in Assumption \ref{as:infinity}(i) is used only for recovering the parameters in the treatment decision game.
		If the first-stage parameters can be identified under alternative assumptions as in Remark \ref{remark:infinity}, the identification of the MTR at an interior point of the support of $(P_1^{d_1}, P_2^{d_2})$ can be established without the large support condition.
		However, note that, if the goal is to identify other treatment parameters, such as the LATE and PRTE, we generally need the large support condition to recover the MTR fully on $[0,1]^2$.
		See Appendix \ref{sec:severalparameter} for the identification of these parameters.
	}
\end{remark}

\begin{remark}[Imposing a parametric restriction on MTR]\label{remark:brinch}
	\phantomsection\label{page:R2-3}\Copy{R2-3}{
		Theorems \ref{thm:unique} and \ref{thm:multiple} are uninformative about the value of the MTRs for $(p_1, p_2)$ outside the support of $(P_1^{d_1}, P_2^{d_2})$.
		This is due to the nonparametric nature of the identification strategy.
		Alternatively, if we impose an explicit functional form on $m^{(d_1,d_2)}(x, \cdot, \cdot)$, such as polynomials, there is a possibility to identify $m^{(d_1,d_2)}(x, \cdot, \cdot)$ by interpolating from the other identified MTR values.
		This approach is analogous to the identification strategy discussed in \citetmain{brinch2017beyond} in the conventional MTE framework.
		If we adopt such an approach, even when the equilibrium selection probability $\lambda_x$ also depends on $Z = z$, we would be able to identify $m^{(0, 0)}(x, \cdot, \cdot)$ by directly solving the integral equation \eqref{eq:psi_00}.
		The same is true for $m^{(1, 1)}(x, \cdot, \cdot)$.
		In this sense, for identification of $m^{(0, 0)}(x, \cdot, \cdot)$ and $m^{(1, 1)}(x, \cdot, \cdot)$, there is a trade-off between their nonparametric identifiability and the exclusion restriction on the equilibrium selection.
	}
\end{remark}

\begin{remark}[\citemain{lee2018identifying}]\label{remark:leesalanie}
	A major distinction between \citemain{lee2018identifying} and our study is that they allow the strategic effect to depend on unobservable factors by assuming that the error term $V_j$ has the form of $V_j = V_j^0 + D_{-j} (V_j^1 - V_j^0)$, where $V_j^0$ denotes the unobserved payoff determinant when only player $j$ is treated, and $V_j^1 - V_j^0$ corresponds to the unobservable strategic interaction effect.
	To deal with the multiple equilibria, similarly to us, they assume that one of the equilibria is selected if an unobserved variable $\bar V$ does not exceed a threshold $\bar P$.
	All possible treatment assignments are then characterized uniquely by the set of threshold crossing rules determined by $\mathbf{V} = (V_1^0, V_1^1, V_2^0, V_2^1, \bar V)$ and the threshold variables $\mathbf{P} = (P_1^0, P_1^1, P_2^0, P_2^1, \bar P)$.
	Then, \citemain{lee2018identifying} showed that
	\begin{align*}
		\bE [Y_1^{(d_1, d_2)} | X=x, \mathbf{V} = \mathbf{p}]
		= \frac{\partial_{\mathbf{p}} \bE [I^{(d_1, d_2)} Y_1 | X = x , \mathbf{P} = \mathbf{p}]}{\partial_{\mathbf{p}} \Pr[D = (d_1, d_2) | X=x, \mathbf{P} = \mathbf{p}]},
	\end{align*}
	where $\partial_{\mathbf{p}} \coloneqq \partial^5 / (\partial p_1^0 \partial p_1^1 \partial p_2^0 \partial p_2^1 \partial \bar p)$.
	This MTR parameter includes richer information related to unobserved heterogeneity than ours in that it can reveal unobserved heterogeneity with respect to the unobservable interaction effect, preference on equilibrium selection, and payoff determinants.
	To estimate this MTR, $\mathbf{P}$ must be jointly continuously distributed, which is more demanding than Assumption \ref{as:IV}(ii).
	In addition, the estimator resulting from this identification result would suffer from the curse of dimensionality, whereas our estimator does not at the cost of less-flexible interaction structure.
	Indeed, \citemain{lee2018identifying} is concerned mainly with the identification of MTR and not its estimation, whereas we focus on both.
\end{remark}

\subsubsection{Identification of the marginal treatment effects} \label{subsec:MTE}

\phantomsection\label{page:R1-2-1}\Copy{R1-2-1}{
	Given Theorems \ref{thm:unique} and \ref{thm:multiple}, identification of MTE is straightforward. 
	For example, $\text{MTE}_{\text{direct}}^{(0)}(x, \cdot, \cdot) = m^{(1,0)}(x, \cdot, \cdot) - m^{(0, 0)}(x, \cdot, \cdot)$ is identifiable on the intersection of $\mathcal{S}_x^{(1,0)}$ and $\overline{\mathcal{S}}_x$ because $m^{(1, 0)}(x, \cdot, \cdot)$ and $m^{(0, 0)}(x, \cdot, \cdot)$ are identified on the respective regions, meaning that the identifiable region of $\text{MTE}_{\text{direct}}^{(0)}(x, \cdot, \cdot)$ reduces to $\mathcal{S}_x^{(1,0)}$ as $\mathcal{S}_x^{(1,0)} \subseteq \overline{\mathcal{S}}_x$.
	As another example, the identifiable region of $\text{MTE}_{\text{total}}(x, \cdot, \cdot) = m^{(1, 1)}(x, \cdot, \cdot) - m^{(0, 0)}(x, \cdot, \cdot)$ corresponds to $\overline{\mathcal{S}}_x$, on which both MTR functions are identified.
	We can characterize the identifiable regions of the other MTE parameters analogously.
}

\phantomsection\label{page:R1-1-5}\Copy{R1-1-5}{
	As a result of Theorem \ref{thm:multiple}, $m^{(d_1, d_2)}(x, p_1, p_2)$ for $(d_1, d_2) \in \{ (0, 0), (1, 1) \}$ can be ``over-identified'' if $(p_1, p_2) \in \underline{\mathcal{S}}_x$, where
	\begin{align*}
		\underline{\mathcal{S}}_x 
		\coloneqq \bigcap_{(d_1, d_2) \in \{ 0, 1 \}^2} \text{supp}[P_1^{d_1}, P_2^{d_2} | X = x].
	\end{align*}
	Thus, the MTE parameters are also over-identifiable if $\underline{\mathcal{S}}_x$ is non-empty, and this over-identification result can be used to improve the estimation efficiency (see Subsection \ref{subsec:overidentification:est}).
}

\begin{remark}[Partial identification of MTE without equilibrium selection]\label{remark:partial}
	As shown above, the MTR functions for $(d_1, d_2) \in \{ (1, 0), (0, 1) \}$ can be identified without using the equilibrium selection assumption.
	When no assumptions are imposed on the equilibrium selection, the equilibrium selection probability can take any value on $[0,1]$.
	Thus, as can be seen in Theorem \ref{thm:multiple}, the identified sets of the MTRs for $(d_1, d_2) \in \{ (0, 0), (1, 1) \}$ are typically unbounded.
	However, because the equilibrium selection probability is non-negative, we can still identify the signs of these MTR functions.
	Therefore, for example, if $m^{(1,0)}(x, p_1, p_2)$ has a positive value and the sign of $m^{(0,0)}(x, p_1, p_2)$ is non-positive, we can obtain an informative lower bound of $\text{MTE}_{\text{direct}}^{(0)}(x, p_1, p_2)$ based on the inequality $\text{MTE}_{\text{direct}}^{(0)}(x, p_1, p_2) \ge m^{(1,0)}(x, p_1, p_2)$.
	A more promising approach would be to introduce some shape restrictions, as in \citemain{balat2020multiple}, to derive informative bounds on $m^{(0,0)}(x, p_1, p_2)$.
\end{remark}

\section{Estimation and Asymptotics} \label{sec:estimation}

In this section, we propose a two-step semiparametric procedure for estimating the MTE parameters given the data $\{\{(Y_{ji}, D_{ji}, W_{ji})\}_{j=1}^2\}_{i=1}^n$ are observed.
In the subsequent analysis, we consider the following parametric treatment decision model:
\begin{assumption}
	\hfil
	\label{as:gamemodel}
	\begin{enumerate}[(i)]
		\item \phantomsection\label{page:R2-4-1}\Copy{R2-4-1}{
			For each $j = 1, 2$, $D_{ji} = \mathbf{1}\{ W_{ji}^\top \gamma_0 + D_{-j, i} \cdot \Delta(W_{ji}^\top \gamma_1) \ge \varepsilon_{ji} \}$, where $\gamma = (\gamma_0^\top, \gamma_1^\top)^\top \in \mathbb{R}^{2\mathrm{dim}(W)}$ is a vector of unknown parameters such that $\gamma_0 \neq \gamma_1$, and $\Delta$ is a known positive function.
		}
		\item $(\varepsilon_1, \varepsilon_2)$ are independent of $W$ and continuously distributed with known strictly increasing marginal CDFs $F_{\varepsilon_1}$ and $F_{\varepsilon_2}$, respectively.
		Their joint distribution is given by $\Pr[\varepsilon_1 \le a_1, \varepsilon_2 \le a_2] = H_\rho(F_{\varepsilon_1}(a_1), F_{\varepsilon_2}(a_2))$, where $\rho \in (\underline{c}, \bar c)$ is an unknown parameter.
		The copula $H_\rho$ has a density function $h_\rho$.
		\item Assumption \ref{as:multiple} holds with $\lambda_x = \Lambda(\tilde x^\top \lambda)$, where $\Lambda$ is a known strictly increasing CDF, $\tilde X$ is a linearly independent subset of $X$, and $\lambda \in \mathbb{R}^{\mathrm{dim}(\tilde X)}$ is a vector of unknown parameters.
	\end{enumerate}
\end{assumption}

In Assumption \ref{as:gamemodel}(i), we assume $\Delta > 0$ to ensure strategic complementarity. 
We also assume that the coefficients $\gamma$ are common to both players.
\phantomsection\label{page:R2-4-2}\Copy{R2-4-2}{
	Note that the assumption $\gamma_0 \neq \gamma_1$ is necessary to identify the interaction effect.
	If this does not hold, the value of $P_j^1$ is automatically determined once $P_j^0$ is fixed, violating Assumption \ref{as:IV}(ii).
}
Assumption \ref{as:gamemodel}(ii) strengthens Assumptions \ref{as:complement}(ii) and \ref{as:copula}(i) by requiring a known marginal CDF of $\varepsilon_j$ and full independence between $\varepsilon$ and $W$.
Assuming a known distribution for the unobservable is common in the literature, as nonparametrically identifying the payoff function and the error distribution simultaneously is generally impossible (cf. Section 4.1 of \citealpmain{bajari2010identification}).
Furthermore, as shown by \citetmain{khan2018information}, when the marginal CDFs of the errors are unknown, the interaction effect cannot be estimated at the parametric rate in general.
Estimating the game parameters at the parametric rate is important for the estimation of the MTE parameters.
Assumption \ref{as:gamemodel}(iii) introduces a parametric specification on the equilibrium selection probability.
Identification of this parametric treatment decision model is discussed in detail in Appendix \ref{subsec:ident_wo_inf}.

For the potential outcome, we assume the following linear model.

\begin{assumption}
	\hfil
	\label{as:linear}
	\begin{enumerate}[(i)]
		\item \phantomsection\label{page:AE-9-2}\Copy{AE-9-2}{
			For each $j = 1, 2$ and $(d_j, d_{-j}) \in \{ 0, 1 \}^2$, $Y_{ji}^{(d_j, d_{-j})} = X_{ji}^\top \beta_j^{(d_j, d_{-j})} + U_{ji}^{(d_j, d_{-j})}$, where $\beta_j^{(d_j, d_{-j})} \in \mathbb{R}^{\mathrm{dim}(X)}$ is a vector of unknown parameters and $U_{ji}^{(d_j, d_{-j})} \in \mathbb{R}$ is a scalar error term.
		}
		\item $(\varepsilon, U^{(d_1,d_2)})$ are independent of $W$ for $(d_1, d_2) \in \{0, 1\}^2$.
	\end{enumerate}
\end{assumption}

\subsection{Two-step estimation} \label{subsec:procedure}

\paragraph{First step: Estimation of the treatment decision game.}

In accordance with \eqref{eq:definition-P}, we write $V_{ji} = F_{\varepsilon_j}(\varepsilon_{ji})$, $P_{ji}^0(\gamma) = F_{\varepsilon_j}(W_{ji}^\top \gamma_0)$, and $P_{ji}^1(\gamma) = F_{\varepsilon_j}(W_{ji}^\top \gamma_0 + \Delta(W_{ji}^\top \gamma_1))$.
Further, let $\theta^* = (\gamma^{* \top}, \lambda^{* \top}, \rho^*)^\top$ be the true value of $\theta = (\gamma^\top, \lambda^\top, \rho)^\top$.
For a given $\theta$, the conditional probability that the $i$-th pair of players is in the multiple equilibria region is given by 
\begin{align*}
	\mathcal{L}_{\text{mul},i}(\theta) \coloneqq H_\rho(P_{1i}^1(\gamma), P_{2i}^1(\gamma)) - H_\rho(P_{1i}^1(\gamma), P_{2i}^0(\gamma)) - H_\rho(P_{1i}^0(\gamma), P_{2i}^1(\gamma)) + H_\rho(P_{1i}^0(\gamma), P_{2i}^0(\gamma)).
\end{align*}
Then, letting $\mathcal{L}_i^{(d_1, d_2)}(\theta)$ be the probability that they choose an action $D_i = (d_1, d_2)$, we have
\begin{equation*}
\begin{array}{ll}
	\mathcal{L}_i^{(1, 0)}(\theta) = P_{1i}^0(\gamma) - H_\rho(P_{1i}^0(\gamma), P_{2i}^1(\gamma)), &\quad \mathcal{L}_i^{(1, 1)}(\theta) = H_\rho(P_{1i}^1(\gamma),P_{2i}^1(\gamma)) - \Lambda(\tilde X_i^\top \lambda) \cdot \mathcal{L}_{\text{mul},i}(\theta),\\
	\mathcal{L}_i^{(0, 1)}(\theta) = P_{2i}^0(\gamma) - H_\rho(P_{1i}^1(\gamma), P_{2i}^0(\gamma)), &\quad \mathcal{L}_i^{(0, 0)}(\theta) = 1 - \sum_{(d_1,d_2) \neq (0,0)}\mathcal{L}_i^{(d_1, d_2)}(\theta).
\end{array}
\end{equation*}
The ML estimator $\hat \theta$ of $\theta^*$ is obtained by maximizing the log-likelihood $\sum_{i}\sum_{d_1, d_2} I_i^{(d_1, d_2)} \log \mathcal{L}_i^{(d_1, d_2)}(\theta)$ with respect to $\theta$.

Given this $\hat \theta$, we can estimate $P_{ji}^0 = P_{ji}^0(\gamma^*)$ and $P_{ji}^1 = P_{ji}^1(\gamma^*)$ by $\hat P_{ji}^0 = P_{ji}^0(\hat \gamma)$ and $\hat P_{ji}^1 = P_{ji}^1(\hat \gamma)$, respectively.
We denote the estimator of $\mathbf{P} = (P_1^0, P_1^1, P_2^0, P_2^1)$ by $\hat{\mathbf{P}} = (\hat P_1^0, \hat P_1^1, \hat P_2^0, \hat P_2^1)$.
Similarly, the true equilibrium selection probability $\lambda_x^* = \Lambda(\tilde x^\top \lambda^*)$ and the true joint CDF and density of $(V_1, V_2)$, which we denote by $H = H_{\rho^*}$ and $h = h_{\rho^*}$, can be respectively estimated by $\hat \lambda_x = \Lambda(\tilde x^\top \hat \lambda)$, $\hat H = H_{ \hat \rho}$, and $\hat h = h_{ \hat \rho}$.
Moreover, we can estimate the true choice probability $\mathcal{L}_i^{(d_1, d_2)} = \mathcal{L}_i^{(d_1, d_2)}(\theta^*)$ by $\hat{\mathcal{L}}_i^{(d_1, d_2)} = \mathcal{L}_i^{(d_1, d_2)}(\hat \theta)$.

\begin{remark}[Constant equilibrium selection probability] \label{remark:constant}
	\phantomsection\label{page:R2-2-2}\Copy{R2-2-2}{
		Since the equilibrium selection probability is estimated using only the observations in multiple equilibria, the estimation of $\lambda_x$ is very challenging in practice when $\text{dim}(\tilde X)$ is not small or the strategic effects are so weak that only a small number of observations are in the multiple equilibria region.
		Indeed, when we tried to estimate the function $\lambda_x$ in our empirical application, we could not obtain meaningful estimate.
		To circumvent this issue in practical situations with moderate sample size, we suggest simplifying Assumption \ref{as:gamemodel}(iii) such that $\lambda_x = \lambda$ for a scalar parameter $\lambda \in [0, 1]$.
	}
\end{remark}

\paragraph{Second step: Estimation of the MTE.}

As discussed in Subsection \ref{subsec:second}, a variety of treatment effect parameters can be identified.
We here specifically discuss the estimation of $\text{MTE}_{\text{direct}}^{(0)}(x, p_1^0, p_2^1)$ at point A in Figure \ref{fig:nash-comp2} (the estimation at the other points is analogous).
The estimation of the total MTE is relegated to Appendix \ref{subsec:totalMTE}.

Recall that $\text{MTE}_{\text{direct}}^{(0)}(x, p_1^0, p_2^1) = m^{(1, 0)}(x, p_1^0, p_2^1) - m^{(0, 0)}(x, p_1^0, p_2^1)$.
First, we discuss how to estimate $m^{(1, 0)}(x, p_1^0, p_2^1)$.
Assumption \ref{as:linear} implies that
\begin{align*}
	m^{(1, 0)}(x, p_1^0, p_2^1)
	= x_1^\top \beta_1^{(1, 0)} + \bE [U_1^{(1, 0)} | V_1 = p_1^0, V_2 = p_2^1] 
	= x_1^\top \beta_1^{(1, 0)} - \frac{\partial_{p_1^0 p_2^1} [g^{(1, 0)}(p_1^0, p_2^1)]}{h(p_1^0, p_2^1)},
\end{align*}
where $g^{(1, 0)}(p_1^0, p_2^1) \coloneqq \int_{0}^{p_1^0} \int_{p_2^1}^1 \bE [U_{1}^{(1, 0)}| V_1= v_1, V_2= v_2] h(v_1, v_2) \mathrm{d}v_1 \mathrm{d}v_2$.
Further, observe that
\begin{align*}
	\bE [ I^{(1, 0)} Y_1 | I^{(1, 0)}, X, P_1^0, P_2^1 ] 
	& = I^{(1, 0)} X_{1}^\top \beta_1^{(1, 0)} + I^{(1, 0)} \bE [ U_{1}^{(1, 0)}| V_1\le P_1^0, V_2> P_2^1].
\end{align*}
A similar argument to Theorem \ref{thm:unique} gives $\bE [ U_{1}^{(1, 0)}| V_1\le p_1^0, V_2> p_2^1] = g^{(1, 0)}(p_1^0, p_2^1) / \mathcal{L}^{(1,0)}(p_1^0, p_2^1)$, where $\mathcal{L}^{(1, 0)}(p_1^0, p_2^1) = p_1^0 - H(p_1^0, p_2^1)$.
Then, we have the following partially linear regression model:
\begin{align}\label{eq:plm1}
	I^{(1, 0)} Y_1 = I^{(1, 0)} X_{1}^\top \beta_1^{(1, 0)} + T^{(1, 0)} g^{(1, 0)}(P_1^0, P_2^1) + e^{(1, 0)},
\end{align}
where $T^{(1, 0)} \coloneqq I^{(1, 0)} / \mathcal{L}^{(1, 0)}$, and $\bE [e^{(1, 0)} | I^{(1, 0)}, X, P_1^0, P_2^1] = 0$ by construction.
Here, note that since $\mathcal{L}^{(1, 0)}$ may take an arbitrarily small value close to zero, the weight term $T^{(1, 0)}$ can be extremely large for some observations.
To deal with this issue, we exclude such observations from the analysis.
Specifically, we introduce a non-negative smoothed indicator function $\tau_\varpi(\mathcal{L}^{(1, 0)})$  such that (i) $\tau_\varpi:[0, 1] \to [0, 1]$ is a non-decreasing function and $\tau_\varpi(a) = 0$ if and only if $a < \varpi$ for a small constant $\varpi > 0$, and (ii) $\tau_\varpi$ is continuously differentiable with a bounded derivative.
Multiplying both sides of \eqref{eq:plm1} by $\tau_\varpi(\mathcal{L}^{(1, 0)})$ gives
\begin{align}\label{eq:plm2}
	\tilde I^{(1, 0)} Y_1 = \tilde I^{(1, 0)} X_{1}^\top \beta_1^{(1, 0)} + \tilde T^{(1, 0)} g^{(1, 0)}(P_1^0, P_2^1) + \tilde e^{(1, 0)},
\end{align}
where $\tilde I^{(1, 0)} \coloneqq \tau_\varpi(\mathcal{L}^{(1, 0)}) I^{(1, 0)}$, $\tilde T^{(1, 0)} \coloneqq \tilde I^{(1, 0)} / \mathcal{L}^{(1,0)}$, and $\tilde e^{(1, 0)} \coloneqq \tau_\varpi(\mathcal{L}^{(1, 0)}) e^{(1, 0)}$.
Since $\mathcal{L}^{(1, 0)}$ is a function of $P_1^0$ and $P_2^1$, $\bE [\tilde e^{(1, 0)} | I^{(1, 0)}, X, P_1^0, P_2^1] = 0$ still holds.

Based on \eqref{eq:plm2}, we consider estimating $\beta_1^{(1, 0)}$ and $g^{(1,0)}$ using the series (sieve) method.
Let $b_K(\cdot, \cdot) = (b_{1K}(\cdot, \cdot), \dots, b_{KK}(\cdot, \cdot))^\top$ be a $K \times 1$ vector of bivariate basis functions.
We assume that $g^{(1, 0)}$ can be well-approximated by $g^{(1, 0)}(\cdot, \cdot) \approx b_K(\cdot, \cdot)^\top \alpha^{(1,0)}$ for some coefficient vector $\alpha^{(1, 0)}$ with sufficiently large $K$.
Then, letting $\hat I_i^{(1, 0)} \coloneqq \tau_\varpi(\hat{\mathcal{L}}_i^{(1, 0)}) I_i^{(1, 0)}$ and $\hat T_i^{(1, 0)} \coloneqq \hat I_i^{(1, 0)} / \hat{\mathcal{L}}_i^{(1,0)}$, we have
\begin{align*}
	\hat I_i^{(1, 0)} Y_{1i} \approx \hat I_i^{(1, 0)}  X_{1i}^\top \beta_1^{(1, 0)} + \hat T_i^{(1, 0)} b_K(\hat P_{1i}^0, \hat P_{2i}^1)^\top \alpha^{(1,0)} + \tilde e_i^{(1, 0)}.
\end{align*}
Let $(\hat \beta_1^{(1,0)}, \hat \alpha^{(1,0)})$ be the least squares (LS) estimator of $(\beta_1^{(1,0)}, \alpha^{(1,0)})$ obtained by regressing $\hat I_i^{(1, 0)} Y_{1i}$ on $(\hat I_i^{(1, 0)}  X_{1i}, \hat T_i^{(1, 0)} b_K(\hat P_{1i}^0, \hat P_{2i}^1) )$.
Then, the estimator of $g^{(1, 0)}(p_1^0, p_2^1)$ can be obtained by $\hat g^{(1, 0)}(p_1^0, p_2^1) \coloneqq b_K(p_1^0, p_2^1)^\top \hat \alpha^{(1, 0)}$, and we can estimate $\bE [U_1^{(1, 0)} | V_1 = p_1^0, V_2 = p_2^1]$ by
\begin{align*}
\begin{split}
	\hat \bE_n[U_1^{(1, 0)} | V_1 = p_1^0, V_2 = p_2^1]
	& \coloneqq - \frac{\partial_{p_1^0 p_2^1} [\hat g^{(1, 0)}(p_1^0, p_2^1)]}{\hat h(p_1^0, p_2^1)} = - \frac{\ddot{b}_K(p_1^0, p_2^1)^\top \hat \alpha^{(1, 0)}}{\hat h(p_1^0, p_2^1)} ,
\end{split}
\end{align*}
where $\ddot{b}_K(p_1, p_2) \coloneqq \partial_{p_1 p_2} [ b_K(p_1, p_2) ]$.
Finally, we can estimate $m^{(1, 0)}(x, p_1^0, p_2^1)$ by
\begin{align}\label{eq:m10}
	\hat m^{(1, 0)}(x, p_1^0, p_2^1) \coloneqq x_1^\top \hat \beta_1^{(1, 0)} + \hat \bE_n[U_1^{(1, 0)} | V_1 = p_1^0, V_2 = p_2^1].
\end{align}

Next, we describe the estimation of $m^{(0, 0)}(x, p_1^0, p_2^1) = x_1^\top \beta_1^{(0, 0)} + \bE [U_1^{(0, 0)} | V_1 = p_1^0, V_2 = p_2^1]$.
First, observe that
\begin{align*}
	\bE [ I^{(0, 0)} Y_1| I^{(0, 0)}, X, \mathbf{P}] 
	& = I^{(0, 0)} X_{1}^\top \beta_1^{(0, 0)} + I^{(0, 0)} \bE [U_{1}^{(0, 0)} | I^{(0, 0)} = 1, X, \mathbf{P}].
\end{align*}
The same argument as in Theorem \ref{thm:multiple} gives
\begin{align}\label{eq:additive}
\begin{split}
	\bE [U_{1}^{(0, 0)}| I^{(0, 0)} = 1, X, \mathbf{P}]  
	& = \frac{\int_{P_1^0}^{1} \int_{P_2^0}^1 \bE [U_{1}^{(0, 0)}| V_1= v_1, V_2= v_2] h(v_1, v_2) \mathrm{d}v_1 \mathrm{d}v_2}{\mathcal{L}^{(0, 0)}(\mathbf{P})} \\
	& \quad - \frac{(1 - \lambda_X^*) \int_{P_1^0}^{P_1^1} \int_{P_2^0}^{P_2^1} \bE [U_{1}^{(0, 0)}| V_1= v_1, V_2= v_2] h(v_1, v_2) \mathrm{d}v_1 \mathrm{d}v_2}{\mathcal{L}^{(0, 0)}(\mathbf{P})} \\
	& \hspace{-128pt} = \frac{\lambda_X^* g_1^{(0, 0)}(P_1^0, P_2^0) + (1 - \lambda_X^*) g_2^{(0, 0)}(P_1^1, P_2^0) + (1 - \lambda_X^*) g_3^{(0, 0)}(P_1^0, P_2^1) + (1 - \lambda_X^*) g_4^{(0, 0)}(P_1^1, P_2^1)}{\mathcal{L}^{(0, 0)}(\mathbf{P})},
\end{split}
\end{align}
where $\mathcal{L}^{(0, 0)}(\mathbf{P})$ is the conditional probability of $D = (0,0)$ given $\mathbf{P}$, and $g_l^{(0,0)}$'s, $l = 1, \ldots, 4$, are bivariate real-valued functions.\footnote{\label{foot:gfunc}
	Specifically, $g_1^{(0,0)}(p_1^0, p_2^0) = \int_{p_1^0}^1 \int_{p_2^0}^1 \bE [U_{1}^{(0, 0)}| V_1= v_1, V_2= v_2] h(v_1, v_2) \mathrm{d}v_1 \mathrm{d}v_2$, and $g_2^{(0,0)}(p_1^1, p_2^0)$ and $g_3^{(0,0)}(p_1^0, p_2^1)$ are obtained by replacing $(p_1^0, p_2^0)$ in the right-hand side with $(p_1^1, p_2^0)$ and $(p_1^0, p_2^1)$, respectively.
	Moreover, $g_4^{(0,0)}(p_1^1, p_2^1) = -\int_{p_1^1}^1 \int_{p_2^1}^1 \bE [U_{1}^{(0, 0)}| V_1= v_1, V_2= v_2] h(v_1, v_2) \mathrm{d}v_1 \mathrm{d}v_2$.
}
Hence, similarly to \eqref{eq:plm2}, we obtain the following regression model:
\begin{align}\label{eq:plm00}
\begin{split}
	\tilde I^{(0, 0)} Y_1 = \tilde I^{(0, 0)} X_{1}^\top \beta_1^{(0, 0)} 
	& + \tilde T^{(0, 0)} \lambda_X^* g_1^{(0, 0)}(P_1^0, P_2^0) + \tilde T^{(0, 0)} (1 - \lambda_X^*) g_2^{(0, 0)}(P_1^1, P_2^0) \\
	& + \tilde T^{(0, 0)} (1 - \lambda_X^*) g_3^{(0, 0)}(P_1^0, P_2^1) + \tilde T^{(0, 0)} (1 - \lambda_X^*) g_4^{(0, 0)}(P_1^1, P_2^1) + \tilde e^{(0, 0)},
\end{split}
\end{align}
where $\tilde I^{(0, 0)} \coloneqq \tau_\varpi(\mathcal{L}^{(0, 0)}) I^{(0, 0)}$, $\tilde T^{(0, 0)} \coloneqq \tilde I^{(0, 0)} / \mathcal{L}^{(0,0)}$, and $\bE [\tilde e^{(0, 0)} | I^{(0, 0)}, X, \mathbf{P}] = 0$.
Assuming again that each $g_l^{(0,0)}$, $l = 1, \ldots , 4$, can be approximated by $g_l^{(0,0)}(\cdot, \cdot) \approx b_K(\cdot, \cdot)^\top \alpha_l^{(0,0)}$ and replacing $\tilde I^{(0,0)}$, $\tilde T^{(0,0)}$, $\lambda_X^*$, and $\mathbf{P}$ with their estimators $\hat I^{(0,0)} \coloneqq \tau_\varpi(\hat{\mathcal{L}}^{(0, 0)}) I^{(0, 0)}$, $\hat T^{(0, 0)} \coloneqq \hat I^{(0, 0)} / \hat{\mathcal{L}}^{(0,0)}$, $\hat \lambda_X$, and $\hat{\mathbf{P}}$, respectively, $\beta_1^{(0, 0)}$ and $\alpha_l^{(0,0)}$'s can be estimated by LS regression.\footnote{
	It is possible to use different orders of basis terms to approximate each component of the functions $g_l^{(0,0)}$'s for $l = 1, \ldots, 4$, but we use the same order $K$ for all, for simplicity.
	Also note that the ``locations'' of the functions $g_l^{(0,0)}$'s are not identified without further restrictions.
	To simplify our presentation, we postulate that an appropriate location normalization is made implicitly.
}
Let $\hat \beta_1^{(0, 0)}$ and $\hat \alpha_l^{(0,0)}$ be the resulting LS estimators.
Then, each $g_l^{(0,0)}(p_1, p_2)$ can be estimated by $\hat g_l^{(0, 0)}(p_1, p_2) \coloneqq b_K(p_1, p_2)^\top \hat \alpha_l^{(0, 0)}$.
Moreover, \eqref{eq:additive} implies that 
\begin{align*}
	\bE [U_1^{(0, 0)} | V_1 = p_1^0, V_2 = p_2^1] 
	=\frac{\partial_{p_1^0 p_2^1}[g_3^{(0, 0)}(p_1^0, p_2^1)]}{h(p_1^0, p_2^1)} ,
\end{align*}
and thus we can estimate this by
\begin{align*}
\begin{split}
	\hat \bE_n[U_1^{(0, 0)} | V_1 = p_1^0, V_2 = p_2^1] 
	\coloneqq \frac{\partial_{p_1^0 p_2^1} [\hat g_3^{(0, 0)}(p_1^0, p_2^1)]}{\hat h(p_1^0, p_2^1)} 
	= \frac{\ddot{b}_K(p_1^0, p_2^1)^\top \hat \alpha_3^{(0, 0)}}{\hat h(p_1^0, p_2^1)}.
\end{split}
\end{align*}
Consequently, $m^{(0, 0)}(x, p_1^0, p_2^1)$ can be estimated by
\begin{align}\label{eq:m00}
	\hat m^{(0, 0)}(x, p_1^0, p_2^1) \coloneqq x_1^\top \hat \beta_1^{(0, 0)} + \hat \bE_n[U_1^{(0, 0)} | V_1 = p_1^0, V_2 = p_2^1].
\end{align}
Finally, $\text{MTE}_{\text{direct}}^{(0)}(x, p^0_1, p^1_2)$ can be estimated by
\begin{align*}\label{eq:estim-direct}
	\widehat{\text{MTE}}_{\text{direct}}^{(0)}(x, p^0_1, p^1_2) \coloneqq \hat m^{(1, 0)}(x, p_1^0, p_2^1) - \hat m^{(0, 0)}(x, p_1^0, p_2^1).
\end{align*}

\subsection{Asymptotics}\label{subsec:asymptotics}

In this subsection, we discuss the asymptotic properties of the proposed estimators.
We mainly investigate the MTR estimator for $D = (1,0)$.
Analogous arguments apply to the other cases.

For simplicity of presentation, all the technical assumptions are summarized in Appendix \ref{subsubsec:prep}.
Let us comment on some of the high-level assumptions employed here.
\phantomsection\label{page:R2-1-e}\Copy{R2-1-e}{
	First, in Assumption \ref{as:data}, we assume that $\{\{(Y_{ji}, D_{ji}, W_{ji})\}_{j=1}^2\}_{i=1}^n$ are independent and identically distributed across $i$.
	Although we do not restrict the dependence between the variables for $j$ and those for $-j$, the dependence across different pairs is ruled out.
	We conjecture that if the dependence is sufficiently weak, we can establish similar results to those given below (cf. \citealpmain{chen2015optimal, lee2016series}).
	In addition, if there is pair-specific unobserved heterogeneity, the assumption of identical distribution might be questionable. 
	To mitigate this problem, in our empirical application, we include school fixed effects in the model.
}
Assumption \ref{as:1stage}(i) assumes the $\sqrt{n}$-consistency of the ML estimator.
In Assumption \ref{as:sieve}, we require that the functions $g^{(1,0)}$ and $b_K$ are sufficiently smooth so that they are at least $s$-times continuously differentiable for some $s \ge 2$ and that the derivatives of $g^{(1,0)}$ can be well approximated in the space spanned by the derivatives of the basis $b_K$.
Lastly, Assumption \ref{as:opnorm} requires that the linear projection of any bounded and continuous function onto $(\tilde I_i^{(1,0)} X_{1i}, \tilde T_i^{(1,0)} b_K (P_{1i}^0, P_{2i}^1))$ is ``stable'' in a sense that its sup-operator norm is stochastically bounded.
Similar conditions can be found, for example, in \citetmain{huang2003local} and \citetmain{chen2018optimal}.

The next theorem establishes the uniform convergence rate of our MTR estimator $\hat m^{(1, 0)}(x, p_1, p_2)$ and that of the ``oracle'' estimator $\tilde m^{(1, 0)}(x, p_1, p_2)$ obtained by assuming that the true parameters in the first stage are known (see Appendix \ref{subsubsec:prep} for more precise definition).
Denote $\mathcal{S}^{(1,0)} \coloneqq \text{supp}[P_1^0, P_2^1]$.

\begin{theorem}\label{thm:cdmean}
	Suppose that Assumptions \ref{as:gamemodel}, \ref{as:linear}, \ref{as:data}, \ref{as:1stage}, \ref{as:eigen}(i), and \ref{as:emom}--\ref{as:rate2} hold.
	Then, for a given $x \in \text{supp}[X]$, we have
	\begin{align*}
	\renewcommand{\arraystretch}{2}
	\begin{array}{cl}
	\text{(i)} & \displaystyle \sup_{(p_1, p_2) \in \mathcal{S}^{(1,0)}}\left| \tilde m^{(1, 0)}(x, p_1, p_2)  -  m^{(1, 0)}(x, p_1, p_2)  \right| =  O_P(\zeta_0(K) K \sqrt{\log n / n} ) +  O_P(K^{(2 - s) /2}),\\
	\text{(ii)} & \displaystyle \sup_{(p_1, p_2) \in \mathcal{S}^{(1,0)}}\left| \hat m^{(1, 0)}(x, p_1, p_2)  -  m^{(1, 0)}(x, p_1, p_2)  \right| = O_P(\zeta_0(K) K \sqrt{\log n / n} ) +  O_P(K^{(2 - s) /2}),
	\end{array}
	\renewcommand{\arraystretch}{1}
	\end{align*}
	where $\zeta_0(K) \coloneqq \sup_{(p_1, p_2) \in [0,1]^2 }\| b_K(p_1, p_2) \|$ with $\| \cdot \|$ being the Euclidean norm.  
\end{theorem}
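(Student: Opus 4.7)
\textbf{Proof proposal for Theorem \ref{thm:cdmean}.}

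The plan is to reduce part (ii) to part (i) plus a first-stage-induced remainder. Write
\begin{align*}
\hat m^{(1,0)}(x, p_1, p_2) - m^{(1,0)}(x, p_1, p_2)
&= \bigl[\hat m^{(1,0)}(x, p_1, p_2) - \tilde m^{(1,0)}(x, p_1, p_2)\bigr] \\
&\quad + \bigl[\tilde m^{(1,0)}(x, p_1, p_2) - m^{(1,0)}(x, p_1, p_2)\bigr],
\end{align*}
so once (i) is established it suffices to show the first bracket is $O_P(\zeta_0(K) K \sqrt{\log n/n}) + O_P(K^{(2-s)/2})$. Each MTR estimator further splits as
\begin{equation*}
\tilde m^{(1,0)}(x,p_1,p_2) - m^{(1,0)}(x,p_1,p_2)
= x_1^\top \bigl(\tilde\beta_{1n}^{(1,0)} - \beta_1^{(1,0)}\bigr)
+ \bigl(\tilde E_n[U_1^{(1,0)}| V_1 = p_1, V_2 = p_2] - E[U_1^{(1,0)}| V_1 = p_1, V_2 = p_2]\bigr),
\end{equation*}
with an analogous split for $\hat m^{(1,0)}$. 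The parametric term is $O_P(n^{-1/2})$ by standard partially linear sieve theory and is dominated by the nonparametric rate.

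\emph{Part (i).} I would follow the standard template for series estimation of a partially linear model, as in \citetmain{belloni2015some} and \citetmain{chen2015optimal}. Assumptions \ref{as:eigen}(i) and \ref{as:rate}(i) give $\|\Psi^{(1,0)}_{nK} - \Psi^{(1,0)}_K\| = o_P(1)$, so $[\Psi^{(1,0)}_{nK}]^{-1}$ exists w.p.a.\,1 with bounded spectrum. Writing the normal equations with the approximation error $g_1^{(1,0)}(p_1, p_2) - b_K(p_1, p_2)^\top \alpha_1^{(1,0)}$ as in Assumption \ref{as:sieve}, the usual decomposition yields a stochastic part $[\Psi^{(1,0)}_{nK}]^{-1} n^{-1} \mathbf{R}_K^{(1,0)\top}\tilde{\mathbf{e}}_1^{(1,0)}$ and a deterministic projection-error part. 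Because the quantity of interest is the cross derivative $\ddot b_K(\cdot,\cdot)^\top \tilde\alpha_{1n}^{(1,0)}$ divided by $\hat h$, I would invoke Assumption \ref{as:deriv} to transfer sup-norm bounds on $b_K^\top(\cdot)$ to sup-norm bounds on $\ddot b_K^\top(\cdot)$ at the cost of a multiplicative factor $K$. Combining with the operator-norm stability of $\mathcal{P}_{nK}$ in Assumption \ref{as:opnorm}(i) gives the variance contribution $O_P(\zeta_0(K) K \sqrt{\log n/n})$ to the second derivative estimator, while Assumption \ref{as:sieve} with $|\mathbf{a}| = 2$ contributes the bias $O(K^{(2-s)/2})$. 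Finally, $h$ is bounded away from zero on $\mathcal{S}^{(1,0)}$ by Assumption \ref{as:1stage}(iii), so division by $h$ preserves the rate. This gives (i).

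\emph{Part (ii).} For the feasible-minus-infeasible difference I would decompose along three channels: (a) replacing $\mathbf{R}_K^{(1,0)}$ by $\hat{\mathbf{R}}_K^{(1,0)}$ in the design; (b) replacing $\tilde{\mathbf{Y}}_1^{(1,0)}$ by $\hat{\mathbf{Y}}_1^{(1,0)}$ through $\tau_\varpi(\hat{\mathcal{L}}^{(1,0)}) - \tau_\varpi(\mathcal{L}^{(1,0)})$; and (c) replacing $h$ by $\hat h$ in the final ratio. By Assumption \ref{as:1stage}(i) together with the Lipschitz constants of $F_{\varepsilon_j}$, $\tau_\varpi$, $H_\rho$, and $h_\rho$ (Assumptions \ref{as:gamemodel}, \ref{as:1stage}(ii)(iii)), each of $\hat P_{ji}^d - P_{ji}^d$, $\hat{\mathcal{L}}_i - \mathcal{L}_i$, and $\hat h - h$ is uniformly $O_P(n^{-1/2})$. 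Channel (a) then yields a perturbation controlled by the Lipschitz constant $c_b K^\omega$ of $b_K$ from Assumption \ref{as:rate2}(i); combined with $\zeta_0(K)\zeta_1(K)/\sqrt n = O(1)$ from Assumption \ref{as:opnorm}(ii), one shows that $\hat\Psi^{(1,0)}_{nK}$ inherits the convergence of $\Psi^{(1,0)}_{nK}$ and that the feasible projection operator has $\|\hat{\mathcal{P}}_{nK}\|_\infty = O_P(1)$ (this is the feasible analogue used in Lemma \ref{lem:opnorm}). Channels (b) and (c) each contribute $O_P(n^{-1/2})$ up to polynomial factors in $K$ that are dominated by the sieve rate under Assumption \ref{as:rate2}(ii). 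Assembling everything gives $\hat m^{(1,0)} - \tilde m^{(1,0)} = o_P$ of the rate in (i).

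The main obstacle is the interaction between two high-order objects: the second-derivative basis $\ddot b_K$ (which magnifies errors by a factor $K$ through Assumption \ref{as:deriv}) and the plug-in of estimated first-stage objects $\hat P_{ji}^d$ into an already high-degree basis with Lipschitz constant $c_b K^\omega$. Making the first-stage perturbation truly negligible relative to the sieve rate requires carefully balancing Assumptions \ref{as:rate}(ii), \ref{as:opnorm}(ii), and \ref{as:rate2} so that powers of $K$ coming from basis smoothness never outrun the $n^{-1/2}$ first-stage rate. Controlling this uniformly in $(p_1, p_2) \in \mathcal{S}^{(1,0)}$, rather than just pointwise, is the delicate part and is where the rate $\zeta_0(K)K\sqrt{\log n/n}$ (rather than $\sqrt{\log K/n}$) originates.
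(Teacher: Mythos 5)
Your overall architecture matches the paper's: the theorem is proved there by combining Lemma \ref{lem:parametric} (the $O_P(n^{-1/2})$ parametric part) with Lemma \ref{lem:unifconv2}, which in turn rests on Lemma \ref{lem:unifconv1} (uniform rates for $\tilde g_1^{(1,0)}$, $\hat g_1^{(1,0)}$ and their cross-derivatives, with the derivative transfer via Assumption \ref{as:deriv} costing the factor $K$, and the feasible-versus-infeasible gap controlled by mean-value expansions of $b_K$ and $g_1^{(1,0)}$ at the $O_P(n^{-1/2})$ first-stage rate, exactly your channels (a)--(c)). So the decomposition, the role of Assumption \ref{as:deriv}, and the treatment of the plug-in step are all essentially the paper's.

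There is, however, one genuine gap: you never supply the mechanism that produces the uniform rate $O_P(\zeta_0(K)\sqrt{\log n/n})$ for the stochastic term $\sup_{(p_1,p_2)}|b_K(p_1,p_2)^\top\mathbb{S}_K[\Psi_{nK}^{(1,0)}]^{-1}\mathbf{R}_K^{(1,0)\top}\tilde{\mathbf{e}}_1^{(1,0)}/n|$, and you misattribute it: Assumption \ref{as:opnorm}(i) (stability of $\mathcal{P}_{nK}$) is used in the paper only to control the projected sieve \emph{approximation error} $\mathbf{u}_1^{(1,0)}$ (yielding the $O_P(K^{(2-s)/2})$ bias term, as in the bound on $A_{1n}$ in the proof of Theorem \ref{thm:normal1}), not the variance term. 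If you only use the $L^2$ coefficient rate $\|\tilde\alpha_{1n}^{(1,0)}-\alpha_1^{(1,0)}\|=O_P(\sqrt{K/n})$ from Lemma \ref{lem:parametric} together with Cauchy--Schwarz, you get a sup-norm variance bound of order $\zeta_0(K)K\sqrt{K/n}$, which is strictly worse than the claimed $\zeta_0(K)K\sqrt{\log n/n}$. The paper's Lemma \ref{lem:unifconv1}(i) closes this by discretizing $[0,1]^2$ into a grid $\mathcal{T}_n^2$ (using the Lipschitz bound $c_bK^\omega$ of Assumption \ref{as:rate2}(i) to control the off-grid error), splitting $\tilde e_{1i}$ into a truncated part and a tail part at level $M_n$, applying Bernstein's inequality with a union bound over the grid to the truncated part, and Markov's inequality with the fourth-moment bound of Assumption \ref{as:emom} to the tail; Assumption \ref{as:rate2}(ii) is what makes the truncation level and grid cardinality compatible. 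Flagging uniformity as ``the delicate part'' is not a substitute for this argument, so your proposal as written does not yet establish the stated rate.
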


The proof of the theorem is straightforward from Lemmas \ref{lem:parametric} and \ref{lem:unifconv2}, and thus it is omitted.
When $\zeta_0(K) \asymp \sqrt{K}$, by choosing $K \asymp (\log n /n)^{-1/(1 + s)}$, we can obtain
\[
\sup_{(p_1, p_2) \in \mathcal{S}^{(1,0)}}\left| \hat m^{(1, 0)}(x, p_1, p_2)  -  m^{(1, 0)}(x, p_1, p_2)  \right| = O_P\left(\left(\log n / n\right)^{\frac{s - 2}{2s + 2}} \right).
\]
The above result implies that our MTR estimator can converge at the optimal uniform rate of \citetmain{stone1982optimal}.

The next theorem shows that the asymptotic distribution of the feasible estimator is also equivalent to that of the infeasible oracle estimator.

\begin{theorem}\label{thm:normal1}
	Suppose that Assumptions \ref{as:gamemodel}, \ref{as:linear}, and \ref{as:data}--\ref{as:rate2} hold.
	For given $x \in \text{supp}[X]$ and $(p_1^0, p_2^1) \in \mathcal{S}^{(1,0)}$, if in addition $K / \| \ddot{b}_K(p_1^0, p_2^1) \| \to 0$, $\zeta_0(K) \sqrt{K/n} \to 0$, and $\sqrt{n} K^{(2 - s)/2} = O(1)$ hold, then we have
	\begin{align*}
		\text{(i)} & \qquad \frac{\sqrt{n} \left( \tilde m^{(1, 0)}(x, p_1^0, p_2^1)  -  m^{(1, 0)}(x, p_1^0, p_2^1) \right)}{\sigma_K^{(1,0)}(p_1^0, p_2^1)} \overset{d}{\to} N(0, 1),\\
		\text{(ii)} & \qquad \frac{\sqrt{n} \left( \hat m^{(1, 0)}(x, p_1^0, p_2^1)  -  m^{(1, 0)}(x, p_1^0, p_2^1) \right)}{\sigma_K^{(1,0)}(p_1^0, p_2^1)} \overset{d}{\to} N(0, 1),
	\end{align*}
	where the definition of $\sigma_K^{(1,0)}(p_1^0, p_2^1)$ can be found in \eqref{eq:sd10}.
\end{theorem}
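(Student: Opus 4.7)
The plan is to decompose $\hat m^{(1,0)}(x,p_1^0,p_2^1)$ into the parametric piece $x_1^\top\hat\beta_{1n}^{(1,0)}$ and the series-derivative piece $-\ddot{b}_K(p_1^0,p_2^1)^\top\hat\alpha_{1n}^{(1,0)}/\hat h(p_1^0,p_2^1)$, and to show that the standardized limit is driven entirely by the latter. The hypothesis $K/\|\ddot{b}_K(p_1^0,p_2^1)\|\to 0$ combined with Assumption \ref{as:eigen} implies that $\sigma_K^{(1,0)}(p_1^0,p_2^1)/\sqrt n$ grows faster than the $O_P(n^{-1/2})$ parametric error, so $\sqrt n\,x_1^\top(\hat\beta_{1n}^{(1,0)}-\beta_1^{(1,0)})/\sigma_K^{(1,0)}(p_1^0,p_2^1)=o_P(1)$, and the analogous statement holds for $\tilde\beta_{1n}^{(1,0)}$.

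For part (i), I would start from the OLS identity
$$\tilde\delta_{1n}^{(1,0)}-\delta_1^{(1,0)} = [\Psi^{(1,0)}_{nK}]^{-1}\frac{1}{n}\sum_{i=1}^n R_{K,i}^{(1,0)}\tilde e_{1i}^{(1,0)} + [\Psi^{(1,0)}_{nK}]^{-1}\frac{1}{n}\sum_{i=1}^n R_{K,i}^{(1,0)} r_i,$$
where $r_i$ denotes the sieve approximation remainder. Projecting onto $\mathbb{S}_K^\top\ddot{b}_K(p_1^0,p_2^1)/h(p_1^0,p_2^1)$ and applying a Lindeberg--Feller CLT to the triangular array of standardized linear combinations of the mean-zero summands $R_{K,i}^{(1,0)}\tilde e_{1i}^{(1,0)}$ delivers the $N(0,1)$ limit; the asymptotic variance equals one by construction of $\sigma_K^{(1,0)}$, and Lindeberg's condition follows from Assumption \ref{as:emom} in combination with $\zeta_0(K)\sqrt{K/n}\to 0$. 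The remainder contributes bias of order $K^{(2-s)/2}$ at the level of the second derivative by Assumption \ref{as:sieve}; once normalized, it is $O_P\bigl(\sqrt n\,K^{(2-s)/2}\cdot K/\|\ddot{b}_K(p_1^0,p_2^1)\|\bigr)=o_P(1)$ thanks to $\sqrt n K^{(2-s)/2}=O(1)$ and $K/\|\ddot{b}_K(p_1^0,p_2^1)\|\to 0$. Assumption \ref{as:1stage}(iii) together with the continuity of $h$ ensures that the denominator $h(p_1^0,p_2^1)$ is bounded away from zero.

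For part (ii), the task reduces to the asymptotic equivalence
$$\frac{\sqrt n\,\bigl(\hat m^{(1,0)}-\tilde m^{(1,0)}\bigr)(x,p_1^0,p_2^1)}{\sigma_K^{(1,0)}(p_1^0,p_2^1)} = o_P(1).$$
Two first-stage perturbations must be controlled: the basis arguments $b_K(\hat P_{1i}^0,\hat P_{2i}^1)$ and the weights $\hat T_i^{(1,0)}=\hat I_i^{(1,0)}/\hat{\mathcal L}_i^{(1,0)}$. I would use the Lipschitz bound $\|b_K(\hat P)-b_K(P)\|=O(K^\omega\|\hat P-P\|)$ from Assumption \ref{as:rate2}(i) and $\max_i\|\hat{\mathbf P}_i-\mathbf P_i\|=O_P(n^{-1/2})$ from Assumption \ref{as:1stage}(i), then propagate these through the OLS sandwich using Assumption \ref{as:opnorm} and the derivative-to-level bound of Assumption \ref{as:deriv} to show that $\ddot{b}_K(p_1^0,p_2^1)^\top(\hat\alpha_{1n}^{(1,0)}-\tilde\alpha_{1n}^{(1,0)})/h(p_1^0,p_2^1)$ is of smaller order than $\sigma_K^{(1,0)}(p_1^0,p_2^1)/\sqrt n$. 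The Lipschitz continuity of $h_\rho$ in Assumption \ref{as:1stage} handles the $\hat h$ versus $h$ discrepancy, and the smoothness of $H_\rho$ in Assumption \ref{as:1stage}(ii) controls $\hat{\mathcal L}_i^{(1,0)}-\mathcal L_i^{(1,0)}$.

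The main obstacle is Step 2: because the target functional involves the second derivative of the sieve fit, even an $n^{-1/2}$ perturbation of the basis arguments is amplified by the factor $K^\omega$ and, in spectral terms, by $\zeta_1(K)$. The combined rate conditions $\zeta_0(K)\zeta_1(K)/\sqrt n=O(1)$ and $\zeta_1(K)/\sqrt n\to 0$ in Assumptions \ref{as:opnorm}(ii) and \ref{as:rate}(ii), together with $K/\|\ddot{b}_K(p_1^0,p_2^1)\|\to 0$, must be used in tandem so that the first-stage noise is absorbed within the $\sigma_K^{(1,0)}/\sqrt n$ envelope. Once this equivalence is established, part (ii) follows immediately from part (i).
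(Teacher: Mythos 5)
Your proposal follows essentially the same route as the paper's proof: lower-bound $\sigma_K^{(1,0)}$ by $c_\sigma\|\ddot b_K\|$ so the parametric and bias pieces vanish after normalization, apply a Lindeberg/Lyapunov CLT to the linear combination $\ddot b_K^\top \mathbb{S}_K \Psi_K^{-1}\frac{1}{n}\sum_i R_{K,i}^{(1,0)}\tilde e_{1i}^{(1,0)}$ using Assumption \ref{as:emom} and $\zeta_0(K)\sqrt{K/n}\to 0$, and then show the feasible estimator is asymptotically equivalent by propagating the $O_P(n^{-1/2})$ first-stage errors through the sandwich via Assumptions \ref{as:1stage}, \ref{as:opnorm}, \ref{as:deriv}, and \ref{as:rate2}. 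The only step you gloss over is the replacement of $\Psi_{nK}^{(1,0)}$ by $\Psi_K^{(1,0)}$ before invoking the CLT (needed so the summands are genuinely independent), which the paper handles with Lemma \ref{lem:matLLN}; this is routine and does not affect the correctness of your plan.
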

The standard deviation $\sigma_K^{(1,0)}(p_1^0, p_2^1)$ can be easily estimated by a sample analog, replacing the true values and functions with their estimates.
\bigskip

Let us briefly discuss the asymptotic properties of the MTR estimator for $D=(0,0)$ given in \eqref{eq:m00}.
Under conditions similar to those in Theorem \ref{thm:normal1}, we can show the following asymptotic normality result:
\begin{align*}
	\frac{\sqrt{n} \left( \hat m^{(0, 0)}(x, p_1^0, p_2^1)  -  m^{(0, 0)}(x, p_1^0, p_2^1) \right)}{\sigma_{K}^{(0,0)}(p_1^0, p_2^1)} \overset{d}{\to} N(0, 1)
\end{align*}
for $(p_1^0, p_2^1) \in \mathcal{S}^{(1,0)}$, where the definition of $\sigma_{K}^{(0,0)}(p_1^0, p_2^1)$ can be found in \eqref{eq:sd00}.

Finally, the limiting distribution of $\widehat{\text{MTE}}_{\text{direct}}^{(0)}(x, p_1^0, p_2^1)$ can be characterized as follows:
\begin{align*}
	\frac{\sqrt{n} \left( \widehat{\text{MTE}}_{\text{direct}}^{(0)}(x, p_1^0, p_2^1)  -  \text{MTE}_{\text{direct}}^{(0)}(x, p_1^0, p_2^1) \right)}{\sqrt{\left[\sigma_K^{(1,0)}(p_1^0, p_2^1)\right]^2 + \left[\sigma_{K}^{(0,0)}(p_1^0, p_2^1)\right]^2}} \overset{d}{\to} N(0, 1).
\end{align*}
The limiting distributions of the other MTE estimators can be derived similarly.

\subsection{An over-identified estimator}\label{subsec:overidentification:est}

As mentioned above, we can improve the estimation efficiency based on the over-identification result.
For $(p_1, p_2) \in \underline{\mathcal{S}}$, where $\underline{\mathcal{S}} \coloneqq \bigcap_{(d_1, d_2) \in \{ 0, 1 \}^2} \text{supp}[P_1^{d_1}, P_2^{d_2}]$, Theorem \ref{thm:multiple} suggests that we can estimate $m^{(0, 0)}(x, p_1, p_2)$ in four ways: $\hat m_l^{(0, 0)}(x, p_1, p_2) \coloneqq x_1^\top \hat \beta_1^{(0, 0)} + \hat \bE_{ln}[ U_1^{(0, 0)} | V_1 = p_1, V_2 = p_2 ]$ for $l = 1, \dots, 4$, where $\hat \bE_{ln} [ U_1^{(0, 0)} | V_1 = p_1, V_2 = p_2 ] \coloneqq \hat \kappa_l \ddot b_K(p_1, p_2)^\top \hat \alpha_l^{(0, 0)}$ with $\hat \kappa_1 = \hat \kappa_2 = \hat \kappa_3 \coloneqq 1 / \hat h(p_1, p_2)$ and $\hat \kappa_4 \coloneqq - 1 / \hat h(p_1, p_2)$.\footnote{
	\phantomsection\label{page:R2-5}\Copy{R2-5}{
		The over-identification result essentially builds on the continuity of strategic effects, which can be examined by the first-stage estimation result.
		Then, if the continuity is verified, the estimated $\hat m_l^{(0, 0)}(x, p_1, p_2)$, $l = 1, \ldots, 4$, must all take a ``close'' value at each $(p_1, p_2)$ in the common support.
		This gives us a testable implication for our model specification.
		Investigating the property of this testing procedure is quite intriguing, but we leave this for future work.
	}
}
Let $w = (w_1, w_2, w_3, w_4)^\top \in [0,1]^4$ be a vector of fixed weights such that $\sum_{l = 1}^4 w_l = 1$.
Then, $m^{(0, 0)}(x, p_1, p_2)$ can be estimated by the weighted average of the four estimators: 
\begin{align}\label{eq:over_id_est}
\hat m_{\text{over-id}}^{(0, 0)}(x, p_1, p_2; w) \coloneqq \sum_{l=1}^{4} w_l \hat m_l^{(0, 0)}(x, p_1, p_2).
\end{align}
In the same manner as in the proof of Theorem \ref{thm:normal1}, we can show
\begin{align*}
	\frac{\sqrt{n} \left( \hat m_{\text{over-id}}^{(0, 0)}(x, p_1, p_2; w) - m^{(0, 0)}(x, p_1, p_2) \right)}{\sigma_{\text{over-id}, K}^{(0, 0)}(p_1, p_2; w) } \convd N(0, 1).
\end{align*}
By easy calculations, we can see that the variance $[\sigma_{\text{over-id}, K}^{(0, 0)}(p_1, p_2; w)]^2$ has a quadratic form with respect to $w$, and, thus, the  ``optimal'' weights that attain the minimum variance in the class of the weighted average estimators $\hat m_{\text{over-id}}^{(0, 0)}(x, p_1, p_2; w)$ can be easily found by quadratic programming.\footnote{
	Specifically, $[ \sigma_{\text{over-id}, K}^{(0, 0)}(p_1, p_2; w) ]^2 = w^\top \Upsilon w$, and $\Upsilon = (\Upsilon_{lm})$ is the $4 \times 4$ symmetric matrix with the $(l, m)$-th element given by $\Upsilon_{lm} \coloneqq \kappa_l \kappa_m \ddot b_K(p_1, p_2)^\top \mathbb{S}_{lK} [\Psi_K^{(0,0)}]^{-1} \Sigma_K^{(0, 0)} [\Psi_K^{(0,0)}]^{-1} \mathbb{S}_{mK}^\top \ddot b_K(p_1, p_2)$, where $\kappa_l$ is the true value of $\hat \kappa_l$, and $\mathbb{S}_{lK}$, $\Psi_K^{(0,0)}$, and $\Sigma_K^{(0, 0)}$ are as in \eqref{eq:sd00}.
}
Although $[\sigma_{\text{over-id}, K}^{(0, 0)}(p_1, p_2; w)]^2$ is unknown in practice, we can estimate the optimal weights by minimizing its sample analog.

\phantomsection\label{page:R1-2-2}\Copy{R1-2-2}{
	It should be noted that the over-identified estimator relies on the assumption of non-empty $\underline{\mathcal{S}}$.
	If $\underline{\mathcal{S}}$ is empty, there is at least one estimator among the four that does not contribute to the estimation of $m^{(0, 0)}(x, p_1, p_2)$.
	In this situation, the over-identified estimator may entail a bias arising from an inaccurate extrapolation.
	Thus, it is always recommended to check the joint distribution of $(\hat P_1^{d_1}, \hat P_2^{d_2})$ before computing $\hat m_{\text{over-id}}^{(0, 0)}(x, p_1, p_2; w)$ to see if $(p_1, p_2) \in \underline{\mathcal{S}}$ actually holds.
}

\phantomsection\label{page:R2-7-b-1}\Copy{R2-7-b-1}{
	Finally, as mentioned by a referee, we can consider other estimation strategies than the above-mentioned over-identified estimator, which also use the over-identification result but in a more implicit way.
	As far as we have checked via numerical simulations, the performances of these estimators, including ours, are similar.
	To save space, we discuss the constructions of these alternative estimators in Appendix \ref{subsec:alt_m00}.
}


\section{Numerical Illustrations}\label{sec:numeric}

\subsection{Monte Carlo simulation results}\label{subsec:MC}

In order to evaluate the finite sample performance of our MTE estimator, we conduct a set of Monte Carlo simulations based on the over-identified estimator \eqref{eq:over_id_est}.
To save space,  we here briefly  report the summary of the results, and detailed information about the experiments are provided in Appendix \ref{sec:MC}.

The simulation results provide us with several practical guidelines as follows.
First, when comparing bivariate power series and tensor-product B-splines for the basis functions, we find that the power-series-based estimator achieves smaller RMSE (root mean squared error) than the B-splines-based estimator, especially when the sample size is not large. 
For this result, note that the power series is a globally supported series, and thus the power-series estimator tends to have smaller variance at the cost of lower flexibility than the B-splines estimator.
Theoretically speaking, B-splines estimators have a better approximation property than power-series estimators, and in fact the latter may not attain the optimal convergence rate.
Hence, the above result may be due to the chosen sample sizes in this simulation setup.
Second, in terms of RMSE, the MTE parameters can be more precisely estimated by ridge regression with a small regularization parameter.
Remarkably, using the ridge regression often reduces the bias for the B-splines estimator.
Since the tensor-product B-splines involves a large number of basis terms, which causes a severe multicollinearity problem for the non-regularized estimators, there is a higher risk of generating extremely outlying estimates compared with the regularized estimators.
Thus, it would be desirable to employ a regularized regression in practical situations with moderate sample size.
Note that adding a sufficiently small regularization factor does not alter our asymptotic theory.

\subsection{Empirical application: Adolescent delinquency and academic performance}\label{subsec:empirical}

As an empirical illustration, we estimate the impacts of risky behaviors such as smoking and drinking alcohol on adolescents' academic performance.
The empirical analysis is performed on the National Longitudinal Study of Adolescent Health (Add Health) data that provide nationally representative information on 7--12th graders in the U.S. 
The survey was conducted during the 1994--1995 school year. 
The survey elicits information on variables such as the social and demographic characteristics of the respondents, education levels and occupations of their parents, and friendship connections.

In this analysis, following \citemain{card2013peer}, we focus on the interactions among pairs of closest opposite-gender friends.
In the survey, each respondent was asked to list up to five friends of each gender in order from best to 5th best friend.
We first exclude missing nomination data (caused by non-response or because the nominee was not included in the Add Health data) and identify each respondent's closest opposite-gender friend in the available dataset.
Since a student's closest friend's friends might not include the student him/herself, we consider that the pair is formed only when his/her best friend nominates him/her either in the first place or second place.
This procedure results in 7,631 opposite-gender pairs of students.

\phantomsection\label{page:AE-17}\Copy{AE-17}{
	Before proceeding, let us clarify two important caveats in interpreting the results of the following analysis.
	First, as previously mentioned, the results presented are all conditional on the already formed pairs.
	In reality, it is often observed that people with similar profiles are more likely to be friends with each other.
	Such \textit{homophilic} preference generally poses an additional endogeneity problem, making it difficult to generalize the results beyond the current data.
	Second, in reality, not only the interference within best friends, but also that between different friend groups may affect the students' activities.
	This type of interaction cannot be accommodated in our framework.
}

The treatment variable ($D$) is defined as the student's participation in risky behaviors including smoking, drinking, truancy, and fighting, and the outcome variable ($Y$) is the natural log of GPA.
Table \ref{table:def} summarizes the explanatory variables used ($X$) and their definitions.
For continuous IVs ($Z$) for each gender, we construct them based on the average parental characteristics of the students' fourth and fifth ranked non-best friends.
For our method to work with this choice of IV, the following two identification assumptions must be met: (1) the characteristics of non-best friends affect the students' delinquency, and (2) the non-best friends' characteristics exert no direct effect on the students' academic performance.
The arguments supporting these assumptions are presented below.

For assumption (1), in the econometrics literature, \citemain{lee2014binary} and \citemain{lin2014peer} present empirical evidence that the personalities of friends, including non-best friends, influence participation in risky behaviors.
In the education and sociology literature, several studies reveal a close relation between student delinquency and the school environment, especially the shared perception of social norms (e.g., \citealpmain{paluck2016changing}).
Because the school environment for a student is created by all school mates, not only by one's own best friends, this point is consistent with our assumption.
\citemain{rees2011one} also point out that a kind of social pressure exists in youth friendships to achieve peer acceptance, which drives the students to engage in risky behavior in groups. 
Here, the role of non-best friend relationships particularly matters because non-best friends have less intimate and caring relationships than best friends have (p.199, \citealpmain{rees2011one}).

Next, for assumption (2), several studies indicate the existence of positive interactions in terms of academic outcomes between reciprocal best friends (e.g., \citealpmain{vaquera2008you,cherng2013along}).
For example, \citemain{vaquera2008you} argue that reciprocal friendships are likely to be more emotionally supportive as well as superior information resources compared to friendships that are not reciprocal.
Because our definition of non-best friends does not involve reciprocity, in view of the above described studies, it would be reasonable to assume that the non-best friend characteristics do not directly affect the student's academic performance.\footnote{
	\phantomsection\label{page:AE-19-1}\Copy{AE-19-1}{
		To verify whether this exclusion restriction assumption is actually supported in our data, we have performed the test proposed by \citemain{kedagni2020generalized}.
		The results are reported in Table \ref{table:KMtest} in Appendix \ref{sec:supp}, which indicate that our chosen IVs are likely to be valid.
		However, note that this testing procedure is not a formal diagnosis for our specific setting, and that developing such a formal test is beyond the scope of our study.
	}
}

After excluding observations with missing values for $(D,X,Z)$, the size of the sample used in the estimation of the treatment decision model is 6,053.
When estimating the MTRs, observations with missing values for $Y$ are further excluded.
Table \ref{table:d_dist} presents the empirical distribution of $(D_1, D_2)$, where male students are labeled as ``player 1'' and females as ``player 2''.
As expected, the case of $D = (0,0)$ shows the largest share of our sample.
There is an interesting asymmetry between the cases $D = (1,0)$ and $D = (0,1)$; that is, the number of males who solely participate in delinquent behaviors is significantly larger than the number of such females.

\begingroup
\renewcommand{\arraystretch}{1.2}
\begin{table}[!h]
	\caption{Distribution of $(D_1, D_2)$}
	\begin{center}
		\small{\begin{tabular}{|cc|ccc|}
				\hline
				&  & \multicolumn{2}{c}{\underline{Male}}  & \\
				&  & $D_1 = 0$ & $D_1 = 1$ & Total\\
				\hline
				\multirow{2}{*}{\underline{Female}}& $D_2 = 0$ & 2,707       & 1,351       & 4,058\\
				& $D_2 = 1$ & 736         & 1,259       & 1,995\\
				& Total     & 3,443       & 2,610       & 6,053\\
				\hline
		\end{tabular}}
		\label{table:d_dist}
	\end{center}
\end{table}
\endgroup

To save space, the detailed estimation results of the treatment decision model are omitted here and are provided in Table \ref{table:1ststage}.
Some noteworthy findings are as follows.
First, approximately three-quarters of the pairs choose $D = (0,0)$ in the multiple equilibria situation.
Second, the education levels of the parents (the father's level of education, in particular) have a significant impact on reducing the risky activities of the students.
Third, having friends whose mothers have a professional job tends to discourage risky behavior.
Fourth, white and Asian students are more likely to be influenced by their friends' behaviors than other students.
Finally, students who belong to a sports club are less susceptible to their friends' behaviors.
The last two findings clearly indicate the presence of heterogeneous peer effects.
Although such heterogeneity is often overlooked in the literature on structural game estimation of adolescent activities (e.g., \citealpmain{soetevent2007discrete, card2013peer, lee2014binary}), it is consistent with the results in other contexts (e.g., \citealpmain{eisenberg2014peer, hsieh2018smoking}).

Figure \ref{fig:PiDelhist} in Appendix \ref{sec:supp} presents the histograms of the estimated $(\pi_1^0(W_1), \pi_2^0(W_2))$ and $(\Delta_1(W_1), \Delta_2(W_2))$, which indicate that these variables are distributed over some bounded range and, thus, that estimating the whole functional form of the MTEs would be challenging.
To see this point more concretely, we provide the histograms of the estimated $(P_j^0, P_j^1)$ for $j=1$ (male) and $2$ (female) in Figure \ref{fig:Phist}.
These figures suggest that the overlapping support condition can hardly hold outside the interval $[0.2, 0.7]$, implying that MTE estimates might not be reliable when $V$ is outside this interval. 

Now, we present our main results of estimating the MTE parameters.
Since our sample size is not very large, following the suggestion from the Monte Carlo results, we employ the over-identified estimator \eqref{eq:over_id_est} with a third-order bivariate power series for the basis function and use ridge regression for the parameter estimation with penalty  equal to $n^{-1}$. 
Figure \ref{fig:MTEs} summarizes the estimated direct, total, and indirect MTEs for both genders.
For comparison, we also report the MTE estimates obtained from a model where the first-stage treatment decision model is a standard bivariate probit model with no strategic interactions.
The MTE estimator for this model can be constructed in exactly the same way as our two-step estimator based on the identification result in Appendix \ref{subsec:nointeraction}.
For the value of $X$, the MTEs are evaluated at the median over all observations of men and women altogether.

Figure \ref{fig:directMTE} shows that the direct impact of their own delinquent activities is significantly negative for both male and female students.
The direct MTE for female students tends to have a weak negative slope with respect to the male partner's $V$. 
That is, when the male partner has a stronger hesitation in participating in risky activities, the direct MTE on female students' GPA becomes larger.
For male students, their direct MTE is relatively flat with respect to the female partner's $V$ (partially due to regularization).
Figure \ref{fig:totalMTE} illustrates that the total MTE is also significantly negative and is larger than the direct MTE for both male and female students.
The gap between the direct and total MTE corresponds to indirect MTE, which suggests that the students' own GPA is indirectly influenced by their peers' delinquent behavior, as shown in Figure \ref{fig:indirectMTE}.\footnote{\label{foot:indirect}
	Although the asymptotic distribution of the indirect MTE estimator is not formally discussed in this study, it can be easily derived by combining the results of Theorem \ref{thm:normal1} and that in Appendix \ref{subsec:totalMTE}.
}
As expected, the magnitude of the indirect MTE is clearly smaller than that of the direct MTE.
Interestingly, in contrast to the direct MTE estimates, we can observe that the indirect MTE estimates have a weak increasing tendency with respect to the best friend's $V$.
That is, if the best friend is a person who hesitates in engaging in risky behavior, the indirect effect would be slightly smaller.

The MTE estimates obtained from the non-strategic treatment model are significantly different from those obtained in our game-theoretic model.
In particular, the direct and total MTE for female students are clearly overestimated, and the indirect MTEs for male students have opposite sign as compared to the strategic model.
These results suggest that the misspecification in the first-stage model may produce certain biases that cannot be easily controlled unless the strategic interactions are actually incorporated into the model.

\begin{figure}[!ht]
	\centering
	\begin{subfigure}{16cm}
		\includegraphics[width = 16cm]{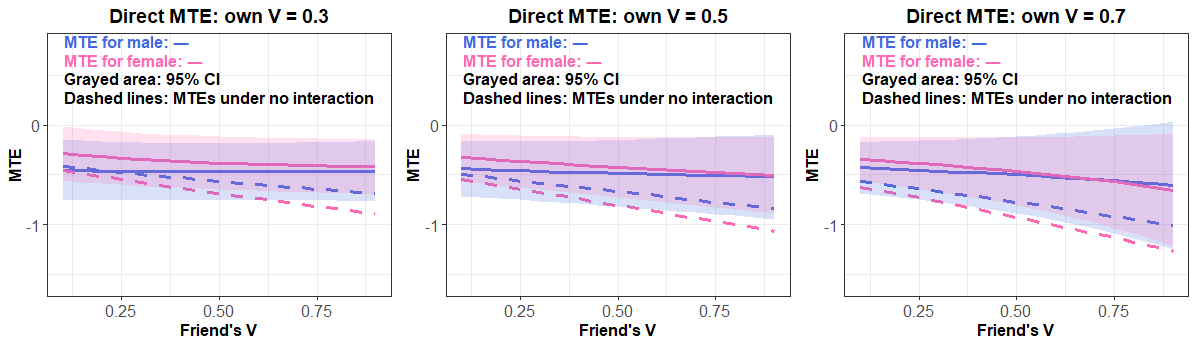}
		\caption{Estimated direct MTE.}
		\label{fig:directMTE}
	\end{subfigure}
	\begin{subfigure}{16cm}
		\includegraphics[width = 16cm]{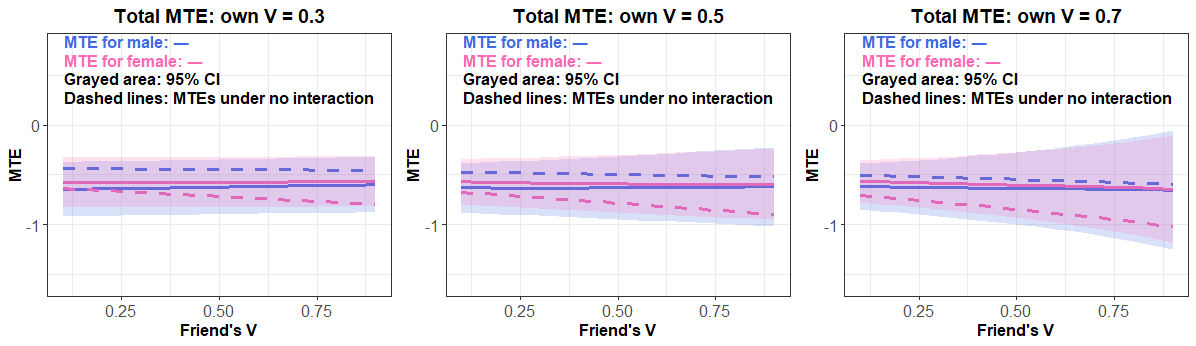}
		\caption{Estimated total MTE.}
		\label{fig:totalMTE}
	\end{subfigure}
	\begin{subfigure}{16cm}
		\includegraphics[width = 16cm]{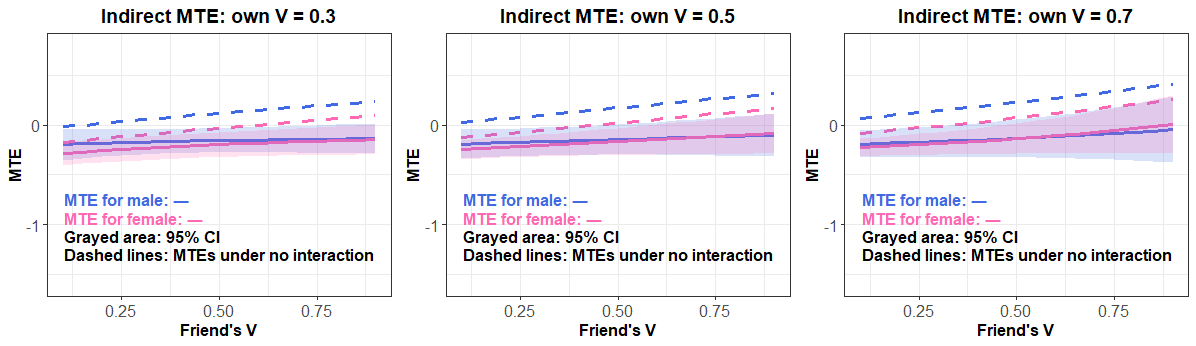}
		\caption{Estimated indirect MTE.}
		\label{fig:indirectMTE}
	\end{subfigure}
	\caption{Estimated MTEs}
	\label{fig:MTEs}
\end{figure}

\section{Conclusion} \label{sec:conclusion}

This study developed treatment effect models that admit both the treatment spillover and strategic interaction in the treatment decisions within a pair of agents.
We first demonstrated that the interaction in the treatment decisions can be modeled as a binary game of complete information with multiple equilibria.
Assuming an equilibrium selection rule, we showed that the MTE can be identified using an extended version of the LIV method.
Based on our identification results, we proposed two-step semiparametric series estimation for MTE parameters.
We showed that the proposed MTE estimator is uniformly consistent and asymptotically normal.

The results of this study suggest several extensions that would be promising to investigate, some of which have been mentioned above.
In addition, it would be worth extending our results to the case of strategic interaction among more than two players.
In the literature on game econometrics, only a few studies have addressed the point identification of game models of complete information with more than two players.
The main reason for this is that the characterization of the equilibrium is extremely complicated compared to the case of two-player games; in general, the number of multiple equilibria regions increases with the number of players (see, e.g., \citealpmain{soetevent2007discrete}).
Accordingly, the direct applicability of our approach to such cases is unclear.
On the other hand, while fixing the number of players at two, extending our model to an ordered treatment setup, as in \citetmain{vytlacil2006ordered}, should be more tractable.
As shown in \citetmain{card2013peer}, the Nash equilibrium in the two-player ordered-response game of complete information is uniquely characterized by an equilibrium selection assumption similar to ours.
Hence, although computation would be much more complicated than the present case, the identification and estimation approach proposed in this study would be applicable to such models.
Finally, it also may be beneficial to develop treatment evaluation techniques for treatment decision games of incomplete information.
Under incomplete information, we may use the rational expectation model developed in \citetmain{lee2014binary}, for example, as a treatment decision model, even when the number of players is large.
These topics are left for future research.

\section*{Acknowledgments}

The authors are grateful to the coeditor (Elie Tamer), the associate editor, and two anonymous referees for their insightful comments that significantly improved the paper.
We also thank to Yoichi Arai, Sukjin Han, Hiroaki Kaido, Shin Kanaya, Hiroyuki Kasahara, Toru Kitagawa, Tatsushi Oka, Ryo Okui, Yuya Sasaki, Yuta Toyama, Takuya Ura, Haiqing Xu, and the participants of conferences and seminars at various places for their valuable comments and suggestions.
This work was supported by JSPS KAKENHI Grant Numbers 15K17039, 17K13715, 19H01473, and 20K01597. 

This research uses data from Add Health, a program project directed by Kathleen Mullan Harris and designed by J. Richard Udry, Peter S. Bearman, and Kathleen Mullan Harris at the University of North Carolina at Chapel Hill, and funded by grant P01-HD31921 from the Eunice Kennedy Shriver National Institute of Child Health and Human Development, with cooperative funding from 23 other federal agencies and foundations. Special acknowledgment is due Ronald R. Rindfuss and Barbara Entwisle for assistance in the original design. Information on how to obtain the Add Health data files is available on the Add Health website (http://www.cpc.unc.edu/addhealth). No direct support was received from grant P01-HD31921 for this analysis.

\clearpage

\begin{center}
	{\Large Supplementary Appendix for ``Treatment Effect Models with Strategic Interaction in Treatment Decisions''}
	
	\textbf{(Not for publication)}
	\bigskip
	
	{\large Tadao Hoshino$^*$ and Takahide Yanagi$^\dagger$}
	\bigskip
	
	$^*$ School of Political Science and Economics, Waseda University.
	
	$^\dagger$ Graduate School of Economics, Kyoto University.
\end{center}

\appendix

\renewcommand{\thetable}{\Alph{section}.\arabic{table}}
\renewcommand{\thefigure}{\Alph{section}.\arabic{figure}}

\setcounter{table}{0}
\setcounter{figure}{0}

\section{Appendix: Proofs of Theorems} \label{sec:proof}

\subsection{Proofs of the theorems in Section \ref{sec:identification}}\label{sec:proof:identification}

\subsubsection{Proof of Theorem \ref{thm:gameiden}}

\paragraph{Proof of (i).}
We provide the proof for $p_1^0$ only because that for $p_2^0$ is analogous.
By Assumptions  \ref{as:complement}(ii) and \ref{as:infinity}(ii), the strict monotonicity of $F_{\varepsilon_2 | X = x}(\cdot)$ implies that $\lim_{w_{2,1} \to -\infty} H_{\rho_x}(p_1^0, p_2^1) = 0$ and, thus, $\lim_{w_{2,1} \to -\infty} \mathcal{L}^{(1, 0)}(w) = p_1^0$.
Since $\lim_{w_{2,1} \to -\infty} \mathcal{L}^{(1, 0)}(w)$ is identified from data under Assumption \ref{as:infinity}(i), this equality implies the identification of $p_1^0$.

\paragraph{Proof of (ii).}
We prove the identification of $(p_1^1, \rho_x)$ only.
We consider any pair of values $w = (w_1, w_2)$ and $\ddot w = (w_1, \ddot w_2)$ such that $p_2^0 \neq \ddot p_2^0$, where $\ddot p_2^0 = F_{\varepsilon_2 | X = x}(\pi_2^0(\ddot w_2))$.
Such $w$ and $\ddot w$ exist by Assumption \ref{as:infinity}(i) and the strict monotonicity of $F_{\varepsilon_2 | X = x}(\cdot)$.
We now have the following system:
\begin{align*}
	\left\{
	\begin{array}{l}
		\mathcal{L}^{(0, 1)}(w) = p_2^0 - H_{\rho_x}(p_1^1, p_2^0)\\
		\mathcal{L}^{(0, 1)}(\ddot w) = \ddot p_2^0 - H_{\rho_x}(p_1^1, \ddot p_2^0)\\
	\end{array}
	\right..
\end{align*}
Here, the unknown quantities are $\vartheta_{w_1} \coloneqq (p_1^1, \rho_x)$, as $p_2^0$ and $\ddot p_2^0$ are already identified by result (i).
If this system has a unique solution, we achieve the identification of $\vartheta_{w_1}$.
To proceed, we define
\begin{align*}
	G(\vartheta_{w_1}) \coloneqq 
	\begin{pmatrix}
		p_2^0 - H_{\rho_x}(p_1^1, p_2^0) \\
		\ddot p_2^0 - H_{\rho_x}(p_1^1, \ddot p_2^0)
	\end{pmatrix}
	.
\end{align*}
The Jacobian matrix of $G(\vartheta_{w_1})$ is given by 
\begin{align*}
	J_G(\vartheta_{w_1}) \coloneqq
	\begin{pmatrix}
		-H_{\rho_x}^{(1)}(p_1^1, p_2^0) & -H_{\rho_x}^{(\rho)}(p_1^1, p_2^0) \\
		-H_{\rho_x}^{(1)}(p_1^1, \ddot p_2^0) & -H_{\rho_x}^{(\rho)}(p_1^1, \ddot p_2^0)
	\end{pmatrix}
	,
\end{align*}
where $H_{\rho_x}^{(1)}(\cdot, \cdot)$ and $H_{\rho_x}^{(\rho)}(\cdot, \cdot)$ are the partial derivatives of $H_{\rho_x}(\cdot, \cdot)$ with respect to the first argument and $\rho_x$, respectively.
Its determinant is given by
\begin{align*}
	|J_G(\vartheta_{w_1})| 
	=  H_{\rho_x}^{(1)}(p_1^1, p_2^0) H_{\rho_x}^{(1)}(p_1^1, \ddot p_2^0) \left( \frac{H_{\rho_x}^{(\rho)}(p_1^1, \ddot p_2^0)}{H_{\rho_x}^{(1)}(p_1^1, \ddot p_2^0)} - \frac{H_{\rho_x}^{(\rho)}(p_1^1, p_2^0)}{H_{\rho_x}^{(1)}(p_1^1, p_2^0)} \right),
\end{align*}
which is positive for any $p_2^0 > \ddot p_2^0$ and is negative for any $p_2^0 < \ddot p_2^0$ under Assumptions \ref{as:copula}(ii)--(iii) (see Lemma 4.1 of \citealpappendix{han2017identification}).
This implies that $J_G(\vartheta_{w_1})$ is of full rank provided $p_2^0 \neq \ddot p_2^0$.
Hence, because $\text{supp}[P_1^1 | X = x] \times (\underline{c}, \bar c)$ is a simply connected space, Lemma 4.2 in \citeappendix{han2017identification} implies the identification of $\vartheta_{w_1}$.

\paragraph{Proof of (iii).}
Noting that $\mathcal{L}_{\text{mul}}(w) > 0$ under Assumption \ref{as:complement}, $\lambda_x$ is identified by $\lambda_x = ( H_{\rho_x}(p_1^1, p_2^1) - \mathcal{L}^{(1, 1)}(w) ) / \mathcal{L}_{\text{mul}}(w)$.

\qed

\subsubsection{Proof of Theorem \ref{thm:unique}}

We provide the proof for $m^{(1, 0)}(x, p_1, p_2)$ only, as the proof for $m^{(0, 1)}(x, p_1, p_2)$ is analogous.
From the law of iterated expectations, 
\begin{align*}
	& \psi^{(1, 0)}(x, p_1, p_2) \\
	&= \bE [ I^{(1, 0)} Y_1| X = x, P_1^0 = p_1, P_2^1 = p_2] \\
	&= \bE [Y_1^{(1, 0)} | X = x, P_1^0 = p_1, P_2^1 = p_2, D = (1, 0)] \cdot \Pr[D = (1, 0) | X = x, P_1^0 = p_1, P_2^1 = p_2] \\
	&= \bE [Y_1^{(1, 0)} | X = x, P_1^0 = p_1, P_2^1 = p_2, V_1 \le p_1, V_2 > p_2] \cdot \Pr[V_1 \le p_1, V_2 > p_2 | X = x, P_1^0 = p_1, P_2^1 = p_2] \\
	&= \bE [Y_1^{(1, 0)} | X = x, V_1 \le p_1, V_2 > p_2] \cdot \Pr[V_1 \le p_1, V_2 > p_2 | X = x],
\end{align*}
where we used Assumption \ref{as:IV}(i) in the last equality.
Here, it holds that
\begin{align*}
	&\bE [Y_1^{(1, 0)} | X = x, V_1 \le p_1, V_2 > p_2] = \frac{1}{\Pr[V_1 \le p_1, V_2 > p_2 | X = x]} \int_{p_2}^1 \int_0^{p_1} m^{(1, 0)}(x, v_1, v_2) h(v_1, v_2 | x) \mathrm{d}v_1 \mathrm{d}v_2.
\end{align*}
As a result, the cross-partial differentiation of $\psi^{(1, 0)}(x, p_1, p_2)$ with respect to $p_1$ and $p_2$ leads to
\begin{align*}
	& \psi^{(1,0)}(x, p_1, p_2) =  \int_{p_2}^1 \int_0^{p_1} m^{(1, 0)}(x, v_1, v_2) h(v_1, v_2 | x) \mathrm{d}v_1 \mathrm{d}v_2 \\
	& \Longrightarrow \partial_{p_1 p_2}[\psi^{(1,0)}(x, p_1, p_2)] = - m^{(1, 0)}(x, p_1, p_2) h(p_1, p_2 | x),
\end{align*}
by the Leibniz integral rule, provided that $m^{(1, 0)}(x, \cdot, \cdot) h(\cdot, \cdot | x)$ is continuous.
Rearranging the above equation, we obtain the desired result.
\qed

\subsubsection{Proof of Theorem \ref{thm:multiple}}

We prove only the case of $m^{(0, 0)}(x, p_1^0, p_2^0)$, as the proofs for the other cases are similar.
From the law of iterated expectations,
\begin{align*}
	\psi^{(0, 0)} (x, \mathbf{p} )
	&= \bE [ I^{(0, 0)} Y_1 | X = x, \mathbf{P} = \mathbf{p} ]\\
	&= \bE [Y_1^{(0, 0)} | X = x, \mathbf{P} = \mathbf{p}, D = (0, 0)] \cdot \Pr[D = (0, 0) | X = x, \mathbf{P} = \mathbf{p} ].
\end{align*}
For notational simplicity, we write $\mathcal{V}_{\text{mul}} = \mathcal{V}_{\text{mul}}(\mathbf{p}) \coloneqq \{ (v_1, v_2) \in [0, 1]^2 : p_1^0 < v_1 \le p_1^1, p_2^0 < v_2 \le p_2^1 \}$, and $\mathcal{V}_{\text{uni}}^{(0, 0)} = \mathcal{V}_{\text{uni}}^{(0, 0)}(\mathbf{p}) \coloneqq \{ (v_1, v_2) \in [0, 1]^2 : p_1^0 < v_1, p_2^0 < v_2 \} \backslash \mathcal{V}_{\text{mul}}$.
As $\mathcal{V}_{\text{uni}}^{(0, 0)}$ and $\mathcal{V}_{\text{mul}}$ are disjoint, by Assumptions \ref{as:multiple} and \ref{as:IV}(i), it holds that 
\begin{align*}
	& \bE [Y_1^{(0, 0)} | X = x, \mathbf{P} = \mathbf{p}, D = (0, 0)]\\
	&= \bE [Y_1^{(0, 0)} | X = x, V \in \mathcal{V}_{\text{uni}}^{(0, 0)} \lor (V \in \mathcal{V}_{\text{mul}} \land \epsilon \le \lambda_x)]\\
	&= \bE [Y_1^{(0, 0)} | X = x, V \in \mathcal{V}_{\text{uni}}^{(0, 0)}] \cdot \frac{\Pr[V \in \mathcal{V}_{\text{uni}}^{(0, 0)} | X = x]}{\Pr[V \in \mathcal{V}_{\text{uni}}^{(0, 0)} \lor (V \in \mathcal{V}_{\text{mul}} \land \epsilon \le \lambda_x) | X = x]} \\
	&\quad + \bE [Y_1^{(0, 0)} | X = x, (V \in \mathcal{V}_{\text{mul}} \land \epsilon \le \lambda_x)] \cdot \frac{\Pr[(V \in \mathcal{V}_{\text{mul}} \land \epsilon \le \lambda_x) | X = x]}{\Pr[V \in \mathcal{V}_{\text{uni}}^{(0, 0)} \lor (V \in \mathcal{V}_{\text{mul}} \land \epsilon \le \lambda_x) | X = x]} \\
	&= \bE [Y_1^{(0, 0)} | X = x, V \in \mathcal{V}_{\text{uni}}^{(0, 0)} ] \cdot \frac{\Pr[V \in \mathcal{V}_{\text{uni}}^{(0, 0)} | X = x]}{\Pr[V \in \mathcal{V}_{\text{uni}}^{(0, 0)} | X = x] + \lambda_x \cdot \Pr[V \in \mathcal{V}_{\text{mul}} | X = x]} \\
	&\quad + \bE [Y_1^{(0, 0)} | X = x, V \in \mathcal{V}_{\text{mul}} ] \cdot \frac{\lambda_x \cdot \Pr[V \in \mathcal{V}_{\text{mul}} | X = x]}{\Pr[V \in \mathcal{V}_{\text{uni}}^{(0, 0)} | X = x] + \lambda_x \cdot \Pr[V \in \mathcal{V}_{\text{mul}} | X = x]}.
\end{align*}
Similarly, we also have 
\begin{align*}
	\Pr[D = (0, 0) | X = x, \mathbf{P} = \mathbf{p} ]
	= \Pr[V \in \mathcal{V}_{\text{uni}}^{(0, 0)} | X = x] + \lambda_x \cdot \Pr[V \in \mathcal{V}_{\text{mul}} | X = x].
\end{align*}
As a result, we obtain
\begin{align*}
	\psi^{(0, 0)} (x, \mathbf{p})
	&= \bE [Y_1^{(0, 0)} | X = x, V \in \mathcal{V}_{\text{uni}}^{(0, 0)}] \cdot \Pr[V \in \mathcal{V}_{\text{uni}}^{(0, 0)} | X = x] \\
	&\quad +  \lambda_x \cdot \bE [Y_1^{(0, 0)} | X = x, V \in \mathcal{V}_{\text{mul}}] \cdot \Pr[V \in \mathcal{V}_{\text{mul}} | X = x].
\end{align*}
Further, it holds that
\begin{align*}
	& \bE [Y_1^{(0, 0)} | X = x, V \in \mathcal{V}_{\text{uni}}^{(0, 0)} ] \cdot \Pr[V \in \mathcal{V}_{\text{uni}}^{(0, 0)} | X = x] \\
	&= \int_{p_2^0}^1 \int_{p_1^0}^1  m^{(0, 0)}(x, v_1, v_2) h(v_1, v_2 | x) \mathrm{d}v_1 \mathrm{d}v_2 -\int_{p_2^0}^{p_2^1} \int_{p_1^0}^{p_1^1}  m^{(0, 0)}(x, v_1, v_2) h(v_1, v_2 | x) \mathrm{d}v_1 \mathrm{d}v_2,
\end{align*}
and
\begin{align*}
	\bE [Y_1^{(0, 0)} | X = x, V \in \mathcal{V}_{\text{mul}} ] \cdot \Pr[V \in \mathcal{V}_{\text{mul}} | X = x]
	= \int_{p_2^0}^{p_2^1} \int_{p_1^0}^{p_1^1}  m^{(0, 0)}(x, v_1, v_2) h(v_1, v_2 | x) \mathrm{d}v_1 \mathrm{d}v_2.
\end{align*}
Hence, we have
\begin{align*}
	\psi^{(0, 0)} (x, \mathbf{p} )
	&= \int_{p_2^0}^1 \int_{p_1^0}^1 m^{(0, 0)}(x, v_1, v_2) h(v_1, v_2 | x) \mathrm{d}v_1 \mathrm{d}v_2  - (1 - \lambda_x) \int_{p_2^0}^{p_2^1} \int_{p_1^0}^{p_1^1}  m^{(0, 0)}(x, v_1, v_2) h(v_1, v_2 | x) \mathrm{d}v_1 \mathrm{d}v_2.
\end{align*}
Partially differentiating both sides with respect to $p_1^0$ and $p_2^0$ and rearranging the equation, the Leibniz integral rule and continuity of $m^{(0, 0)}(x, \cdot, \cdot) h(\cdot, \cdot | x)$ lead to the desired result.
\qed

\subsection{Proofs of the theorems in Section \ref{sec:estimation}}\label{sec:proof:estimation}

\subsubsection{Preparation}\label{subsubsec:prep}

In the following, for a positive integer $a$, $\mathbf{I}_a$ denotes the $a \times a$ identity matrix.
For positive integers $a_1$ and $a_2$, $\mathbf{0}_{a_1 \times a_2}$ and $\mathbf{0}_{a_1}$ denote the $a_1 \times a_2$ zero matrix and the $a_1 \times 1$ zero vector, respectively.
When $A$ is a square matrix, we use $\chi_\mathrm{max}(A)$ and $\chi_\mathrm{min}(A)$ to denote its largest and smallest eigenvalues, respectively.
We denote a symmetric generalized inverse of a matrix $A$ by $A^{-}$.
For a general matrix $A$, $\|A\| = \sqrt{\mathrm{tr}\{A^\top A\}}$ denotes its Frobenius norm, where $\mathrm{tr}\{ \cdot \}$ is the trace, and $\|A\|_2 = \sqrt{\chi_{\mathrm{max}}(A^\top A)}$ denotes its spectral norm.

\begin{assumption}\label{as:data}
	$\{\{(Y_{ji}, D_{ji}, W_{ji})\}_{j=1}^2\}_{i=1}^n$ are independent and identically distributed across $i$.
\end{assumption} 

\begin{assumption}
	\hfil
	\label{as:1stage}
	\begin{enumerate}[(i)]
		\item $\|\hat \theta - \theta^*\| = O_P(n^{-1/2})$ and $\max_{1 \le i \le n} |\hat P_{ji}^d - P_{ji}^d| = O_P(n^{-1/2})$ for $j = 1, 2$ and $d = 0, 1$.
		\item $H_\rho(\cdot, \cdot)$ is Lipschitz continuous with respect to $ \rho$ in the neighborhood of $\rho^*$.
		\item For all $\rho$ in the neighborhood of $\rho^*$, $h_\rho(p_1, p_2)$ is continuous in $(p_1, p_2)\in [0,1]^2$, bounded away from zero uniformly in $(p_1, p_2) \in \mathcal{S}^{(1,0)}$, and Lipschitz continuous in $\rho$.
	\end{enumerate}
\end{assumption}

Assumption \ref{as:data} is a standard and relatively weak condition for microeconomic applications.
Assumption \ref{as:1stage}(i) is a high-level condition; however, it is standard for a parametric ML estimation under mild regularity conditions, including the non-singularity of the Fisher information matrix at $\theta^*$.\footnote{
	The $\sqrt{n}$-consistency in Assumption \ref{as:1stage}(i) is a natural consequence of the identifiability of $\theta^*$.
	By Theorem 1 of \citetmain{rothenberg1971identification}, under weak regularity conditions, the information matrix becomes non-singular at $\theta^*$ if and only if $\theta^*$ is locally identified.
	Thus, the identification result for $\theta^*$ in Appendix \ref{subsec:ident_wo_inf} yields the non-singularity of the information matrix at $\theta^*$, which in turn leads to the $\sqrt{n}$-consistency.
}
Note that although Theorem \ref{thm:gameiden} relies on the identification-at-infinity argument to prove the identification of our game model, the ML estimator is not directly based on this argument (see Appendix \ref{subsec:ident_wo_inf}).
In fact, as demonstrated in the numerical analysis in Section \ref{sec:MC}, the game parameters can be estimated at the parametric rate (see Table \ref{table:ML}).
Assumption \ref{as:1stage}(iii) ensures that $\hat h$ is uniformly bounded away from zero and infinity with probability approaching one (w.p.a.1) in conjunction with (i).

Define
\begin{align*}
	\begin{array}{ll}
	\mathbf{Y}_1^{(1, 0)}  = (\tilde I_1^{(1, 0)}Y_{11}, \dots, \tilde I_n^{(1, 0)}Y_{1n})^\top, & \hat{\mathbf{Y}}_1^{(1, 0)} = (\hat I_1^{(1, 0)}Y_{11}, \dots, \hat I_n^{(1, 0)}Y_{1n})^\top\\
	R_{K,i}^{(1,0)}  = (\tilde I_i^{(1,0)} X_{1i}^\top, \tilde T_i^{(1,0)} b_K (P_{1i}^0, P_{2i}^1)^\top)^\top, & \hat R_{K,i}^{(1,0)} = (\hat I_i^{(1,0)} X_{1i}^\top, \hat T_i^{(1,0)} b_K (\hat P_{1i}^0, \hat P_{2i}^1)^\top)^\top \\
	\mathbf{R}_K^{(1, 0)} = (R_{K,1}^{(1,0)}, \dots, R_{K,n}^{(1,0)})^\top, & \hat{\mathbf{R}}_K^{(1,0)} = (\hat R_{K,1}^{(1,0)}, \dots, \hat R_{K,n}^{(1,0)})^\top.
	\end{array}
\end{align*}

Then, our estimator of $\delta^{(1, 0)} = (\beta_1^{(1, 0)\top}, \alpha^{(1,0)\top})^\top$ is written as \eqref{eq:fcoef10}:
\begin{align}
	\hat \delta^{(1, 0)} = \left( \hat \beta_1^{(1, 0)\top},  \hat \alpha^{(1, 0)\top} \right)^\top
	&\coloneqq \left[ \hat{\mathbf{R}}_K^{(1, 0)\top} \hat{\mathbf{R}}_K^{(1, 0)} \right]^{-} \hat{\mathbf{R}}_K^{(1, 0)\top} \hat{\mathbf{Y}}_1^{(1, 0)}, \label{eq:fcoef10}\\
	\tilde \delta^{(1, 0)}   = \left( \tilde \beta_1^{(1, 0)\top},  \tilde \alpha^{(1, 0)\top} \right)^\top
	& \coloneqq \left[ \mathbf{R}_K^{(1, 0)\top} \mathbf{R}_K^{(1, 0)} \right]^{-} \mathbf{R}_K^{(1, 0)\top} \mathbf{Y}_1^{(1, 0)}, \label{eq:infcoef10}
\end{align}
while the one in \eqref{eq:infcoef10}  is an ``infeasible'' estimator with the true parameters in the first stage being treated as known.
The infeasible estimators of $g^{(1, 0)}(p_1^0, p_2^1)$, $\bE [U_1^{(1, 0)} | V_1 = p_1^0, V_2 = p_2^1] $, and $m^{(1, 0)}(x, p_1^0, p_2^1)$ can be defined similarly, which we denote by $\tilde g^{(1, 0)}(p_1^0, p_2^1)$, $\tilde \bE_n[U_1^{(1, 0)} | V_1 = p_1^0, V_2 = p_2^1] $, and $\tilde m^{(1, 0)}(x, p_1^0, p_2^1)$, respectively.
Furthermore, define $\Psi^{(1, 0)}_K \coloneqq \bE \left[ R_K^{(1,0)} R_K^{(1,0)\top} \right]$, $\Psi^{(1, 0)}_{nK} \coloneqq \mathbf{R}_{K}^{(1, 0) \top} \mathbf{R}_K^{(1, 0)}/n$, $\hat \Psi^{(1, 0)}_{nK} \coloneqq \hat{\mathbf{R}}_{K}^{(1, 0) \top} \hat{\mathbf{R}}_K^{(1, 0)}/n$, and $\Sigma^{(1, 0)}_K \coloneqq \bE \left[\left( \tilde e^{(1,0)} \right)^2 R_K^{(1,0)} R_K^{(1,0)\top} \right]$.

\begin{assumption}
	\hfil
	\label{as:eigen}
	\begin{enumerate}[(i)]
		\item There exist positive constants $\underline{c}_\Psi$ and $\bar{c}_\Psi$ such that $0 < \underline{c}_\Psi \leq \chi_\mathrm{min}\left( \Psi^{(1, 0)}_K \right) \leq \chi_\mathrm{max}\left( \Psi^{(1, 0)}_K \right) \leq \bar{c}_\Psi < \infty$ uniformly in $K$.
		\item There exist positive constants $\underline{c}_\Sigma$ and $\bar{c}_\Sigma$ such that $0 < \underline{c}_\Sigma \leq \chi_\mathrm{min}\left( \Sigma^{(1, 0)}_K \right) \leq \chi_\mathrm{max}\left( \Sigma^{(1, 0)}_K \right) \leq \bar{c}_\Sigma < \infty$ uniformly in $K$.
	\end{enumerate}
\end{assumption}

\begin{assumption}\label{as:emom}
	$\bE [ ( e^{(1,0)} )^4 | W, D ]$ is bounded. 
\end{assumption}

Assumption \ref{as:eigen}(i) ensures the existence of the inverse matrices $[ \Psi^{(1, 0)}_{nK} ]^{-1}$ and $[ \hat \Psi^{(1, 0)}_{nK} ]^{-1}$ w.p.a.1.
Assumption \ref{as:emom} is introduced to conveniently derive the limiting distribution of our estimator.
Note that under this assumption, $\bE [ ( \tilde e^{(1,0)} )^4 | W, D ]$ is also bounded by the definition of $\tau_\varpi$. 

To state the next assumption, we introduce the following notation.
For a sufficiently smooth function $g(p_1, p_2)$ and a vector of non-negative integers $\mathbf{a} = (a_1, a_2)$, let $\partial ^{\mathbf{a}} g(p_1, p_2) \coloneqq \partial^{|\mathbf{a}|}g(p_1, p_2) / (\partial ^{a_1}p_1 \partial ^{a_2}p_2)$, where $|\mathbf{a}| = a_1 + a_2$. If $|\mathbf{a}| = 0$, then $\partial ^{\mathbf{a}} g(p_1, p_2) = g(p_1, p_2)$.

\begin{assumption}\label{as:sieve}
	For some integer $s \ge 2$, the functions $g^{(1,0)}$ and $b_K$ are at least $s$-times continuously differentiable, and there exists a sequence of vectors $\alpha^{(1,0)} \in \mathbb{R}^K$ such that $\sup_{(p_1, p_2) \in [0, 1]^2} | \partial ^{\mathbf{a}} g^{(1,0)}(p_1, p_2) -  \partial ^{\mathbf{a}} b_K(p_1, p_2)^\top\alpha^{(1,0)} | = O( K^{(|\mathbf{a}| - s)/2} )$.
\end{assumption}
Assumption \ref{as:sieve} restricts the smoothness of $g^{(1,0)}$ and choice of the basis functions $b_K$. For example, Lemma 2 in \citetappendix{holland2017penalized} shows that when $g^{(1,0)}$ is $s$-times continuously differentiable on $[0,1]^2$, Assumption \ref{as:sieve} is satisfied by tensor-product B-splines of order $r$ (degree $r-1$) for $r -2 \geq s$.
For slightly more refined results, when $g^{(1,0)}$ is in a H\"{o}lder space of smoothness $s$, B-splines, wavelets, and Cattaneo and Farrell's local polynomial partitioning series can satisfy Assumption \ref{as:sieve} (for details, see \citealpappendix{chen2018optimal}, Corollary 3.1, and \citealpappendix{cattaneo2013optimal}, Lemma A.2).

\begin{assumption} \label{as:rate}
	As $n \to \infty$,
	(i) $\zeta_0(K)\sqrt{(\log K)/n} \to 0$, and
	(ii) $\zeta_1(K)/\sqrt{n} \to 0$,
	where $\zeta_d(K) \coloneqq \max_{|\mathbf{a}| \leq d} \sup_{(p_1, p_2) \in [0,1]^2 }\| \partial ^{\mathbf{a}} b_K(p_1, p_2) \|$.
\end{assumption}

Assumption \ref{as:rate}(i) is used to prove the convergence of the matrix $\Psi^{(1, 0)}_{nK}$ to $\Psi^{(1, 0)}_K$, and Assumption \ref{as:rate}(ii) is additionally introduced to ensure the convergence of $\hat \Psi^{(1, 0)}_{nK}$ to $\Psi^{(1, 0)}_K$. 
The bound of $\zeta_0(K)$ is well known for several basis functions, which is typically $\zeta_0(K) = O(\sqrt{K})$ (e.g., \citealpappendix{chen2007large}; \citealpappendix{belloni2015some}).
By contrast, there are fewer readily available results for the bound of $\zeta_1(K)$.
For example, Cattaneo and Farrell's local polynomial partitioning series satisfies $\zeta_1(K) = O(K)$ (see \citealpappendix{cattaneo2013optimal}, Lemma A.1). 
In Appendix \ref{subsec:zeta}, we show that the tensor-product B-spline also satisfies $\zeta_1(K) = O(K)$.

Next, consider a generic random vector $Q \in \text{supp}[Q] \subset \mathbb{R}^{\mathrm{dim}(Q)}$ with a finite dimension $\mathrm{dim}(Q)$.
Denote the set of uniformly bounded continuous functions on $\mathcal{D}$ as $\mathcal{C}(\mathcal{D})$.
We define the linear operator $\mathcal{P}_{nK}$ that maps a given function $g \in \mathcal{C}(\text{supp}[Q])$ to the sieve space defined by $b_K$ as follows:
\begin{align*}
\begin{split}
	\mathcal{P}_{nK}g
	= b_K(\cdot, \cdot)^\top \mathbb{S}_K \left[ \Psi^{(1, 0)}_{nK} \right]^{-1} \frac{1}{n} \sum_{i = 1}^n R_{K,i}^{(1,0)} g(Q_i),
\end{split}
\end{align*}
where $\mathbb{S}_K \coloneqq (\mathbf{0}_{K \times \mathrm{dim}(X)}, \mathbf{I}_K)$.
The operator norm of $\mathcal{P}_{nK}$ restricted on $\mathcal{S}^{(1,0)}$ is defined as
\begin{align*}
	\| \mathcal{P}_{nK} \|_{\infty} 
	\coloneqq \sup\left\{ \sup_{(p_1, p_2) \in \mathcal{S}^{(1,0)}} \left| \left( \mathcal{P}_{nK} g \right) (p_1, p_2) \right| : g \in \mathcal{C}(\text{supp}[Q]), \sup_{q \in \text{supp}[Q]} |g(q)| = 1 \right\}.
\end{align*}

\begin{assumption} \label{as:opnorm}
	(i) $\| \mathcal{P}_{nK} \|_{\infty} = O_P(1)$. 
	(ii) $\zeta_0(K) \zeta_1(K) / \sqrt{n} = O(1)$.
\end{assumption}

\begin{assumption}\label{as:deriv}
	$\sup_{(p_1, p_2) \in \mathcal{S}^{(1,0)}}\left| \partial^{\mathbf{a}} b_K(p_1, p_2)^\top \alpha \right| = O(K^{|\mathbf{a}|/2}) \sup_{(p_1, p_2) \in \mathcal{S}^{(1,0)}} \left|  b_K(p_1, p_2)^\top \alpha \right|$ for any $\alpha \in \mathbb{R}^K$. 
\end{assumption}

Assumption \ref{as:opnorm}(i) is a stability condition for the projection operator $ \mathcal{P}_{nK}$.
\citetappendix{huang2003local} shows that a similar condition holds true for spline bases under some mild regularity conditions.
In addition, wavelets can satisfy such a condition, as shown in Theorem 5.2 in \citetappendix{chen2015optimal}.
For a direct verification of Assumption \ref{as:opnorm}(i) under a particular choice of basis functions, see Appendix \ref{subsec:opnorm}.
Assumption \ref{as:opnorm}(ii) is used to show that the operator norm of the ``feasible'' version of $\mathcal{P}_{nK}$ is also bounded in probability (see Lemma \ref{lem:opnorm}).
In the proof of Corollary 3.1 of \citetappendix{chen2018optimal}, it is shown that Assumption \ref{as:deriv} holds for splines and wavelets.

\begin{assumption}
	\hfil
	\label{as:rate2}
	\begin{enumerate}[(i)]
		\item There exist finite constants $c_b > 0$ and $\omega \geq 0$ such that $\| b_K(p_1, p_2) -  b_K(p'_1, p'_2)\| \leq c_b K^\omega \| (p_1,p_2)  - (p'_1, p'_2) \| $ for all $(p_1, p_2), (p_1', p_2') \in [0,1]^2$.
		\item $\zeta_0^2(K)\sqrt{(\log n) /n} \to 0$.
	\end{enumerate}
\end{assumption}

\paragraph{The estimation of $m^{(0, 0)}(x, p_1^0, p_2^1)$.}

Similarly as above, define $\hat{\mathbf{Y}}_1^{(0,0)} = (\hat I_1^{(0,0)} Y_{11}, \dots, \hat I_n^{(0,0)} Y_{1n})^\top$ and $\hat{\mathbf{R}}_K^{(0, 0)} = (\hat R_{K,1}^{(0,0)}, \dots, \hat R_{K,n}^{(0,0)})^\top$, where
\begin{align*}
	\hat R_{K,i}^{(0,0)} 
	\coloneqq (\hat I_i^{(0,0)} X_{1i}^\top, & \; \hat T_i^{(0,0)} \hat \lambda_{X_i} b_K(\hat P_{1i}^0, \hat P_{2i}^0)^\top, \; \hat T_i^{(0,0)} (1 - \hat \lambda_{X_i}) b_K(\hat P_{1i}^1,\hat P_{2i}^0)^\top, \\
	& \; \hat T_i^{(0,0)} (1 - \hat \lambda_{X_i}) b_K(\hat P_{1i}^0, \hat P_{2i}^1)^\top, \; \hat T_i^{(0,0)} (1 - \hat \lambda_{X_i}) b_K(\hat P_{1i}^1, \hat P_{2i}^1)^\top)^\top.
\end{align*}
Then, the estimator of $\delta^{(0, 0)} = ( \beta_1^{(0, 0)\top}, \alpha_1^{(0,0)\top}, \ldots , \alpha_4^{(0,0)\top} )^\top$ can be written as
\begin{align}\label{eq:coef00}
	\hat \delta^{(0, 0)} = \left( \hat \beta_1^{(0, 0)\top},  \hat \alpha_1^{(0, 0)\top}, \ldots, \hat \alpha_4^{(0, 0)\top} \right)^\top
	&\coloneqq \left[ \hat{\mathbf{R}}_K^{(0, 0)\top} \hat{\mathbf{R}}_K^{(0, 0)} \right]^{-} \hat{\mathbf{R}}_K^{(0, 0)\top} \hat{\mathbf{Y}}_1^{(0, 0)}.
\end{align}
Then, the estimator of $m^{(0, 0)}(x, p_1^0, p_2^1)$ can be obtained by \eqref{eq:m00}.
The standard deviation for this MTR estimator is given by
\begin{align}\label{eq:sd00}
	\sigma_{K}^{(0,0)}(p_1^0, p_2^1) &\coloneqq \frac{\sqrt{ \ddot{b}_K(p_1^0, p_2^1)^\top \mathbb{S}_{3K} \left[ \Psi_K^{(0,0)}\right]^{-1} \Sigma^{(0, 0)}_K  \left[ \Psi_K^{(0,0)}\right]^{-1} \mathbb{S}_{3K}^\top \ddot{b}_K(p_1^0, p_2^1)}}{h(p_1^0, p_2^1)},
\end{align}
where $\mathbb{S}_{3K} \coloneqq (\mathbf{0}_{K \times \mathrm{dim}(X)}, \mathbf{0}_{K \times 2K}, \mathbf{I}_K,  \mathbf{0}_{K \times K})$.
The definitions of the matrices $\Psi_K^{(0,0)}$ and $\Sigma_K^{(0,0)}$ should be clear from the context.

\paragraph{Additional notaions.}

For notational simplicity, when there is no confusion, we often suppress the superscript $(1,0)$ in the rest of this section.
Here, we introduce additional notations as follows.
Let $\mathbb{S}_a$ be a selection matrix of dimension $a \times (\mathrm{dim}(X) + K)$ such that $\mathbb{S}_a \delta^{(1,0)}$ is the corresponding $a \times 1$ subvector of $\delta^{(1,0)}$.
Specifically, we write $\mathbb{S}_X \delta^{(1, 0)} = \beta_1^{(1, 0)}$ and $\mathbb{S}_K \delta^{(1, 0)} = \alpha^{(1, 0)}$ with $\mathbb{S}_X = \mathbb{S}_{\mathrm{dim}(X)} = (\mathbf{I}_{\mathrm{dim}(X)}, \mathbf{0}_{\mathrm{dim}(X) \times K})$ and $\mathbb{S}_K = (\mathbf{0}_{K \times \mathrm{dim}(X)}, \mathbf{I}_K)$.

In addition, we introduce the following notations:
\begin{align*}
\begin{array}{ll}
	\tilde{\mathbf{e}}^{(1, 0)} = (\tilde e_1^{(1, 0)}, \dots, \tilde e_n^{(1, 0)})^\top 
	& \hat{\mathbf{e}}^{(1, 0)} = (\hat e_1^{(1, 0)}, \dots, \hat e_n^{(1, 0)})^\top, \;\; \text{where} \;\; \hat e^{(1, 0)} \coloneqq \tau_\varpi(\hat{\mathcal{L}}^{(1, 0)}) e^{(1, 0)}\\
	\mathbf{g}^{(1, 0)} = (g^{(1, 0)}( P_{11}^0, P_{21}^1 ), \dots, g^{(1,0)} ( P_{1n}^0, P_{2n}^1) )^\top
	& \hat{\mathbf{g}}^{(1, 0)} = ( g^{(1, 0)}( \hat P_{11}^0, \hat P_{21}^1 ), \dots, g^{(1,0)} ( \hat P_{1n}^0, \hat P_{2n}^1) )^\top\\
	\mathbf{X}_1^{(1, 0)} = (\tilde I_1^{(1, 0)} X_{11}, \dots, \tilde I_n^{(1, 0)} X_{1n})^\top
	& \hat{\mathbf{X}}_1^{(1, 0)} = (\hat I_1^{(1, 0)} X_{11}, \dots, \hat I_n^{(1, 0)} X_{1n})^\top\\
	\tilde{\mathbf{T}}^{(1,0)} = \text{diag}(\tilde T_1^{(1,0)}, \dots, \tilde T_n^{(1,0)})
	& \hat{\mathbf{T}}^{(1,0)} = \text{diag}(\hat T_1^{(1,0)}, \dots, \hat T_n^{(1,0)})\\
	\multicolumn{2}{l}{\breve{\mathbf{T}}^{(1,0)} = \text{diag}(\breve T_1^{(1,0)}, \dots, \breve T_n^{(1,0)}), \;\; \text{where} \;\; \breve T^{(1, 0)} \coloneqq \tau_\varpi(\hat{\mathcal{L}}^{(1, 0)}) T^{(1,0)}} \\
	\mathbf{b}_K^{(1, 0)} = (b_K(P_{11}^0, P_{21}^1), \dots, b_K(P_{1n}^0, P_{2n}^1))^\top
	& \hat{\mathbf{b}}_K^{(1, 0)} = (b_K(\hat P_{11}^0, \hat P_{21}^1), \dots, b_K(\hat P_{1n}^0, \hat P_{2n}^1))^\top\\
	\mathbf{u}^{(1,0)} = \mathbf{g}^{(1, 0)} - \mathbf{b}_K^{(1, 0)} \alpha^{(1,0)} 
	& \hat{\mathbf{u}}^{(1,0)} = \hat{\mathbf{g}}^{(1, 0)} - \hat{\mathbf{b}}_K^{(1, 0)} \alpha^{(1,0)}
\end{array}
\end{align*}
By definition, $\mathbf{R}_K^{(1, 0)} = (\mathbf{X}_1^{(1, 0)} , \tilde{\mathbf{T}}^{(1,0)} \mathbf{b}_K^{(1, 0)})$ and $\hat{\mathbf{R}}_K^{(1, 0)} = (\hat{\mathbf{X}}_1^{(1, 0)}, \hat{\mathbf{T}}^{(1,0)} \hat{\mathbf{b}}_K^{(1, 0)})$.
By \eqref{eq:plm2}, the infeasible estimator $\mathbb{S}_a \tilde \delta^{(1, 0)}$ defined in \eqref{eq:infcoef10} can be decomposed as follows:
\begin{align}\label{eq:infdecomp1}
\begin{split}
	\mathbb{S}_a \left( \tilde \delta^{(1, 0)} - \delta^{(1, 0)} \right) 
	& = \mathbb{S}_a \Psi_{nK}^{-1} \mathbf{R}_K^\top \tilde{\mathbf{T}} \mathbf{u} / n + \mathbb{S}_a \Psi_{nK}^{-1} \mathbf{R}_K^\top \tilde{\mathbf{e}} / n.
\end{split}
\end{align}
Next, by \eqref{eq:plm1}, we can write
\begin{align*}
	\hat I^{(1,0)} Y_1
	& = \tau_\varpi(\hat{\mathcal{L}}^{(1, 0)}) \left( I^{(1, 0)} X_{1}^\top \beta_1^{(1, 0)} + T^{(1, 0)} g^{(1, 0)}(P_1^0, P_2^1) + e^{(1, 0)} \right) \\
	& = \hat R_{K}^{(1,0)\top} \delta^{(1,0)} + \breve T^{(1,0)} \left( g^{(1, 0)}(P_1^0, P_2^1) - g^{(1, 0)}(\hat P_1^0, \hat P_2^1) \right) \\
	& \quad + \left( \breve T^{(1, 0)} - \hat T^{(1,0)} \right) g^{(1, 0)}(\hat P_1^0, \hat P_2^1) + \hat T^{(1,0)} \left( g^{(1, 0)}(\hat P_1^0, \hat P_2^1) - b_K(\hat P_1^0, \hat P_2^1)^\top \alpha^{(1,0)} \right) + \hat e^{(1, 0)}.
\end{align*}
Thus, the feasible estimator $\mathbb{S}_a \hat \delta^{(1, 0)}$ defined in \eqref{eq:fcoef10} can be decomposed as follows:
\begin{align}\label{eq:fdecomp1}
\begin{split}
	\mathbb{S}_a \left( \hat \delta^{(1, 0)} - \delta^{(1, 0)} \right)  
	&= \mathbb{S}_a \hat \Psi_{nK}^{-1} \hat{\mathbf{R}}_K^\top \breve{\mathbf{T}} \left(\mathbf{g}^{(1, 0)} - \hat{\mathbf{g}}^{(1, 0)} \right) / n + \mathbb{S}_a \hat \Psi_{nK}^{-1} \hat{\mathbf{R}}_K^\top \left( \breve{\mathbf{T}} - \hat{\mathbf{T}} \right) \hat{\mathbf{g}}^{(1, 0)} / n\\
	& \quad + \mathbb{S}_a \hat \Psi_{nK}^{-1} \hat{\mathbf{R}}_K^\top \hat{\mathbf{T}} \hat{\mathbf{u}} / n + \mathbb{S}_a \hat \Psi_{nK}^{-1} \hat{\mathbf{R}}_K^\top \hat{\mathbf{e}} / n.
\end{split}
\end{align}

\subsubsection{Proof of Theorem \ref{thm:normal1}}

Let
\begin{align}\label{eq:sd10}
	\sigma_K^{(1,0)}(p_1^0, p_2^1) \coloneqq \frac{\sqrt{ \ddot{b}_K(p_1^0, p_2^1)^\top \mathbb{S}_K \left[ \Psi_K^{(1,0)} \right]^{-1} \Sigma^{(1, 0)}_K \left[ \Psi_K^{(1,0)} \right]^{-1} \mathbb{S}_K^\top \ddot{b}_K(p_1^0, p_2^1)}}{h(p_1^0, p_2^1)}.
\end{align}
We first show that there exists a constant $0 < c_\sigma < \infty$ such that for a given $(p_1^0, p_2^1) \in \mathcal{S}^{(1,0)}$,
\begin{align}\label{eq:sigmalbound}
	\sigma_K^{(1,0)}(p_1^0, p_2^1) \geq c_\sigma \cdot \| \ddot{b}_K(p_1^0, p_2^1) \|.
\end{align}
By Assumptions \ref{as:1stage}(iii) and \ref{as:eigen}, noting that $\mathbb{S}_K \mathbb{S}_K^\top = \mathbf{I}_K$, we observe
\begin{align*}
	\left( \sigma_K^{(1,0)}(p_1^0, p_2^1) \right)^2 
	& =  \frac{1}{h^2(p_1^0, p_2^1)} \ddot{b}_K(p_1^0, p_2^1)^\top \mathbb{S}_K \Psi_K^{-1} \Sigma_K \Psi_K^{-1}\mathbb{S}_K^\top \ddot{b}_K(p_1^0, p_2^1) \\
	& \geq \underbrace{\frac{\underline{c}_\Sigma}{\bar{c}^2_\Psi \cdot h^2(p_1^0, p_2^1)}}_{> 0} \cdot \| \ddot{b}_K(p_1^0, p_2^1) \|^2,
\end{align*}
which implies \eqref{eq:sigmalbound}.
Hence, $K / \sigma_K^{(1,0)}(p_1^0, p_2^1) \to 0$ under the assumption $K / \| \ddot{b}_K(p_1^0, p_2^1) \| \to 0$.

\paragraph{Proof of (i).}

By the definition of the infeasible estimator $\tilde m^{(1, 0)}(x, p_1^0, p_2^1)$, we have
\begin{align*}
	& \tilde m^{(1, 0)}(x, p_1^0, p_2^1)  -  m^{(1, 0)}(x, p_1^0, p_2^1) \\
	& = x_1^\top \left( \tilde \beta_1^{(1, 0)} - \beta_1^{(1, 0)}\right) - \frac{1}{h(p_1^0, p_2^1)} \ddot{b}_K(p_1^0, p_2^1)^\top \left( \tilde \alpha^{(1, 0)} - \alpha^{(1, 0)} \right) +  \frac{1}{h(p_1^0, p_2^1)}\left( \partial_{p_1^0 p_2^1}  [g^{(1, 0)}(p_1^0, p_2^1)] - \ddot{b}_K(p_1^0, p_2^1)^\top \alpha^{(1, 0)} \right) \\
	& = - \frac{1}{h(p_1^0, p_2^1)} \ddot{b}_K(p_1^0, p_2^1)^\top \left( \tilde \alpha^{(1, 0)} - \alpha^{(1, 0)} \right) +  O_P(n^{-1/2}) + O(K^{(2 - s)/2}) \\
	& = A_{1n} + A_{2n} + O_P(n^{-1/2}),
\end{align*}
by Lemma \ref{lem:parametric}(i), and Assumption \ref{as:sieve}, where
\begin{align*}
	A_{1n} \coloneqq - \frac{1}{h(p_1^0, p_2^1)} \ddot{b}_K(p_1^0, p_2^1)^\top \mathbb{S}_K \Psi_{nK}^{-1} \mathbf{R}_K^\top \tilde{\mathbf{T}} \mathbf{u} / n, 
	\quad 
	A_{2n} \coloneqq - \frac{1}{h(p_1^0, p_2^1)} \ddot{b}_K(p_1^0, p_2^1)^\top \mathbb{S}_K \Psi_{nK}^{-1} \mathbf{R}_K^\top \tilde{\mathbf{e}}/n.
\end{align*}
By Assumptions \ref{as:1stage}(iii), \ref{as:sieve}, \ref{as:opnorm}, and \ref{as:deriv}, we have
\begin{align}\label{eq:A1n}
\begin{split}
	|A_{1n} | 
	\leq O(K) \cdot \|\mathcal{P}_{nK}\|_\infty \cdot O(K^{-s/2})
	= O_P(K^{(2-s)/2}).
\end{split}
\end{align}
Define
\begin{align*}
	A'_{2n} & \coloneqq - \frac{1}{h(p_1^0, p_2^1)} \ddot{b}_K(p_1^0, p_2^1)^\top \mathbb{S}_K \Psi_K^{-1} \mathbf{R}_K^\top \tilde{\mathbf{e}} / n.
\end{align*}
It is easy to see that $| A_{2n} - A'_{2n} | \le O(1) \cdot \| \ddot{b}_K(p_1^0, p_2^1) \mathbb{S}_K (\Psi_{nK}^{-1} - \Psi_{K}^{-1}) \mathbf{R}_K^\top \tilde{\mathbf{e}} / n \| = \| \ddot{b}_K(p_1^0, p_2^1) \| \cdot  o_P(n^{-1/2})$ by \eqref{eq:eigenPsi}, Lemma \ref{lem:matLLN}(iii) and Assumption \ref{as:1stage}(iii).
Thus, by \eqref{eq:sigmalbound}, we obtain
\begin{align*}
	\frac{\sqrt{n} \left( \tilde m^{(1, 0)}(x, p_1^0, p_2^1)  -  m^{(1, 0)}(x, p_1^0, p_2^1) \right)}{\sigma_K^{(1,0)}(p_1^0, p_2^1)} 
	=  \frac{\sqrt{n} ( A_{1n} + A_{2n})}{\sigma_K^{(1,0)}(p_1^0, p_2^1) } + o_P(1)
	=  \frac{\sqrt{n} A'_{2n}}{\sigma_K^{(1,0)}(p_1^0, p_2^1) } + o_P(1).
\end{align*}

We now show the asymptotic normality of $\sqrt{n} A'_{2n} / \sigma_K^{(1,0)}(p_1^0, p_2^1)$.
Let $\xi_i \coloneqq - \Pi_K(p_1^0, p_2^1) R_{K,i} \tilde e_i / \sqrt{n}$, where
\begin{align*}
	\Pi_K(p_1^0, p_2^1) \coloneqq \left[ \sigma_K^{(1,0)}(p_1^0, p_2^1) \cdot h(p_1^0, p_2^1) \right]^{-1}\ddot{b}_K(p_1^0, p_2^1)^\top  \mathbb{S}_K \Psi_K^{-1},
\end{align*}
so that $\sum_{i = 1}^n \xi_i =  \sqrt{n} A'_{2n} / \sigma_K^{(1,0)}(p_1^0, p_2^1)$.
By construction, $\bE [\xi_i] = 0$ and $\bE [\xi_i^2] = n^{-1}$ hold.
Then, Assumption \ref{as:emom} and the law of iterated expectations yield
\begin{align*}
	\bE \left[ \xi_i^4 \right]
	& = n^{-2} \bE \left[ \Pi_K(p_1^0, p_2^1) R_{K,i} R_{K,i}^\top \Pi_K(p_1^0, p_2^1)^\top \Pi_K(p_1^0, p_2^1) R_{K,i} 	R_{K,i}^\top \Pi_K(p_1^0, p_2^1)^\top \bE \left[ \tilde e_i^4 \middle| W_i, D_i \right] \right] \\
	& \leq O(n^{-2} ) \cdot \bE \left[ \Pi_K(p_1^0, p_2^1) R_{K,i} R_{K,i}^\top \Pi_K(p_1^0, p_2^1)^\top \Pi_K(p_1^0, p_2^1) R_{K,i} R_{K,i}^\top \Pi_K(p_1^0, p_2^1)^\top \right] \\
		& \leq O(n^{-2} ) \cdot \chi_\mathrm{max}\left(\Pi_K(p_1^0, p_2^1)^\top \Pi_K(p_1^0, p_2^1)\right) \bE \left[ \mathrm{tr}\left\{ R_{K,i} R_{K,i}^\top \Pi_K(p_1^0, p_2^1)^\top \Pi_K(p_1^0, p_2^1) R_{K,i} R_{K,i}^\top \right\} \right] \\
	& \leq O(n^{-2} ) \cdot \left[ \chi_\mathrm{max}\left(\Pi_K(p_1^0, p_2^1)^\top \Pi_K(p_1^0, p_2^1)\right) \right]^2 \bE \left[ \mathrm{tr}\left\{ R_{K,i} R_{K,i}^\top R_{K,i} R_{K,i}^\top \right\} \right] \\
	& = O(\zeta^2_0(K) K / n^2 ) \cdot \left[ \chi_\mathrm{max}\left(\Pi_K(p_1^0, p_2^1)^\top \Pi_K(p_1^0, p_2^1)\right) \right]^2,
\end{align*}
where the last equality follows from $\bE [\mathrm{tr}\{ R_{K,i} R_{K,i}^\top R_{K,i} R_{K,i}^\top \}] \le \zeta^2_0(K) \mathrm{tr}\{ \bE [ R_{K,i} R_{K,i}^\top ] \} = O(\zeta^2_0(K) K)$ under Assumption \ref{as:eigen}(i).
Since \eqref{eq:sigmalbound} and Assumption \ref{as:1stage}(iii) imply that
\begin{align*}
	\chi_\mathrm{max}\left(\Pi_K(p_1^0, p_2^1)^\top \Pi_K(p_1^0, p_2^1)\right) 
	& \leq \mathrm{tr}\left\{ \Pi_K(p_1^0, p_2^1)^\top \Pi_K(p_1^0, p_2^1) \right\} \\
	& \leq O(1)\cdot \| \ddot{b}_K(p_1^0, p_2^1) \|^{-2} \cdot \ddot{b}_K(p_1^0, p_2^1)^\top  \mathbb{S}_K \Psi_K^{-2} \mathbb{S}_K^\top \ddot{b}_K(p_1^0, p_2^1) = O(1),
\end{align*}
we have $\sum_{i=1}^n \bE \left[ \xi_i^4 \right] = O(\zeta^2_0(K) K / n ) = o(1)$.
Hence, result (i) follows from Lyapunov's central limit theorem.

\paragraph{Proof of  (ii).}
From the definition of the feasible estimator $\hat m^{(1, 0)}(x, p_1^0, p_2^1)$, we have
\begin{align*}
	&  \hat m^{(1, 0)}(x, p_1^0, p_2^1)  -  m^{(1, 0)}(x, p_1^0, p_2^1) \\
	& = x_1^\top \left( \hat \beta_1^{(1, 0)} - \beta_1^{(1, 0)}\right) - \frac{1}{\hat h(p_1^0, p_2^1)} \ddot{b}_K(p_1^0, p_2^1)^\top \left( \hat \alpha^{(1, 0)} - \alpha^{(1, 0)} \right) \\
	& \quad + \frac{1}{\hat h(p_1^0, p_2^1)}\left( \partial_{p_1^0 p_2^1}  [g^{(1, 0)}(p_1^0, p_2^1)] - \ddot{b}_K(p_1^0, p_2^1)^\top \alpha^{(1, 0)} \right) + \left( \frac{1}{ h(p_1^0, p_2^1)} - \frac{1}{\hat h(p_1^0, p_2^1)} \right)  \partial_{p_1^0 p_2^1}  [g^{(1, 0)}(p_1^0, p_2^1)]  \\
	& = - \frac{1}{\hat h(p_1^0, p_2^1)} \ddot{b}_K(p_1^0, p_2^1)^\top \left( \hat \alpha^{(1, 0)} - \alpha^{(1, 0)} \right) +  O_P(n^{-1/2})  + O_P(K^{(2 - s)/2}) \\
	& = C_{1n} + C_{2n} + C_{3n} + C_{4n} + O_P(n^{-1/2}),
\end{align*}
by \eqref{eq:hdif}, Lemma \ref{lem:parametric}(iii), and Assumption \ref{as:sieve}, where
\begin{align*}
	C_{1n} & \coloneqq - \frac{1}{\hat h(p_1^0, p_2^1)} \ddot{b}_K(p_1^0, p_2^1)^\top  \mathbb{S}_K \hat \Psi_{nK}^{-1} \hat{\mathbf{R}}_K^\top \breve{\mathbf{T}} \left( \mathbf{g}^{(1,0)} - \hat{\mathbf{g}}^{(1,0)}\right)/n, \\
	C_{2n} & \coloneqq  - \frac{1}{\hat h(p_1^0, p_2^1)} \ddot{b}_K(p_1^0, p_2^1)^\top  \mathbb{S}_K \hat \Psi_{nK}^{-1} \hat{\mathbf{R}}_K^\top \left(\breve{\mathbf{T}} - \hat{\mathbf{T}} \right) \hat{\mathbf{g}}^{(1,0)} / n, \\
	C_{3n} & \coloneqq - \frac{1}{\hat h(p_1^0, p_2^1)} \ddot{b}_K(p_1^0, p_2^1)^\top  \mathbb{S}_K \hat \Psi_{nK}^{-1} \hat{\mathbf{R}}_K^\top \hat{\mathbf{T}} \hat{\mathbf{u}}/n, \\
	C_{4n} & \coloneqq - \frac{1}{\hat h(p_1^0, p_2^1)} \ddot{b}_K(p_1^0, p_2^1)^\top  \mathbb{S}_K \hat \Psi_{nK}^{-1} \hat{\mathbf{R}}_K^\top \hat{\mathbf{e}}/n.
\end{align*}
The fact that $1/\hat h(p_1^0, p_2^1) = O_P(1)$ and Assumption \ref{as:deriv} imply that 
\begin{align*}
	|C_{1n}|
	&\le O_P(K) \cdot \sup_{(p_1,p_2) \in \mathcal{S}^{(1,0)}} \left| b_K(p_1, p_2)^\top  \mathbb{S}_K \hat \Psi_{nK}^{-1} \hat{\mathbf{R}}_K^\top \breve{\mathbf{T}} \left( \mathbf{g}^{(1,0)} - \hat{\mathbf{g}}^{(1,0)}\right)/n \right| = O_P(K/\sqrt{n}),
\end{align*}
where the last equality follows from \eqref{eq:supgdif}.
Analogously, we can observe that $|C_{2n}| = O_P(K/\sqrt{n})$.
In addition, the same argument as in \eqref{eq:A1n} implies that $| C_{3n} | = O_P(K^{(2-s)/2})$ by Lemma \ref{lem:opnorm}.
Further, 
\begin{align*}
	C_{4n} 
	&= - \frac{1}{h(p_1^0, p_2^1)} \ddot{b}_K(p_1^0, p_2^1)^\top \mathbb{S}_K \Psi_{nK}^{-1} \mathbf{R}_K^\top \tilde{\mathbf{e}}/n \\
	& \quad - \frac{1}{\hat h(p_1^0, p_2^1)} \left( \ddot{b}_K(p_1^0, p_2^1)^\top \mathbb{S}_K \hat{\Psi}_{nK}^{-1} \hat{\mathbf{R}}_K^\top \hat{\mathbf{e}}/n - \ddot{b}_K(p_1^0, p_2^1)^\top  \mathbb{S}_K \Psi_{nK}^{-1} \mathbf{R}_K^\top \hat{\mathbf{e}}/n \right)  \\
	& \quad - \frac{1}{\hat h(p_1^0, p_2^1)} \ddot{b}_K(p_1^0, p_2^1)^\top  \mathbb{S}_K \Psi_{nK}^{-1} \mathbf{R}_K^\top \left( \hat{\mathbf{e}} - \tilde{\mathbf{e}} \right) / n - \left( \frac{1}{\hat h(p_1^0, p_2^1)} - \frac{1}{h(p_1^0, p_2^1)} \right) \ddot{b}_K(p_1^0, p_2^1)^\top \mathbb{S}_K \Psi_{nK}^{-1} \mathbf{R}_K^\top \tilde{\mathbf{e}}/n \\
	&= \underbrace{- \frac{1}{ h(p_1^0, p_2^1)} \ddot{b}_K(p_1^0, p_2^1)^\top \mathbb{S}_K \Psi_{nK}^{-1} \mathbf{R}_K^\top \tilde{\mathbf{e}}/n}_{= A_{2n}} + \| \ddot{b}_K(p_1^0, p_2^1) \| \cdot o_P(n^{-1/2}) + \| \ddot{b}_K(p_1^0, p_2^1) \| \cdot O_P(n^{-1}),
\end{align*}
by Lemma \ref{lem:matLLN}(iii), \eqref{eq:meanval1}, $\max_{1 \le i \le n} |\hat{\mathcal{L}}_i^{(1, 0)} - \mathcal{L}_i^{(1, 0)}| = O_P(n^{-1/2})$, and \eqref{eq:hdif}.

Summarizing these results, we obtain
\begin{align*}
	\frac{\sqrt{n} \left( \hat m^{(1, 0)}(x, p_1^0, p_2^1)  -  m^{(1, 0)}(x, p_1^0, p_2^1) \right)}{\sigma_K^{(1,0)}(p_1^0, p_2^1)} 
	& =  \frac{\sqrt{n} ( C_{1n} + C_{2n} + C_{3n} + C_{4n})}{\sigma_K^{(1,0)}(p_1^0, p_2^1) } + o_P(1) \\
	& =  \frac{\sqrt{n} A_{2n}}{\sigma_K^{(1,0)}(p_1^0, p_2^1) } + o_P(1),
\end{align*}
by \eqref{eq:sigmalbound}.
Then, the remaining part of the proof follows by the same argument as in result (i). \qed

\subsection{Lemmas}\label{subsec:lemmas}

\begin{lemma}\label{lem:matLLN}
	Suppose that Assumptions \ref{as:gamemodel}, \ref{as:linear}, \ref{as:data}, \ref{as:1stage}, \ref{as:eigen}(i), and \ref{as:rate} hold.
	Then, we have 
	\begin{align*}
	\renewcommand{\arraystretch}{2}
	\begin{array}{cl}
	\text{(i)}  & \left\| \Psi_{nK}^{(1,0)}  - \Psi_K^{(1,0)} \right\|_2 = O_P (\zeta_0(K)\sqrt{( \log K)/ n}) = o_P(1), \\
	\text{(ii)} & \left\| \hat \Psi^{(1, 0)}_{nK}  - \Psi_K^{(1,0)} \right\|_2 = O_P (\zeta_0(K)\sqrt{( \log K)/ n}) + O_P(\zeta_1(K)/ \sqrt{n})  = o_P(1), \\
	\text{(iii)} & \left\| \left[ \Psi_{nK}^{(1,0)}  \right]^{-1} - \left[ \Psi_K^{(1,0)} \right]^{-1} \right\|_2 = O_P (\zeta_0(K)\sqrt{( \log K)/ n})  = o_P(1), \\
	\text{(iv)} & \left\| \left[\hat \Psi_{nK}^{(1,0)}\right]^{-1} - \left[ \Psi_K^{(1,0)} \right]^{-1}  \right\|_2 = O_P (\zeta_0(K)\sqrt{( \log K)/ n}) + O_P(\zeta_1(K)/ \sqrt{n})  = o_P(1).\\
	\end{array}
	\renewcommand{\arraystretch}{1}
	\end{align*}
\end{lemma}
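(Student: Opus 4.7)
\medskip
\noindent\textbf{Proof proposal for Lemma \ref{lem:matLLN}.}
The plan is to establish (i) by a standard matrix concentration argument, deduce (ii) by perturbing the design matrix through the first-stage estimation error, and then obtain (iii) and (iv) as immediate consequences via the resolvent identity.

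\smallskip
\noindent\emph{Step 1: Part (i).} I would write
\[
\Psi_{nK}^{(1,0)} - \Psi_K^{(1,0)} = \frac{1}{n}\sum_{i=1}^{n}\left( R_{K,i}^{(1,0)} R_{K,i}^{(1,0)\top} - E\bigl[ R_{K,i}^{(1,0)} R_{K,i}^{(1,0)\top} \bigr] \right)
\]
and apply a matrix Rudelson/Bernstein-type inequality (as in Tropp, 2012, or the version used in \citetmain{belloni2015some}). The key ingredient is the almost-sure bound $\|R_{K,i}^{(1,0)}\| \leq C\,\zeta_0(K)$, which is immediate from the definition of $\zeta_0(K)$, the boundedness of $X_1$, $\tilde I^{(1,0)}$, and $\tilde T^{(1,0)}$ (the latter because $\tau_\varpi \cdot \mathcal{L}^{-1}$ is bounded by $1/\varpi$). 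Combined with $\chi_{\max}(\Psi_K^{(1,0)})\leq \bar c_\Psi$ from Assumption \ref{as:eigen}(i), matrix concentration yields the spectral-norm bound $O_P(\zeta_0(K)\sqrt{(\log K)/n})$, which is $o_P(1)$ by Assumption \ref{as:rate}(i).

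\smallskip
\noindent\emph{Step 2: Part (ii).} I would split $\hat\Psi_{nK}^{(1,0)} - \Psi_K^{(1,0)} = (\hat\Psi_{nK}^{(1,0)} - \Psi_{nK}^{(1,0)}) + (\Psi_{nK}^{(1,0)} - \Psi_K^{(1,0)})$ and handle only the first difference, since the second is controlled by part (i). For the first, I would uniformly control $\hat R_{K,i}^{(1,0)} - R_{K,i}^{(1,0)}$ in two pieces. The $X_1$-block depends on $\hat I_i^{(1,0)} - \tilde I_i^{(1,0)} = \left( \tau_\varpi(\hat{\mathcal{L}}_i^{(1,0)}) - \tau_\varpi(\mathcal{L}_i^{(1,0)}) \right) I_i^{(1,0)}$, which is $O_P(n^{-1/2})$ uniformly in $i$ by the Lipschitz property of $\tau_\varpi$ together with Assumption \ref{as:1stage}(i). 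The $b_K$-block involves both the weight difference $\hat T_i - \tilde T_i$, of order $O_P(n^{-1/2})$ by the same argument with the additional uniform lower bound $\hat{\mathcal{L}}_i^{(1,0)}\geq \varpi/2$ w.p.a.1, and the basis perturbation $b_K(\hat P_{1i}^0, \hat P_{2i}^1) - b_K(P_{1i}^0, P_{2i}^1)$. By the mean value theorem and the definition of $\zeta_1(K)$, the latter is bounded by $\zeta_1(K) \max_i \|(\hat P_{1i}^0,\hat P_{2i}^1)-(P_{1i}^0,P_{2i}^1)\| = O_P(\zeta_1(K)/\sqrt{n})$, using Assumption \ref{as:1stage}(i). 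Assembling these pieces and bounding the spectral norm by the Frobenius norm, I would get $\|\hat\Psi_{nK}^{(1,0)} - \Psi_{nK}^{(1,0)}\|_2 = O_P(\zeta_1(K)/\sqrt{n}) + O_P(n^{-1/2})$, which together with part (i) gives the claim; the $o_P(1)$ conclusion follows from Assumption \ref{as:rate}(ii).

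\smallskip
\noindent\emph{Step 3: Parts (iii) and (iv).} I would apply the resolvent identity $A^{-1} - B^{-1} = -A^{-1}(A-B)B^{-1}$ with $B = \Psi_K^{(1,0)}$. Assumption \ref{as:eigen}(i) gives $\|\Psi_K^{(1,0)-1}\|_2 \leq 1/\underline c_\Psi$, and together with the $o_P(1)$ results in (i)-(ii) and Weyl's inequality, $\chi_{\min}(\Psi_{nK}^{(1,0)})\geq \underline c_\Psi /2$ and $\chi_{\min}(\hat\Psi_{nK}^{(1,0)})\geq \underline c_\Psi/2$ w.p.a.1, so both inverse norms are $O_P(1)$. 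Multiplying through yields (iii) and (iv) at the respective rates.

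\smallskip
\noindent\emph{Main obstacle.} The technically delicate step is Step 2: the $\zeta_1(K)/\sqrt n$ rate is tight only if one controls $b_K(\hat P) - b_K(P)$ via the derivative bound $\zeta_1(K)$ rather than the cruder Lipschitz constant $c_b K^\omega$ from Assumption \ref{as:rate2}(i), and one must verify that the mean-value expansion combined with the \emph{uniform} first-stage rate $\max_i |\hat P_{ji}^d - P_{ji}^d| = O_P(n^{-1/2})$ propagates correctly through the Gram matrix average without picking up an extra factor of $\sqrt K$. Handling the weight terms $\tau_\varpi$ and the reciprocal $1/\hat{\mathcal{L}}_i^{(1,0)}$ uniformly over $i$ also requires care, and is why the smoothed indicator $\tau_\varpi$ was introduced in the construction of $\tilde I^{(1,0)}$.
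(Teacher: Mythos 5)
Your proposal follows essentially the same route as the paper's proof: matrix Bernstein (Tropp) for part (i), the triangle-inequality split $\hat\Psi_{nK}^{(1,0)}-\Psi_K^{(1,0)}=(\hat\Psi_{nK}^{(1,0)}-\Psi_{nK}^{(1,0)})+(\Psi_{nK}^{(1,0)}-\Psi_K^{(1,0)})$ with a mean-value expansion of $b_K$ and the Lipschitz/boundedness properties of $\tau_\varpi$ for part (ii), and the identity $A^{-1}-B^{-1}=A^{-1}(B-A)B^{-1}$ with the w.p.a.1 eigenvalue bounds for parts (iii) and (iv). The one step you flag but leave open --- avoiding an extra $\sqrt{K}$ in the cross term $\|\mathbf{R}_K^\top(\hat{\mathbf{R}}_K-\mathbf{R}_K)/n\|$ --- is resolved in the paper by writing its square as $\mathrm{tr}\{(\hat{\mathbf{R}}_K-\mathbf{R}_K)^\top\mathbf{R}_K\mathbf{R}_K^\top(\hat{\mathbf{R}}_K-\mathbf{R}_K)\}/n^2\le\chi_{\max}(\Psi_{nK}^{(1,0)})\cdot\|(\hat{\mathbf{R}}_K-\mathbf{R}_K)/\sqrt{n}\|^2$ and invoking $\chi_{\max}(\Psi_{nK}^{(1,0)})=O_P(1)$ from part (i) and Assumption \ref{as:eigen}(i), rather than the crude Cauchy--Schwarz bound $\|\mathbf{R}_K/\sqrt{n}\|=O_P(\sqrt{K})$; with that substitution your argument is complete.
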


\begin{proof}
	(i) Let $\Xi_i \coloneqq (R_{K,i}^{(1,0)} R_{K,i}^{(1,0)\top} - \bE [R_{K,i}^{(1,0)} R_{K,i}^{(1,0)\top}])/n$.
	By Bernstein's inequality for random matrices (\citealpappendix{tropp2012user}, Theorem 1.6), it holds that
	\begin{align*}
	\Pr \left( \left\|\sum_{i=1}^n \Xi_i  \right\|_2 \geq t_n \right) \leq \exp\left( \log(2\mathrm{dim}(X) + 2K) + \frac{-t_n^2 / 2}{\sigma_n^2 + t_n \cdot r_n/3}\right), 
	\end{align*}
	for any $t_n \ge 0$, where $r_n$ is any non-negative value such that $\max_{1 \le i \le n}\|\Xi_i\|_2 \le r_n$, and $\sigma^2_n \coloneqq \|\sum_{i=1}^n \bE [\Xi_i \Xi_i^\top] \|_2$.
	Since $\tilde T_i^{(1,0)}$ is bounded, we obtain $r_n = O(\zeta_0^2(K)/n)$ and similarly $\sigma^2_n = O(\zeta_0^2(K)/n)$ by Assumption \ref{as:eigen}(i).
	The rest of the proof follows immediately from Corollary 4.1 of \citeappendix{chen2015optimal}.
	\bigskip
	
	(ii) The triangle inequality leads to
	\begin{align*}
		\| \hat \Psi_{nK} - \Psi_K \|_2 \leq \| \hat \Psi_{nK} - \Psi_{nK} \|_2  + \| \Psi_{nK} - \Psi_K \|_2.
	\end{align*}
	The second term is $O_P (\zeta_0(K)\sqrt{( \log K)/ n})$ by (i).
	For the first term, the triangle inequality implies
	\begin{align*}
		\| \hat \Psi_{nK} - \Psi_{nK} \|_2 
		&	\leq \| \hat \Psi_{nK} - \Psi_{nK} \| \\
		& \leq \left\| \left( \hat{\mathbf{R}}_K^\top - \mathbf{R}_K^\top \right) \left( \hat{\mathbf{R}}_K - \mathbf{R}_K \right) /n \right\| + 2 \left\| \mathbf{R}_K^\top \left( \hat{\mathbf{R}}_K - \mathbf{R}_K \right) /n \right\|.
	\end{align*}
	By the mean value theorem, Assumptions \ref{as:1stage}(i)--(iii) yield
	\begin{align}\label{eq:Tdif}
	\begin{split}
		\hat T_i^{(1,0)} - \tilde T_i^{(1,0)}
		& = \mathbf{1}\{D_i = (1, 0)\} \left( \frac{\tau_\varpi(\hat{\mathcal{L}}_i^{(1,0)})}{\hat{\mathcal{L}}_i^{(1,0)}} - \frac{\tau_\varpi(\mathcal{L}_i^{(1,0)})}{\mathcal{L}_i^{(1,0)}} \right) \\
		& = O(1) \cdot (\hat{\mathcal{L}}_i^{(1,0)} - \mathcal{L}_i^{(1,0)}) \cdot \frac{\mathbf{1}\{\bar{\mathcal{L}}_i^{(1,0)} \ge \varpi \} }{\left(\bar{\mathcal{L}}_i^{(1,0)}\right)^2} \\
		& = O(1) \cdot \left( \hat P_{1i}^0 - P_{1i}^0 - H_{\hat \rho}(\hat P_{1i}^0, \hat P_{2i}^1) + H_{\rho^*}(P_{1i}^0, P_{2i}^1) \right) 
		= O_P(n^{-1/2}),
	\end{split}
	\end{align}
	where $\bar{\mathcal{L}}_i^{(1, 0)} \in [\hat{\mathcal{L}}_i^{(1, 0)}, \mathcal{L}_i^{(1, 0)}]$.
	Further, the mean value expansion and Assumption \ref{as:1stage}(i) lead to
	\begin{align*}
		b_K(\hat P_{1i}^0, \hat P_{2i}^1) - b_K(P_{1i}^0, P_{2i}^1)
		& = \left\{ \frac{\partial b_K \left( \bar P_{1i}^0, \bar P_{2i}^1 \right)}{ \partial p_1} + \frac{\partial b_K \left( \bar P_{1i}^0, \bar P_{2i}^1 \right)}{ \partial p_2}\right\} \cdot O_P(n^{-1/2}),
	\end{align*}
	where $\bar P_{1i}^0 \in [\hat P_{1i}^0, P_{1i}^0]$ and $\bar P_{2i}^1 \in [\hat P_{2i}^1, P_{2i}^1]$.
	Thus, by the triangle inequality,
	\begin{align}\label{eq:meanval1}
	\begin{split}
		\left\| \left( \hat{\mathbf{R}}_K - \mathbf{R}_K \right)/\sqrt{n} \right\| 
		& \le \left\| \left( \hat{\mathbf{T}} - \tilde{\mathbf{T}} \right) \hat{\mathbf{b}}_K /\sqrt{n} \right\| + \left\| \tilde{\mathbf{T}} \left( \hat{\mathbf{b}}_K - \mathbf{b}_K \right) /\sqrt{n} \right\| + O_P(1/\sqrt{n}) \\
		& \leq  O_P(n^{-1/2}) \cdot\left\{ \left\| \hat{\mathbf{b}}_K / \sqrt{n} \right\| + \left\| \partial_1 \bar{\mathbf{b}}_K/\sqrt{n} \right\|  + \left\| \partial_2 \bar{\mathbf{b}}_K/\sqrt{n} \right\| \right\} + O_P(1/\sqrt{n}) \\
		&= O_P(\zeta_0(K) / \sqrt{n}) + O_P(\zeta_1(K)/ \sqrt{n}) + O_P(1/\sqrt{n})
		= O_P(\zeta_1(K)/ \sqrt{n}),
	\end{split}
	\end{align}
	where $\partial_j \bar{\mathbf{b}}_K \coloneqq ( \partial b_K (\bar P_{11}^0, \bar P_{21}^1) / \partial p_j, \ldots, \partial b_K (\bar P_{1n}^0, \bar P_{2n}^1) / \partial p_j )^\top$ for $j = 1,2$.
	Hence, the first term satisfies $\| ( \hat{\mathbf{R}}_K^\top - \mathbf{R}_K^\top ) ( \hat{\mathbf{R}}_K - \mathbf{R}_K ) /n \| \le \| ( \hat{\mathbf{R}}_K - \mathbf{R}_K ) / \sqrt{n} \|^2 = O_P(\zeta^2_1(K) / n)$.
	
	For the second term, we obtain
	\begin{align*}
		\left\| \mathbf{R}_K^\top \left( \hat{\mathbf{R}}_K - \mathbf{R}_K \right) /n \right\|^2
		& = \mathrm{tr} \left\{ \left( \hat{\mathbf{R}}_K^\top - \mathbf{R}_K^\top \right) \mathbf{R}_K \mathbf{R}_K^\top \left( \hat{\mathbf{R}}_K - \mathbf{R}_K \right) /n^2 \right\} \\
		& \leq [\bar{c}_\Psi + o_P(1)] \cdot \mathrm{tr} \left\{ \left( \hat{\mathbf{R}}_K^\top - \mathbf{R}_K^\top \right)  \left( \hat{\mathbf{R}}_K - \mathbf{R}_K \right) /n \right\} \\
		& = [\bar{c}_\Psi + o_P(1)] \cdot \| ( \hat{\mathbf{R}}_K - \mathbf{R}_K ) / \sqrt{n} \|^2
		= O_P(\zeta^2_1(K) / n),
	\end{align*}
	since result (i) and Assumption \ref{as:eigen}(i) imply that
	\begin{align}\label{eq:eigenPsi}
		0 < \underline{c}_\Psi + o_P(1) \leq \chi_\mathrm{min}\left( \Psi_{nK} \right) \leq \chi_\mathrm{max}\left( \Psi_{nK} \right) \leq \bar{c}_\Psi + o_P(1) < \infty, \quad \text{w.p.a.1.}
	\end{align}
	
	Thus, result (ii) holds by
	\begin{align} \label{eq:differencePsi}
		\| \hat \Psi_{nK} - \Psi_{nK} \|_2
		= O_P(\zeta_1(K) / \sqrt{n}) + O_P(\zeta^2_1(K) / n)
		= O_P(\zeta_1(K) / \sqrt{n}).
	\end{align}
	\bigskip
	
	(iii) We first note that $\Psi_{nK}^{-1} - \Psi_K^{-1} = \Psi_K^{-1} \left( \Psi_K - \Psi_{nK} \right) \Psi_{nK}^{-1}$.
	Then, \eqref{eq:eigenPsi} and result (i) imply the desired result: $\|\Psi_{nK}^{-1} - \Psi_K^{-1}\|_2 \le \|\Psi_K^{-1}\|_2 \| \Psi_K - \Psi_{nK} \|_2 \| \Psi_{nK}^{-1}\|_2 \overset{p}{\asymp} \| \Psi_K - \Psi_{nK} \|_2$.
	\bigskip
	
	(iv) The proof is the same as in result (iii) by noting that result (ii) and Assumption \ref{as:eigen}(i) imply
	\begin{align}\label{eq:eigenPsihat}
		0 < \underline{c}_\Psi + o_P(1) \leq \chi_\mathrm{min}\left( \hat \Psi_{nK}  \right) \leq \chi_\mathrm{max}\left( \hat \Psi_{nK} \right) \leq \bar{c}_\Psi + o_P(1) < \infty, \quad \text{w.p.a.1.}
	\end{align}
\end{proof}

\begin{lemma}\label{lem:parametric}
	Suppose that Assumptions \ref{as:gamemodel}, \ref{as:linear}, \ref{as:data}, \ref{as:1stage}, \ref{as:eigen}(i), and \ref{as:emom}--\ref{as:rate} hold. 
	If in addition $\sqrt{n} K^{-s/2} = O(1)$ holds, then we have 	
	\begin{align*}
	\renewcommand{\arraystretch}{1.5}
	\begin{array}{clcl}
	\text{(i)} & \left\| \tilde \beta_1^{(1, 0)} - \beta_{1}^{(1, 0)} \right\| =  O_P(n^{-1/2}), 
	& \text{(ii)} & \left\|\tilde \alpha^{(1, 0)} - \alpha^{(1, 0)}  \right\| =  O_P(\sqrt{K/n}), \\
	\text{(iii)} & \left\| \hat \beta_1^{(1, 0)} - \beta_{1}^{(1, 0)} \right\| =  O_P(n^{-1/2}), 
	& \text{(iv)} & \left\|\hat \alpha^{(1, 0)} - \alpha^{(1, 0)}  \right\| =  O_P(\sqrt{K/n}).
	\end{array}
	\renewcommand{\arraystretch}{1}
	\end{align*}
\end{lemma}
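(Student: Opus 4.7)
My plan is to establish each bound through a term-by-term analysis of the decompositions \eqref{eq:infdecomp1} and \eqref{eq:fdecomp1}, using as the main tools the eigenvalue bounds of Lemma \ref{lem:matLLN} (which render $\Psi_{nK}^{-1}$ and $\hat \Psi_{nK}^{-1}$ bounded in operator norm with probability approaching one), the sieve approximation property of Assumption \ref{as:sieve}, and the moment condition of Assumption \ref{as:emom}.

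For the infeasible estimator (parts (i) and (ii)), I would split the error as $\tilde\delta_{1n}^{(1,0)}-\delta_1^{(1,0)} = B_n + V_n$ with $B_n \coloneqq \Psi_{nK}^{-1}\mathbf{R}_K^\top\tilde{\mathbf{T}}\mathbf{u}_1/n$ (the sieve-approximation term) and $V_n \coloneqq \Psi_{nK}^{-1}\mathbf{R}_K^\top\tilde{\mathbf{e}}_1/n$ (the stochastic term). For $V_n$, since $E[\tilde e_1^{(1,0)}\mid W,D]=0$ and $E[(\tilde e_1^{(1,0)})^4\mid W,D]$ is bounded by Assumption \ref{as:emom}, a direct conditional-variance computation combined with \eqref{eq:eigenPsi} yields $E\|V_n\|^2 \le c\cdot \mathrm{tr}\{\Psi_K^{-1}\}/n = O((K+\mathrm{dim}(X))/n)$, which proves (ii) and, via a Frisch--Waugh--Lovell-style orthogonalization of $\mathbb{S}_X$ against the sieve block, also delivers the $O_P(n^{-1/2})$ piece of (i). For $B_n$, Assumption \ref{as:sieve} at $|\mathbf{a}|=0$ gives $\max_i|u_{1i}|=O(K^{-s/2})$, and Cauchy--Schwarz together with Assumption \ref{as:eigen}(i) yields $\|B_n\|=O_P(K^{-s/2})$; the side condition $\sqrt{n}K^{-s/2}=O(1)$ turns this into $O_P(n^{-1/2})$, supplying the remaining bias part of (i). The $\alpha$-block of $B_n$ is also $O_P(K^{-s/2})$, which is dominated by $O_P(\sqrt{K/n})$ under the same side condition, completing (ii).

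For the feasible estimator (parts (iii) and (iv)), I would analyze the four terms of \eqref{eq:fdecomp1} in turn. The fourth term is exactly analogous to $V_n$ after swapping $(\mathbf{R}_K,\Psi_{nK},\tilde{\mathbf{T}})$ with their hatted versions; Lemma \ref{lem:matLLN}(iv), \eqref{eq:meanval1}, and \eqref{eq:Tdif} ensure that this swap is negligible at the $n^{-1/2}$ order. The third term replicates the $B_n$ argument on the hatted quantities. The first and second terms are the genuinely new ones introduced by first-stage plug-in: for the first, I would apply the mean value theorem to $g_1^{(1,0)}(P_{1i}^0,P_{2i}^1)-g_1^{(1,0)}(\hat P_{1i}^0,\hat P_{2i}^1)$, bound its gradient uniformly using that $g_1^{(1,0)}$ is $s$-times differentiable (so its first derivatives are bounded), and extract $\max_i\|(P_{1i}^0,P_{2i}^1)-(\hat P_{1i}^0,\hat P_{2i}^1)\|=O_P(n^{-1/2})$ from Assumption \ref{as:1stage}(i); for the second, \eqref{eq:Tdif} directly supplies $\max_i|\breve T_i^{(1,0)}-\hat T_i^{(1,0)}|=O_P(n^{-1/2})$, after which Cauchy--Schwarz together with Lemma \ref{lem:matLLN}(iv) delivers the claimed rate.

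The main obstacle I anticipate is ensuring that the cross-terms in the feasible decomposition do not inject a factor of $\zeta_1(K)$ large enough to overwhelm the sieve variance $\sqrt{K/n}$. The gradient of $b_K$ enters through both \eqref{eq:meanval1} and the mean-value expansion of $g_1^{(1,0)}$, producing bounds of order $\zeta_1(K)/\sqrt{n}$. The side conditions $\zeta_1(K)/\sqrt{n}\to 0$ of Assumption \ref{as:rate}(ii) and $\zeta_0(K)\zeta_1(K)/\sqrt{n}=O(1)$ of Assumption \ref{as:opnorm}(ii) are calibrated precisely to absorb these terms. Once this bookkeeping is verified, the parametric $n^{-1/2}$ rate for $\hat\beta$ follows from the same Frisch--Waugh--Lovell argument used for $\tilde\beta$, now applied to the hatted regressor matrix.
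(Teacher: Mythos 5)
Your proposal is correct and matches the paper's proof in all essentials: the same bias/variance decompositions of \eqref{eq:infdecomp1} and \eqref{eq:fdecomp1}, the same conditional-variance trace computation (the paper extracts the $n^{-1/2}$ rate for the $\beta$-block simply by taking $\mathbb{S}_a$ with $a=\mathrm{dim}(X)$ in the bound $\mathrm{tr}\{\mathbb{S}_a \Psi_{nK}^{-1}\mathbb{S}_a^\top\}=O_P(a)$, which is just the block-inversion form of your Frisch--Waugh--Lovell step), and the same mean-value treatment of the first-stage plug-in terms. The only nitpick is that your closing appeal to Assumption \ref{as:opnorm}(ii) is unnecessary (and not among the lemma's stated hypotheses): the gradient of $b_K$ enters only through the convergence of $\hat\Psi_{nK}$, which Assumption \ref{as:rate}(ii) already controls, while the mean-value expansions in the plug-in terms are applied to $g_1^{(1,0)}$, whose first derivatives are uniformly bounded, so no additional $\zeta_1(K)$ factor arises there.
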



\begin{proof}
 
    (i)--(ii) We first show that the first term on the right-hand side of \eqref{eq:infdecomp1} is of order $O_P(n^{-1/2})$ for any choice of $\mathbb{S}_a$.
    Since each element of the diagonal matrix $\tilde{\mathbf{T}}$ is bounded, we have 
    \begin{align*}
    	\| \mathbb{S}_a \Psi_{nK}^{-1} \mathbf{R}_K^\top \tilde{\mathbf{T}} \mathbf{u} /n \|^2 
    	& = \mathrm{tr}\left\{ \mathbf{u}^\top \tilde{\mathbf{T}} \mathbf{R}_K \Psi_{nK}^{-1} \mathbb{S}_a^\top \mathbb{S}_a \Psi_{nK}^{-1} \mathbf{R}_K^\top \tilde{\mathbf{T}} \mathbf{u} \right\} / n^2  \\
    	& \leq \mathrm{tr}\left\{ \mathbf{u}^\top \tilde{\mathbf{T}} \mathbf{R}_K \Psi_{nK}^{-2} \mathbf{R}_K^\top \tilde{\mathbf{T}} \mathbf{u} \right\} / n^2  \\
    	& \leq [ \underline{c}_\Psi + o_P(1)]^{-2} \cdot \mathrm{tr}\left\{ \mathbf{u}^\top \tilde{\mathbf{T}} \left( \mathbf{R}_K \mathbf{R}_K^\top / n \right) \tilde{\mathbf{T}} \mathbf{u} \right\} /n 
    	= O_P(1) \cdot \| \mathbf{u} \|^2 /n,
    \end{align*}
    where the second and last equalities follow from \eqref{eq:eigenPsi}.
    From Assumption \ref{as:sieve},
    \begin{align*}
    	\| \mathbf{u} \|^2 
    	\leq n \cdot \left[ \sup_{(p_1, p_2) \in [0,1]^2 } \left| g^{(1,0)}(p_1, p_2) - b_K(p_1, p_2)^\top \alpha^{(1, 0)} \right| \right]^2 
    	= O(n K^{-s}), 
    \end{align*}
    implying that
    \begin{align}\label{eq:biasorder}
    	\| \mathbb{S}_a \Psi_{nK}^{-1} \mathbf{R}_K^\top \tilde{\mathbf{T}} \mathbf{u} /n \| =  \underbrace{O_P(K^{-s/2})}_{O_P(n^{-1/2})}.
    \end{align}
    For the second term in \eqref{eq:infdecomp1}, Assumptions \ref{as:data} and \ref{as:emom} and \eqref{eq:eigenPsi} imply that
    \begin{align*}
    	\bE \left[ \left\| \mathbb{S}_a \Psi_{nK}^{-1} \mathbf{R}_K^\top \tilde{\mathbf{e}} / n \right\|^2 \middle| \{W_i, D_i \}_{i = 1}^n \right]
    	& = \mathrm{tr}\left\{ \mathbb{S}_a \Psi_{nK}^{-1} \mathbf{R}_K^\top \bE \left[ \tilde{\mathbf{e}} \tilde{\mathbf{e}}^\top \middle| \{W_i, D_i\}_{i = 1}^n \right] \mathbf{R}_K \Psi_{nK}^{-1} \mathbb{S}_a^\top \right\} / n^2  \\
    	& = O_P(1) \cdot \mathrm{tr}\left\{ \mathbb{S}_a \Psi_{nK}^{-1} \mathbb{S}_a^\top \right\} /n 
    	= O_P\left(a/n \right).
    \end{align*}
    Hence, we have
    \begin{align}\label{eq:variorder}
    	\left\| \mathbb{S}_a \Psi_{nK}^{-1} \mathbf{R}_K^\top \tilde{\mathbf{e}} / n \right\| =  O_P\left(\sqrt{ a/n } \right),
    \end{align}
    by Markov's inequality. 
    Results (i) and (ii) follow by noting that they are the cases when $a = \mathrm{dim}(X)$ and $a = K$, respectively.
    \bigskip
    
    (iii)--(iv) Using \eqref{eq:eigenPsihat} and the same argument as in \eqref{eq:biasorder}, it holds that $\| \mathbb{S}_a \hat \Psi_{nK}^{-1} \hat{\mathbf{R}}_K^\top \hat{\mathbf{T}} \hat{\mathbf{u}} / n \| =  O_P(K^{-s/2}) = O_P(n^{-1/2})$ for any $\mathbb{S}_a$.
    Further, similarly to \eqref{eq:variorder}, we have
    \begin{align*}
    \bE \left[ \left\| \mathbb{S}_a \hat \Psi_{nK}^{-1}  \hat{\mathbf{R}}_K^\top \hat{\mathbf{e}}/n \right\|^2 \middle| \{W_i, D_i\}_{i = 1}^n \right]
    & = \mathrm{tr}\left\{ \mathbb{S}_a \hat \Psi_{nK}^{-1}  \hat{\mathbf{R}}_K^\top \bE \left[ \hat{\mathbf{e}} \hat{\mathbf{e}}^\top \middle| \{W_i, D_i\}_{i = 1}^n \right] \hat{\mathbf{R}}_K  \hat \Psi_{nK}^{-1}  \mathbb{S}_a^\top \right\} / n^2  \\
    & = O_P(1) \cdot \mathrm{tr}\left\{ \mathbb{S}_a \hat \Psi_{nK}^{-1} \mathbb{S}_a^\top \right\} /n = O_P\left(a/n \right),
    \end{align*}
    by Assumptions \ref{as:data} and \ref{as:emom} and \eqref{eq:eigenPsihat}, which leads to $\| \mathbb{S}_a \hat \Psi_{nK}^{-1}  \hat{\mathbf{R}}_K^\top \hat{\mathbf{e}} / n \| = O_P(\sqrt{a/n})$ by Markov's inequality.
    Therefore, by \eqref{eq:fdecomp1},
    \begin{align}\label{eq:hatdltecomp}
    \begin{split}
    	\mathbb{S}_a\left(\hat \delta^{(1, 0)} - \delta^{(1, 0)}\right)
    	& =  \mathbb{S}_a \hat \Psi_{nK}^{-1} \hat{\mathbf{R}}_K^\top \breve{\mathbf{T}} \left(\mathbf{g}^{(1,0)} - \hat{\mathbf{g}}^{(1,0)} \right) / n + \mathbb{S}_a \hat \Psi_{nK}^{-1} \hat{\mathbf{R}}_K^\top \left( \breve{\mathbf{T}} - \hat{\mathbf{T}} \right) \hat{\mathbf{g}}^{(1,0)} / n + O_P(\sqrt{a/n}).
    \end{split}
    \end{align}
    By the mean value expansion under Assumptions \ref{as:1stage}(i) and \ref{as:sieve}, we have
    \begin{align*}
    	g^{(1, 0)}( \hat P_{1i}^0, \hat P_{2i}^1 ) - 	g^{(1, 0)}(P_{1i}^0, P_{2i}^1)
    	& = \left\{ \frac{\partial g^{(1, 0)} \left( \bar P_{1i}^0, \bar P_{2i}^1 \right)}{ \partial p_1} + \frac{\partial g^{(1, 0)} \left( \bar P_{1i}^0, \bar P_{2i}^1 \right)}{ \partial p_2}\right\} \cdot O_P(n^{-1/2}),
    \end{align*}
    where $\bar P_{1i}^0 \in [\hat P_{1i}^0, P_{1i}^0]$ and $\bar P_{2i}^1 \in [\hat P_{2i}^1, P_{2i}^1]$.
    Thus, noting that $\breve T^{(1, 0)} = \tau_\varpi(\hat{\mathcal{L}}^{(1, 0)}) I^{(1,0)} / \mathcal{L}^{(1, 0)}$ is bounded w.p.a.1 since $\hat{\mathcal{L}}^{(1, 0)} - \mathcal{L}^{(1, 0)} = O_P(n^{-1/2})$, the triangle inequality and \eqref{eq:eigenPsihat} lead to
    \begin{align}\label{eq:gmeanval}
    	\begin{split}
    	\left\|  \mathbb{S}_a \hat{\Psi}_{nK}^{-1} \hat{\mathbf{R}}_K^\top \breve{\mathbf{T}} \left(\mathbf{g}^{(1, 0)} - \hat{\mathbf{g}}^{(1, 0)} \right) / n  \right\| 
    	& \leq O_P(n^{-1/2}) \cdot \left\{ \left\|  \mathbb{S}_a \hat{\Psi}_{nK}^{-1} \hat{\mathbf{R}}_K^\top \breve{\mathbf{T}} \partial_1\bar{\mathbf{g}}^{(1, 0)}/n \right\|  + \left\|  \mathbb{S}_a \hat{\Psi}_{nK}^{-1}\hat{\mathbf{R}}_K^\top \breve{\mathbf{T}} \partial_2\bar{\mathbf{g}}^{(1, 0)}/n \right\| \right\}\\
    	& \leq O_P(n^{-1/2}) \cdot \sqrt{\mathrm{tr}\left\{ \mathbb{S}_a \hat \Psi_{nK}^{-1} \mathbb{S}_a^\top \right\}}  = O_P(\sqrt{a/n}),
    	\end{split}
    \end{align}
    where $\partial_j \bar{\mathbf{g}}^{(1, 0)} = ( \partial g^{(1,0)} ( \bar P_{11}^0, \bar P_{21}^1 ) / \partial p_j, \ldots, \partial g^{(1,0)} ( \bar P_{1n}^0, \bar P_{2n}^1 ) / \partial p_j)^\top$ for $j = 1,2$, implying that the first term in \eqref{eq:hatdltecomp} is of order $O_P(\sqrt{a/n})$.
    For the second term in \eqref{eq:hatdltecomp}, it can be similarly verified that $\| \mathbb{S}_a \hat \Psi_{nK}^{-1} \hat{\mathbf{R}}_K^\top (\breve{\mathbf{T}} - \hat{\mathbf{T}}) \hat{\mathbf{g}}^{(1,0)} / n \| =  O_P(\sqrt{a/n})$.
    This completes the proof. 
\end{proof}

To state the next lemma, we define the following linear operator:
\begin{align*}
	\hat{\mathcal{P}}_{nK}^{(1,0)} g 
	\coloneqq b_K(\cdot, \cdot)^\top \mathbb{S}_K \left[ \hat \Psi^{(1, 0)}_{nK} \right]^{-1} \frac{1}{n} \sum_{i = 1}^n \hat R_{K,i}^{(1,0)} g(Q_i),
\end{align*}
The operator norm of $\hat{\mathcal{P}}_{nK}^{(1,0)}$ (restricted on $\mathcal{S}^{(1,0)}$) is denoted as $\| \hat{\mathcal{P}}_{nK}^{(1,0)} \|_{\infty}$.

\begin{lemma}\label{lem:opnorm}
	Suppose that Assumptions \ref{as:gamemodel}, \ref{as:linear}, \ref{as:data}, \ref{as:1stage}, \ref{as:eigen}(i), \ref{as:rate}, and \ref{as:opnorm} hold.
	Then, we have
	\begin{align*}
		\| \hat{\mathcal{P}}_{nK}^{(1,0)} \|_{\infty} 
		= \| \mathcal{P}_{nK}^{(1,0)} \|_{\infty} + O_P(1)
		= O_P(1).
	\end{align*}
\end{lemma}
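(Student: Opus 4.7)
The plan is to use the triangle inequality on the operator norm:
$\| \hat{\mathcal{P}}_{nK}^{(1,0)} \|_{\infty} \le \| \mathcal{P}_{nK}^{(1,0)} \|_{\infty} + \| \hat{\mathcal{P}}_{nK}^{(1,0)} - \mathcal{P}_{nK}^{(1,0)} \|_{\infty}$. Since Assumption \ref{as:opnorm}(i) already delivers $\| \mathcal{P}_{nK}^{(1,0)} \|_{\infty} = O_P(1)$, the task reduces to showing $\| \hat{\mathcal{P}}_{nK}^{(1,0)} - \mathcal{P}_{nK}^{(1,0)} \|_{\infty} = O_P(1)$. First I would split $(\hat{\mathcal{P}}_{nK}^{(1,0)} - \mathcal{P}_{nK}^{(1,0)}) g$ into two additive pieces: one in which only $R_{K,i}^{(1,0)}$ is replaced by $\hat R_{K,i}^{(1,0)}$ (with $[\hat{\Psi}_{nK}^{(1,0)}]^{-1}$ as the common left factor), and one in which only the inverse Gram matrix is perturbed (with $n^{-1}\sum_i R_{K,i}^{(1,0)} g(Q_i)$ as the common right factor). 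Each piece can then be bounded factor by factor via Cauchy--Schwarz and the spectral-norm inequalities used earlier in the appendix.

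Next I would control each factor with quantities already established. The prefactor satisfies $\| \mathbb{S}_K^\top b_K(p_1,p_2) \| = \| b_K(p_1,p_2) \| \le \zeta_0(K)$ by definition, and $\| [\hat{\Psi}_{nK}^{(1,0)}]^{-1} \|_2 = O_P(1)$ by \eqref{eq:eigenPsihat}. Uniformly in $g \in \mathcal{C}(\mathrm{supp}[Q])$ with $\sup|g| \le 1$, a direct Cauchy--Schwarz argument exploiting $|g|\le 1$ and the definition of $\Psi_{nK}^{(1,0)}$ yields $\| n^{-1}\sum_i R_{K,i}^{(1,0)} g(Q_i) \| \le \| \Psi_{nK}^{(1,0)} \|_2^{1/2} = O_P(1)$ by \eqref{eq:eigenPsi}, and the same inequality combined with \eqref{eq:meanval1} gives $\| n^{-1}\sum_i (\hat R_{K,i}^{(1,0)} - R_{K,i}^{(1,0)}) g(Q_i) \| \le n^{-1/2}\| \hat{\mathbf{R}}_K^{(1,0)} - \mathbf{R}_K^{(1,0)} \| = O_P(\zeta_1(K)/\sqrt{n})$. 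Finally, the identity $[\hat{\Psi}_{nK}^{(1,0)}]^{-1} - [\Psi_{nK}^{(1,0)}]^{-1} = -[\hat{\Psi}_{nK}^{(1,0)}]^{-1}(\hat{\Psi}_{nK}^{(1,0)} - \Psi_{nK}^{(1,0)})[\Psi_{nK}^{(1,0)}]^{-1}$ together with \eqref{eq:differencePsi} delivers $\| [\hat{\Psi}_{nK}^{(1,0)}]^{-1} - [\Psi_{nK}^{(1,0)}]^{-1} \|_2 = O_P(\zeta_1(K)/\sqrt{n})$.

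Multiplying the bounds through, both additive pieces are of order $O_P(\zeta_0(K)\zeta_1(K)/\sqrt{n})$, which collapses to $O_P(1)$ by Assumption \ref{as:opnorm}(ii). Taking the supremum over $(p_1,p_2) \in \mathcal{S}^{(1,0)}$ and over admissible $g$ then yields $\| \hat{\mathcal{P}}_{nK}^{(1,0)} - \mathcal{P}_{nK}^{(1,0)} \|_{\infty} = O_P(1)$, completing the argument. The main obstacle will be that the naive factorwise bound introduces a factor $\zeta_0(K)$ from $\|b_K\|$, so the whole argument hinges on the budget $\zeta_0(K)\zeta_1(K)/\sqrt{n} = O(1)$ supplied by Assumption \ref{as:opnorm}(ii); any treatment of $\| n^{-1}\sum_i R_{K,i}^{(1,0)} g(Q_i) \|$ looser than the Gram-matrix bound $\|\Psi_{nK}^{(1,0)}\|_2^{1/2}$ would overshoot this budget, so it is essential to keep each factor as tight as above.
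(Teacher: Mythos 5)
Your proposal is correct and follows essentially the same route as the paper's proof: triangle inequality on the operator norm, a product-rule decomposition of $\hat{\mathcal{P}}_{nK}g - \mathcal{P}_{nK}g$ into a Gram-inverse perturbation term and a regressor perturbation term, factorwise spectral-norm bounds using \eqref{eq:eigenPsi}, \eqref{eq:eigenPsihat}, \eqref{eq:meanval1}, and \eqref{eq:differencePsi}, and the budget $\zeta_0(K)\zeta_1(K)/\sqrt{n} = O(1)$ from Assumption \ref{as:opnorm}(ii). The only cosmetic difference is which factor carries the hat in the cross terms of the decomposition, which is immaterial.
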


\begin{proof}
	The triangle inequality implies that
	\begin{align*}
		\left| \| \hat{\mathcal{P}}_{nK} \|_{\infty} - \| \mathcal{P}_{nK} \|_{\infty} \right| 
		\le \sup_{(p_1, p_2) \in \mathcal{S}^{(1,0)}, \; g \in \mathcal{C}(\text{supp}[Q])} \left| \left( \hat{\mathcal{P}}_{nK} g \right) (p_1, p_2) - \left( \mathcal{P}_{nK} g \right) (p_1, p_2) \right|.
	\end{align*}
	For any $(p_1, p_2) \in \mathcal{S}^{(1,0)}$ and $g \in \mathcal{C}(\text{supp}[Q])$, we have
	\begin{align*}
		\left| \left( \hat{\mathcal{P}}_{nK} g \right) (p_1, p_2) - \left( \mathcal{P}_{nK} g \right) (p_1, p_2) \right|
		& \le \left| b_K(p_1, p_2)^\top \mathbb{S}_K \left( \hat \Psi_{nK}^{-1} - \Psi_{nK}^{-1} \right) \hat{\mathbf{R}}_K^\top \mathbf{g} / n \right| \\
		& \quad + \left| b_K(p_1, p_2)^\top \mathbb{S}_K \Psi_{nK}^{-1} \left( \hat{\mathbf{R}}_K - \mathbf{R}_K \right)^\top \mathbf{g} / n \right|,
	\end{align*}
	where $\mathbf{g} = (g(Q_1), \dots, g(Q_n))^\top$.
	For the first term, we have
	\begin{align*}
		\left| b_K(p_1, p_2)^\top \mathbb{S}_K \left( \hat \Psi_{nK}^{-1} - \Psi_{nK}^{-1} \right) \hat{\mathbf{R}}_K^\top \mathbf{g} / n \right|
		& \le \| b_K(p_1, p_2) \| \cdot \left\| \mathbb{S}_K \left( \hat \Psi_{nK}^{-1} - \Psi_{nK}^{-1} \right) \hat{\mathbf{R}}_K^\top \mathbf{g} / n \right\| \\
		& \le \| b_K(p_1, p_2) \| \cdot O_P(\zeta_1(K) / \sqrt{n})
		= O_P(\zeta_0(K) \zeta_1(K) / \sqrt{n}),
	\end{align*}
	where the second inequality can be shown by \eqref{eq:eigenPsihat} and
	\begin{align} \label{eq:differencePsiInv}
		\| \hat \Psi_{nK}^{-1} - \Psi_{nK}^{-1} \|_2 = O_P(\zeta_1(K) / \sqrt{n}),
	\end{align}
	which is implied by \eqref{eq:differencePsi} and the same argument as in the proof of Lemma \ref{lem:matLLN}.
	Similarly, we can show that the second term is $O_P(\zeta_0(K) \zeta_1(K) / \sqrt{n})$ based on \eqref{eq:meanval1}.
	Thus, under Assumption \ref{as:opnorm}(ii),
	\begin{align*}
		\left| \left( \hat{\mathcal{P}}_{nK} g \right) (p_1, p_2) - \left( \mathcal{P}_{nK} g \right) (p_1, p_2) \right| 
		= O_P(\zeta_0(K) \zeta_1(K) / \sqrt{n})
		= O_P(1),
	\end{align*}
	uniformly in $(p_1, p_2) \in \mathcal{S}^{(1,0)}$ and $g \in \mathcal{C}(\text{supp}[Q])$.
\end{proof}

\begin{lemma}\label{lem:unifconv1}
	Suppose that Assumptions \ref{as:gamemodel}, \ref{as:linear}, \ref{as:data}, \ref{as:1stage}, \ref{as:eigen}(i), and \ref{as:emom}--\ref{as:rate2} hold.
	Then, we have
	\begin{align*}
	\renewcommand{\arraystretch}{2}
	\begin{array}{cl}
	\text{(i)} & \displaystyle \sup_{(p_1, p_2) \in \mathcal{S}^{(1,0)}}\left| \tilde g^{(1, 0)}(p_1, p_2) -  g^{(1, 0)}(p_1, p_2)  \right| =  O_P(\zeta_0(K) \sqrt{(\log n) / n} ) + O_P(K^{-s/2}),\\
	\text{(ii)} & \displaystyle \sup_{(p_1, p_2) \in \mathcal{S}^{(1,0)}}\left| \partial_{p_1 p_2 }\left( \tilde g^{(1, 0)}(p_1, p_2) -  g^{(1, 0)}(p_1, p_2) \right)  \right| = O_P(\zeta_0(K) K \sqrt{(\log n) / n} ) + O_P(K^{(2-s)/2}),\\
	\text{(iii)} & \displaystyle \sup_{(p_1, p_2) \in \mathcal{S}^{(1,0)}}\left| \hat g^{(1, 0)}(p_1, p_2) -  g^{(1, 0)}(p_1, p_2)  \right| =  O_P(\zeta_0(K) \sqrt{(\log n) / n} ) + O_P(K^{-s/2}),\\
	\text{(iv)} & \displaystyle \sup_{(p_1, p_2) \in \mathcal{S}^{(1,0)}}\left|  \partial_{p_1 p_2 }\left( \hat g^{(1, 0)}(p_1, p_2) -  g^{(1, 0)}(p_1, p_2) \right)  \right| = O_P(\zeta_0(K) K \sqrt{(\log n) / n} ) + O_P(K^{(2-s)/2}).
	\end{array}
	\renewcommand{\arraystretch}{1}
	\end{align*}
\end{lemma}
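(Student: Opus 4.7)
The plan is to decompose each estimator into an approximation-bias part and a stochastic part, bound each in sup-norm, and then propagate to the cross-partial derivatives by invoking Assumption \ref{as:deriv}. I would first treat the infeasible estimator (parts (i)--(ii)) and then extend to the feasible one (parts (iii)--(iv)) by absorbing the first-stage perturbations.

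For part (i), write
\begin{align*}
	\tilde g_1^{(1,0)}(p_1,p_2) - g_1^{(1,0)}(p_1,p_2)
	& = b_K(p_1,p_2)^\top \left( \tilde \alpha_{1n}^{(1,0)} - \alpha_1^{(1,0)} \right) + \left[ b_K(p_1,p_2)^\top \alpha_1^{(1,0)} - g_1^{(1,0)}(p_1,p_2) \right].
\end{align*}
The bracket is uniformly $O(K^{-s/2})$ by Assumption \ref{as:sieve}. Using \eqref{eq:infdecomp1} with $a = K$, the first term splits into a projection-bias piece $b_K(p_1,p_2)^\top \mathbb{S}_K \Psi_{nK}^{-1} \mathbf{R}_K^\top \tilde{\mathbf{T}} \mathbf{u}_1^{(1,0)}/n$, which can be represented as $(\mathcal{P}_{nK} \tilde g)(p_1, p_2)$ for the function whose value at sample point $i$ is $\tilde T_i u_1^{(1,0)}(P_{1i}^0, P_{2i}^1)$ and is therefore bounded in sup-norm by $\|\mathcal{P}_{nK}\|_\infty \cdot \sup_{q}|u_1^{(1,0)}(q)| = O_P(K^{-s/2})$ via Assumptions \ref{as:opnorm}(i) and \ref{as:sieve}; and a variance piece $b_K(p_1,p_2)^\top \mathbb{S}_K \Psi_{nK}^{-1} \mathbf{R}_K^\top \tilde{\mathbf{e}}_1/n$. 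After replacing $\Psi_{nK}^{-1}$ by $\Psi_K^{-1}$ at cost $o_P(\zeta_0(K)\sqrt{(\log n)/n})$ via Lemma \ref{lem:matLLN}(iii) and Assumption \ref{as:rate2}(ii), I would bound the resulting sieve empirical process by a standard Bernstein-plus-covering argument: Assumption \ref{as:rate2}(i) supplies a polynomial Lipschitz modulus for $(p_1,p_2) \mapsto b_K(p_1,p_2)^\top \mathbb{S}_K \Psi_K^{-1} n^{-1}\sum_i R_{K,i} \tilde e_i$, so that an $\epsilon$-net of $\mathcal{S}^{(1,0)}$ with polynomially small mesh, combined with Bernstein's inequality applied at each grid point (using the bounded fourth moment of Assumption \ref{as:emom} and $\|b_K(p_1,p_2)\| \le \zeta_0(K)$), delivers $\sup_{(p_1,p_2) \in \mathcal{S}^{(1,0)}} |\cdot| = O_P(\zeta_0(K) \sqrt{(\log n)/n})$.

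For part (ii), the bracket becomes $\sup |\partial^{\mathbf{a}} g_1^{(1,0)} - \partial^{\mathbf{a}} b_K^\top \alpha_1^{(1,0)}| = O(K^{(2-s)/2})$ by Assumption \ref{as:sieve} with $|\mathbf{a}| = 2$. Both the projection-bias piece and the variance piece of the stochastic term take the form $b_K(p_1,p_2)^\top \alpha$ for some random coefficient vector $\alpha$, so by Assumption \ref{as:deriv} their cross-partial sup-norm is at most $O(K)$ times their level sup-norm. This inflates the projection-bias contribution to $O_P(K \cdot K^{-s/2}) = O_P(K^{(2-s)/2})$ and the variance contribution to $O_P(\zeta_0(K) K \sqrt{(\log n)/n})$, combining to the claimed rate. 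For parts (iii)--(iv), start from \eqref{eq:hatdltecomp} with $a = K$ and isolate three extra error sources: (a) $\mathbb{S}_K \hat \Psi_{nK}^{-1} \hat{\mathbf{R}}_K^\top \breve{\mathbf{T}} (\mathbf{g}_1^{(1,0)} - \hat{\mathbf{g}}_1^{(1,0)})/n$; (b) $\mathbb{S}_K \hat \Psi_{nK}^{-1} \hat{\mathbf{R}}_K^\top (\breve{\mathbf{T}} - \hat{\mathbf{T}}) \hat{\mathbf{g}}_1^{(1,0)}/n$; and (c) replacement of $(\Psi_{nK}, \mathbf{R}_K, \tilde{\mathbf{e}}_1)$ by hatted versions. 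Term (a) is of the form $\hat{\mathcal{P}}_{nK}$ applied to a function whose sample values are $O_P(n^{-1/2})$ uniformly (by the mean-value expansion used in \eqref{eq:gmeanval} and Assumption \ref{as:1stage}(i)), and Lemma \ref{lem:opnorm} then gives sup-norm $O_P(n^{-1/2})$; term (b) is handled identically using $\sup_i|\hat T_i - \breve T_i| = O_P(n^{-1/2})$ from \eqref{eq:Tdif}. Term (c) is controlled by Lemma \ref{lem:matLLN}(iv), \eqref{eq:differencePsiInv}, \eqref{eq:meanval1}, and Assumption \ref{as:rate}(ii), each contributing $o_P(\zeta_0(K)\sqrt{(\log n)/n})$ in sup-norm. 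Part (iv) then follows from the same $K$-inflation via Assumption \ref{as:deriv}.

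The main obstacle will be the uniform bound on the variance process $b_K(p_1,p_2)^\top \mathbb{S}_K \Psi_K^{-1} \mathbf{R}_K^\top \tilde{\mathbf{e}}_1/n$: rather than the pointwise $\sqrt{K/n}$ delivered by Lemma \ref{lem:parametric}, one must squeeze out the sharper $\zeta_0(K)\sqrt{(\log n)/n}$ rate through an $\epsilon$-net with polynomially small mesh combined with Bernstein tail bounds, where Assumption \ref{as:rate2}(i) is used for discretization error and Assumption \ref{as:emom} is used for variance control; propagating this uniform rate through the first-stage perturbations while keeping the extra $\zeta_1(K)/\sqrt{n}$ terms from \eqref{eq:differencePsi}--\eqref{eq:differencePsiInv} negligible under Assumption \ref{as:rate}(ii) is the other delicate piece.
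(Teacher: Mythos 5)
Your decomposition and overall architecture coincide with the paper's proof: the same split into sieve approximation error, a projection-bias term controlled by $\|\mathcal{P}_{nK}\|_\infty$, and a variance term handled by a covering-plus-exponential-inequality argument over a grid with mesh governed by Assumption \ref{as:rate2}(i); the same $O(K)$ inflation via Assumption \ref{as:deriv} for the cross-partials; and the same three extra first-stage perturbation terms, controlled through Lemma \ref{lem:opnorm}, \eqref{eq:Tdif}, and \eqref{eq:meanval1}, for the feasible version.

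There is, however, one step that would fail as written. You propose to apply Bernstein's inequality at each grid point to $n^{-1}\sum_i q_i(t_1,t_2)\tilde e_{1i}$ ``using the bounded fourth moment of Assumption \ref{as:emom}.'' Bernstein's inequality requires the summands to be bounded (or to satisfy a full Bernstein moment condition on all moments), and $\tilde e_{1i}$ is unbounded; a conditional fourth moment bound alone does not deliver the exponential tail you need to beat the $\log|\mathcal{T}_n^2|$ union bound. The paper repairs this by truncating: it writes $\tilde e_{1i} = \tilde e_{11i} + \tilde e_{12i}$, where $\tilde e_{11i}$ is the recentered part truncated at a slowly diverging level $M_n$, applies Bernstein to the bounded summands $q_i(t_1,t_2)\tilde e_{11i}$ over the grid, and controls the tail part $\tilde e_{12i}$ by Markov's inequality together with $E[\tilde e_{1i}^4\,\mathbf{1}\{|\tilde e_{1i}|>M_n\}]/M_n^3$, choosing $M_n$ so that both contributions are $O_P(\zeta_0(K)\sqrt{(\log n)/n})$ and the two rate requirements on $M_n$ are mutually consistent under Assumption \ref{as:rate2}(ii). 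Inserting this truncation step makes your argument complete; everything else in your sketch matches the paper's proof.
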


\begin{proof}
	(i) By the triangle inequality, Assumptions \ref{as:sieve} and \ref{as:opnorm}, and \eqref{eq:infdecomp1}, we have
	\begin{align*}
	&  \sup_{(p_1, p_2) \in \mathcal{S}^{(1,0)}}\left| \tilde g^{(1, 0)}(p_1, p_2) -  g^{(1, 0)}(p_1, p_2)  \right| \\
	& \leq  \sup_{(p_1, p_2) \in \mathcal{S}^{(1,0)}}\left| b_K(p_1, p_2)^\top ( \tilde \alpha^{(1, 0)} -  \alpha^{(1, 0)}) \right| +  \sup_{(p_1, p_2) \in \mathcal{S}^{(1,0)}}\left| g^{(1, 0)}(p_1, p_2) -  b_K(p_1, p_2)^\top \alpha^{(1, 0)}  \right| \\
	& =  \sup_{(p_1, p_2) \in \mathcal{S}^{(1,0)}}\left| b_K(p_1, p_2)^\top ( \tilde \alpha^{(1, 0)} -  \alpha^{(1, 0)}) \right| + O(K^{-s/2})\\
	& \leq \sup_{(p_1, p_2) \in \mathcal{S}^{(1,0)}} \left| b_K(p_1, p_2)^\top \mathbb{S}_K \Psi_{nK}^{-1} \mathbf{R}_K^\top \tilde{\mathbf{e}} / n \right|  + \| \mathcal{P}_{nK} \|_{\infty} \cdot O(K^{-s/2})  + O(K^{-s/2}) \\
	& = \sup_{(p_1, p_2) \in \mathcal{S}^{(1,0)}} \left| b_K(p_1, p_2)^\top \mathbb{S}_K \Psi_{nK}^{-1} \mathbf{R}_K^\top \tilde{\mathbf{e}} / n \right| + O_P(K^{-s/2}).
	\end{align*}
	
	Following the proof of Lemma 3.1(ii) of \citetappendix{chen2018optimal}, we show that the first term on the right-hand side is of order $O_P(\zeta_0(K) \sqrt{(\log n) / n})$.
	Firstly, we partition the interval $[0, 1]$ into countably many sub-intervals of equal length and let the set of the partitioning points (including 0 and 1) be $\mathcal{T}_n$.
	We can construct the partition such that for any $(p_1,p_2) \in [0,1]^2$ there exists a point $( t_{p_1}, t_{p_2} ) \in \mathcal{T}_n^2$ satisfying 
	\[
	\| (p_1,p_2)  - (  t_{p_1}, t_{p_2}  ) \| \leq c_p \zeta_0(K) K^{-(\omega + 1/2)},
	\]
	for some positive constant $c_p > 0$, where $\omega$ is as in Assumption \ref{as:rate2}(i).
	Then, by Assumption \ref{as:rate2}(i), we have
	\begin{align*}
	& \sup_{(p_1, p_2) \in \mathcal{S}^{(1,0)}}\left| b_K(p_1, p_2)^\top \mathbb{S}_K \Psi_{nK}^{-1} \mathbf{R}_K^\top \tilde{\mathbf{e}} / n \right| \\
	&\leq \max_{(t_1, t_2) \in \mathcal{T}_n^2}\left| b_K(t_1, t_2)^\top \mathbb{S}_K \Psi_{nK}^{-1} \mathbf{R}_K^\top \tilde{\mathbf{e}} / n \right|  +  \sup_{(p_1, p_2) \in [0, 1]^2}\left| \left\{ b_K(p_1, p_2) -  b_K(t_{p_1}, t_{p_2}) \right\}^\top \mathbb{S}_K \Psi_{nK}^{-1} \mathbf{R}_K^\top \tilde{\mathbf{e}} / n \right|  \\
	& \leq \max_{(t_1, t_2) \in \mathcal{T}_n^2}\left| b_K(t_1, t_2)^\top \mathbb{S}_K \Psi_{nK}^{-1} \mathbf{R}_K^\top \tilde{\mathbf{e}} / n \right|  +  O(\zeta_0(K) K^{-1/2}) \cdot \left\|\mathbb{S}_K \Psi_{nK}^{-1} \mathbf{R}_K^\top \tilde{\mathbf{e}} / n \right\| \\
	& =  \max_{(t_1, t_2) \in \mathcal{T}_n^2}\left| b_K(t_1, t_2)^\top \mathbb{S}_K \Psi_{nK}^{-1} \mathbf{R}_K^\top \tilde{\mathbf{e}} / n \right|  +  O_P( \zeta_0(K) /\sqrt{n} ),
	\end{align*}
	where the last equality follows from \eqref{eq:variorder}.
	
	To examine the first term on the right-hand side, decompose $\tilde e_i = \tilde e_{1i} + \tilde e_{2i}$, where
	\begin{align*}
	\tilde e_{1i} & \coloneqq \tilde e_i \mathbf{1}\{| \tilde e_i| \leq M_n \} - \bE [ \tilde e_i \mathbf{1}\{| \tilde e_i| \leq M_n \} | W_i, D_i], \\
	\tilde e_{2i} & \coloneqq \tilde e_i \mathbf{1}\{|\tilde e_i| > M_n \} - \bE [\tilde e_i \mathbf{1}\{|\tilde e_i| > M_n \} | W_i, D_i],
	\end{align*}
	and $M_n$ is a sequence of positive numbers diverging to $\infty$.
	Let $\tilde{\mathbf{e}}_{1} = (\tilde e_{11}  , \ldots , \tilde e_{1n} )^\top$ and $\tilde{\mathbf{e}}_{2} = (\tilde e_{21}, \ldots, \tilde e_{2n})^\top$.
	Then, we observe
	\begin{align*}
	& b_K(t_1, t_2)^\top \mathbb{S}_K \Psi_{nK}^{-1} \mathbf{R}_K^\top \tilde{\mathbf{e}} / n \\
	& = b_K(t_1, t_2)^\top \mathbb{S}_K \Psi_K^{-1} \mathbf{R}_K^\top \tilde{\mathbf{e}} /n + b_K(t_1, t_2)^\top \mathbb{S}_K \left( \Psi_{nK}^{-1}  - \Psi_K^{-1} \right) \mathbf{R}_K^\top \tilde{\mathbf{e}} /n \\
	& = b_K(t_1, t_2)^\top \mathbb{S}_K \Psi_K^{-1} \mathbf{R}_K^\top \tilde{\mathbf{e}}_{1} /n + b_K(t_1, t_2)^\top \mathbb{S}_K \Psi_K^{-1}  \mathbf{R}_K^\top \tilde{\mathbf{e}}_{2} /n + b_K(t_1, t_2)^\top \mathbb{S}_K \left( \Psi_{nK}^{-1} - \Psi_K^{-1} \right) \mathbf{R}_K^\top \tilde{\mathbf{e}} /n \\
	& = b_K(t_1, t_2)^\top \mathbb{S}_K \Psi_K^{-1} \mathbf{R}_K^\top \tilde{\mathbf{e}}_{1} / n + b_K(t_1, t_2)^\top \mathbb{S}_K \Psi_K^{-1}  \mathbf{R}_K^\top \tilde{\mathbf{e}}_{2} /n + \underbrace{O_P(\zeta^2_0(K) \sqrt{(\log K)}/n )}_{o_P(\zeta_0(K) \sqrt{(\log n)/ n })},
	\end{align*}
	where the last equality follows from Lemma \ref{lem:matLLN}(iii) and Markov's inequality.
	
	Let $q_i(t_1,t_2) \coloneqq b_K(t_1, t_2)^\top \mathbb{S}_K \Psi_K^{-1} R_{K,i}$, so that
	\begin{align*}
		\frac{1}{n}\sum_{i=1}^n q_i(t_1,t_2) \tilde e_{1i} = b_K(t_1, t_2)^\top \mathbb{S}_K \Psi_K^{-1} \mathbf{R}_K^\top \tilde{\mathbf{e}}_{1} / n.
	\end{align*}
	Note that $\bE [q_i(t_1, t_2) \tilde e_{1i}] = 0$.
	Furthermore, it is straightforward by the Cauchy--Schwarz inequality that there exist positive constants $c_1, c_2 >0$ such that $|q_i(t_1,t_2)| \le \| b_K(t_1, t_2) \| \cdot \| \Psi_K^{-1} R_{K,i} \| \le c_1 \zeta^2_0(K)$ and that $\bE [q_i^2(t_1,t_2)] \leq  c_2 \zeta^2_0(K)$.
	Therefore, for all $(t_1, t_2) \in \mathcal{T}_n^2$, we have $|q_i(t_1,t_2) \tilde e_{1i}|  \leq  c'_1 \zeta^2_0(K) M_n$ 	and $ \bE [q_i^2(t_1,t_2) \tilde e^2_{1i}] = \bE \left[q_i^2(t_1,t_2)\bE [\tilde e^2_{1i} | W_i, D_i]\right] \leq c'_2 \zeta^2_0(K)$ for some $c'_1, c'_2 > 0$ by Assumption \ref{as:emom}.
	From Bernstein's inequality for any non-negative $\varrho_n \ge 0$, we have
	\begin{align*}
		\Pr\left[ \max_{(t_1, t_2) \in \mathcal{T}_n^2}\left| b_K(t_1, t_2)^\top \mathbb{S}_K \Psi_K^{-1} \mathbf{R}_K^\top \tilde{\mathbf{e}}_{1} / n \right|  > \varrho_n \right]
		& \leq |\mathcal{T}_n^2| \max_{(t_1, t_2) \in \mathcal{T}_n^2} \Pr\left[ \left| \frac{1}{n}\sum_{i=1}^n q_i(t_1,t_2) \tilde e_{1i} \right|  > \varrho_n \right] \\
		& \leq 2 \exp\left\{\log |\mathcal{T}_n^2|-\frac{1}{2}\frac{\varrho_n^2}{c'_2 \zeta^2_0(K)/n +  c'_1 \zeta^2_0(K) M_n \varrho_n /(3n)} \right\}\\
		& \leq 2 \exp\left\{\log |\mathcal{T}_n^2| -\frac{\varrho_n^2}{c_3 (\zeta^2_0(K)/n) [1+  M_n \varrho_n ]} \right\},
	\end{align*}
	for some positive constant $c_3 > 0$, where $|\mathcal{T}_n^2|$ denotes the cardinality of the set $\mathcal{T}_n^2$.
	Then, setting $\varrho_n = C \zeta_0(K)\sqrt{(\log n)/ n}$ for a large constant $C > 0$, provided that $|\mathcal{T}_n^2|$ and $M_n$ grow sufficiently slowly so that $M_n \varrho_n = o(1)$, we have
	\begin{align*}
		\log |\mathcal{T}_n^2| -\frac{\varrho_n^2}{c_3 (\zeta^2_0(K)/n) [1+  M_n \varrho_n ]}
		= \log |\mathcal{T}_n^2| -\frac{C^2 \zeta^2_0(K) (\log n)/ n}{c_3 (\zeta^2_0(K)/n) [1+  o(1) ]}
		\asymp \log\left(\frac{ |\mathcal{T}_n^2| }{n^{C^2} } \right) \to - \infty,
	\end{align*}
	as $C \to \infty$, implying that $\max_{(t_1, t_2) \in \mathcal{T}_n^2}\left| b_K(t_1, t_2)^\top \mathbb{S}_K \Psi_K^{-1}  \mathbf{R}_K^\top \tilde{\mathbf{e}}_{1} /n \right| = O_P(\zeta_0(K)\sqrt{(\log n)/ n})$.
	
	Next, by Markov's inequality and Assumption \ref{as:emom}, it holds that
	\begin{align*}
	\Pr\left[ \max_{(t_1, t_2) \in \mathcal{T}_n^2}\left| b_K(t_1, t_2)^\top \mathbb{S}_K \Psi_K^{-1} \mathbf{R}_K^\top \tilde{\mathbf{e}}_{2} / n \right| > \varrho_n \right] 
	& \leq \Pr\left[ \zeta_0(K) \left\| \Psi_K^{-1} \mathbf{R}_K^\top \tilde{\mathbf{e}}_{2} /n \right\| > \varrho_n \right] \\
	& \leq \Pr\left[  \frac{c_4 \zeta^2_0(K)}{n}\sum_{i = 1}^n | \tilde e_{2i} | > \varrho_n \right] \\
	& \leq \frac{2 c_4 \zeta^2_0(K) }{\varrho_n} \bE [|\tilde e_i| \mathbf{1}\{|\tilde e_i| > M_n \}] \\
	& \leq \frac{2 c_4 \zeta^2_0(K) }{\varrho_n M_n^3} \bE [ \tilde e^4_i \mathbf{1}\{|\tilde e_i| > M_n \}] = O\left( \frac{ \zeta^2_0(K) }{\varrho_n M_n^3} \right).
	\end{align*}
	Again, setting $\varrho_n = C \zeta_0(K)\sqrt{(\log n)/ n}$ for a large constant $C > 0$, if $\zeta_0(K) / \sqrt{(\log n)/ n} = O(M_n^3)$,
	\begin{align*}
		\frac{ \zeta^2_0(K) }{\varrho_n M_n^3} = \frac{1}{C} \frac{ \zeta_0(K) }{\sqrt{(\log n)/ n} M_n^3} \to 0,
	\end{align*}
	as $C \to \infty$, which implies $\max_{(t_1, t_2) \in \mathcal{T}_n^2}\left| b_K(t_1, t_2)^\top \mathbb{S}_K \Psi_K^{-1}  \mathbf{R}_K^\top \tilde{\mathbf{e}}_{2} / n \right| = O_P(\zeta_0(K)\sqrt{(\log n)/ n})$.
	It should be noted that $ \zeta_0(K) / \sqrt{(\log n)/ n} = O(M_n^3)$ is not inconsistent with the requirement $M_n \zeta_0(K)\sqrt{(\log n)/ n} = o(1)$ under Assumption \ref{as:rate2}(ii). 
	By combining these results, the proof is completed.
	\bigskip
	
	(ii) From the triangle inequality and Assumption \ref{as:sieve}, we have
	\begin{align*}
	&  \sup_{(p_1, p_2) \in  \mathcal{S}^{(1,0)}}\left| \partial_{p_1 p_2}\left( \tilde g^{(1, 0)}(p_1, p_2) -  g^{(1, 0)}(p_1, p_2)\right)  \right| \\
	& \leq  \sup_{(p_1, p_2) \in  \mathcal{S}^{(1,0)}}\left| \partial_{p_1 p_2}\left( b_K(p_1, p_2)^\top ( \tilde \alpha^{(1, 0)} -  \alpha^{(1, 0)}) \right) \right|  +  \sup_{(p_1, p_2) \in  \mathcal{S}^{(1,0)}}\left| \partial_{p_1 p_2}\left( g^{(1, 0)}(p_1, p_2) -  b_K(p_1, p_2)^\top \alpha^{(1, 0)} \right) \right| \\
	& \leq  \sup_{(p_1, p_2) \in  \mathcal{S}^{(1,0)}}\left| \partial_{p_1 p_2}\left( b_K(p_1, p_2)^\top ( \tilde \alpha^{(1, 0)} -  \alpha^{(1, 0)}) \right) \right| +  O(K^{(2 - s)/2}).
	\end{align*}
	Further, by Assumption \ref{as:deriv}, we have
	\begin{align*}
	\sup_{(p_1, p_2) \in  \mathcal{S}^{(1,0)}}\left| \partial_{p_1 p_2}\left( b_K(p_1, p_2)^\top ( \tilde \alpha^{(1, 0)} -  \alpha^{(1, 0)}) \right) \right| 
	& = O(K) \sup_{(p_1, p_2) \in  \mathcal{S}^{(1,0)}}\left|  b_K(p_1, p_2)^\top ( \tilde \alpha^{(1, 0)} -  \alpha^{(1, 0)}) \right| \\
	& = O_P(\zeta_0(K)K\sqrt{(\log n)/ n}) +  O_P(K^{(2-s)/2}),
	\end{align*}
	where the second equality follows from result (i).
	This completes the proof.
	\bigskip
	
	(iii) From \eqref{eq:fdecomp1}, the triangle inequality, Assumptions \ref{as:sieve} and \ref{as:opnorm}, and Lemma \ref{lem:opnorm} imply
	\begin{align*}
	& \sup_{(p_1, p_2) \in  \mathcal{S}^{(1,0)}}\left| \hat g^{(1, 0)}(p_1, p_2) -  g^{(1, 0)}(p_1, p_2)  \right| \\ 
	& \leq  \sup_{(p_1, p_2) \in  \mathcal{S}^{(1,0)}}\left| b_K(p_1, p_2)^\top ( \hat \alpha^{(1, 0)} - \alpha^{(1, 0)}) \right| +  \sup_{(p_1, p_2) \in  \mathcal{S}^{(1,0)}}\left| g^{(1, 0)}(p_1, p_2) -  b_K(p_1, p_2)^\top \alpha^{(1, 0)}  \right| \\
	& \leq  \sup_{(p_1, p_2) \in \mathcal{S}^{(1,0)}}\left| b_K(p_1, p_2)^\top \mathbb{S}_K \hat{\Psi}_{nK}^{-1} \hat{\mathbf{R}}_K^\top \breve{\mathbf{T}} \left(\mathbf{g}^{(1,0)} - \hat{\mathbf{g}}^{(1,0)}\right) / n \right|  \\
	& \quad + \sup_{(p_1, p_2) \in  \mathcal{S}^{(1,0)}}\left| b_K(p_1, p_2)^\top \mathbb{S}_K \hat \Psi_{nK}^{-1} \hat{\mathbf{R}}_K^\top\left(\breve{\mathbf{T}} - \hat{\mathbf{T}} \right) \hat{\mathbf{g}}^{(1,0)} / n \right| \\
	& \quad + \sup_{(p_1, p_2) \in  \mathcal{S}^{(1,0)}}\left| b_K(p_1, p_2)^\top \mathbb{S}_K \hat{\Psi}_{nK}^{-1} \hat{\mathbf{R}}_K^\top \hat{\mathbf{e}} / n \right| +  \underbrace{ \| \hat{\mathcal{P}}_{nK} \|_{\infty} \cdot O(K^{-s/2}) }_{O_P(K^{-s/2})} +  O(K^{-s/2}).
	\end{align*}
	
	By the mean value theorem, it is easy to see that
	\begin{align}\label{eq:supgdif}
	\begin{split}
		& \sup_{(p_1,p_2) \in \mathcal{S}^{(1,0)}}\left| b_K(p_1, p_2)^\top \mathbb{S}_K \hat \Psi_{nK}^{-1} \hat{\mathbf{R}}_K^\top \breve{\mathbf{T}} \left(\mathbf{g}^{(1,0)} - \hat{\mathbf{g}}^{(1,0)}\right) / n \right| \\
		& = \sup_{(p_1,p_2) \in \mathcal{S}^{(1,0)}}\left| b_K(p_1, p_2)^\top \mathbb{S}_K \hat \Psi_{nK}^{-1} \hat{\mathbf{R}}_K^\top \breve{\mathbf{T}} \left( \partial_1\bar{\mathbf{g}}^{(1, 0)} + \partial_2\bar{\mathbf{g}}^{(1, 0)} \right) / n \right| \cdot O_P(n^{-1/2}) \\
		& \le \| \hat{\mathcal{P}}_{nK} \|_{\infty} \cdot O_P(n^{-1/2})
		= O_P(n^{-1/2}),
	\end{split}
	\end{align}
 	and that
	\begin{align*}
		\sup_{(p_1,p_2) \in \mathcal{S}^{(1,0)}} \left| b_K(p_1, p_2)^\top \mathbb{S}_K \hat \Psi_{nK}^{-1} \hat{\mathbf{R}}_K^\top \left(\breve{\mathbf{T}} - \hat{\mathbf{T}} \right) \hat{\mathbf{g}}^{(1,0)} / n \right|= O_P(n^{-1/2}),
	\end{align*}
	by Lemma \ref{lem:opnorm}.
	Further, we can easily find that
	\begin{align*}
		\left| b_K(p_1, p_2)^\top \mathbb{S}_K \hat{\Psi}_{nK}^{-1} \hat{\mathbf{R}}_K^\top \hat{\mathbf{e}} / n\right| 
		& \leq \left| b_K(p_1, p_2)^\top \mathbb{S}_K \Psi_{nK}^{-1} \mathbf{R}_K^\top \tilde{\mathbf{e}} / n\right| + o_P(\zeta_0(K) \sqrt{(\log n)/n}),
	\end{align*}
	uniformly in $(p_1, p_2) \in \mathcal{S}^{(1,0)}$.
	Thus, we obtain
	\begin{align*}
	\sup_{(p_1, p_2) \in \mathcal{S}^{(1,0)}}\left| \hat g^{(1, 0)}(p_1, p_2) - g^{(1, 0)}(p_1, p_2)  \right|
	& \leq  \sup_{(p_1, p_2) \in \mathcal{S}^{(1,0)}} \left| b_K(p_1, p_2)^\top \mathbb{S}_K \Psi_{nK}^{-1} \mathbf{R}_K^\top \tilde{\mathbf{e}} / n \right| \\
	& \quad + o_P(\zeta_0(K)\sqrt{(\log n) /n }) +  O_P(K^{-s/2}).
	\end{align*} 
	Finally, the result follows from the fact that the first term on the right-hand side is of order $O_P(\zeta_0(K)\sqrt{\log n /n })$, as shown in the proof of result (i).
	\bigskip
	
	(iv) The proof of result (iv) is analogous to that of result (ii).
\end{proof}

\begin{lemma}\label{lem:unifconv2}
	Suppose that Assumptions \ref{as:gamemodel}, \ref{as:linear}, \ref{as:data}, \ref{as:1stage}, \ref{as:eigen}(i), and \ref{as:emom}--\ref{as:rate2} hold.
	Then, we have
	\begin{align*}
	\renewcommand{\arraystretch}{2}
	\begin{array}{cl}
	\text{(i)} & \displaystyle \sup_{(p_1, p_2) \in \mathcal{S}^{(1,0)}}\left| \tilde \bE_n[U_1^{(1, 0)} | V_1 = p_1, V_2 = p_2] -  \bE [U_1^{(1, 0)} | V_1 = p_1, V_2 = p_2]   \right| \\
	& \hspace{240pt} =  O_P(\zeta_0(K) K \sqrt{\log n / n} ) +  O_P(K^{(2 - s) /2}),\\
	\text{(ii)} & \displaystyle \sup_{(p_1, p_2) \in \mathcal{S}^{(1,0)}}\left| \hat \bE_n[U_1^{(1, 0)} | V_1 = p_1, V_2 = p_2] -  \bE [U_1^{(1, 0)} | V_1 = p_1, V_2 = p_2]   \right| \\
	& \hspace{240pt} =  O_P(\zeta_0(K) K \sqrt{\log n / n} ) +  O_P(K^{(2 - s) /2}).\\
	\end{array}
	\renewcommand{\arraystretch}{1}
	\end{align*}
\end{lemma}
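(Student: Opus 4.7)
The strategy is to reduce the statement to Lemma \ref{lem:unifconv1} by exploiting the explicit closed-form expressions
\[
\tilde E_n[U_1^{(1,0)} | V_1 = p_1, V_2 = p_2] = -\frac{1}{h(p_1,p_2)}\,\frac{\partial^2 \tilde g_1^{(1,0)}(p_1,p_2)}{\partial p_1\partial p_2},
\]
\[
\hat E_n[U_1^{(1,0)} | V_1 = p_1, V_2 = p_2] = -\frac{1}{\hat h(p_1,p_2)}\,\frac{\partial^2 \hat g_1^{(1,0)}(p_1,p_2)}{\partial p_1\partial p_2},
\]
together with the analogous identity for the target $E[U_1^{(1,0)} | V_1 = p_1, V_2 = p_2]$ established in the derivation leading to \eqref{eq:EU10}.

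For part (i), the infeasible case, the density $h$ is the true joint density, so the difference reduces to
\[
\tilde E_n - E
= -\frac{1}{h(p_1,p_2)}\left(\frac{\partial^2 \tilde g_1^{(1,0)}(p_1,p_2)}{\partial p_1\partial p_2} - \frac{\partial^2 g_1^{(1,0)}(p_1,p_2)}{\partial p_1\partial p_2}\right).
\]
Since $h$ is bounded away from zero uniformly on $\mathcal{S}^{(1,0)}$ by Assumption \ref{as:1stage}(iii), the claimed rate follows immediately from Lemma \ref{lem:unifconv1}(ii).

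For part (ii), I would split the error into a ``derivative part'' and a ``density part'' via
\[
\hat E_n - E
= -\frac{1}{\hat h(p_1,p_2)}\left(\frac{\partial^2 \hat g_1^{(1,0)}(p_1,p_2)}{\partial p_1\partial p_2} - \frac{\partial^2 g_1^{(1,0)}(p_1,p_2)}{\partial p_1\partial p_2}\right)
- \left(\frac{1}{\hat h(p_1,p_2)} - \frac{1}{h(p_1,p_2)}\right)\frac{\partial^2 g_1^{(1,0)}(p_1,p_2)}{\partial p_1\partial p_2}.
\]
Under Assumption \ref{as:1stage}(i) and (iii), $\hat h$ is bounded and bounded away from zero uniformly on $\mathcal{S}^{(1,0)}$ w.p.a.1, so the first term inherits the rate $O_P(\zeta_0(K)K\sqrt{\log n/n}) + O_P(K^{(2-s)/2})$ from Lemma \ref{lem:unifconv1}(iv). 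For the second term, Lipschitz continuity of $h_\rho$ in $\rho$ together with $\|\hat\theta_n - \theta^*\| = O_P(n^{-1/2})$ gives $\sup_{(p_1,p_2)\in\mathcal{S}^{(1,0)}}|1/\hat h(p_1,p_2) - 1/h(p_1,p_2)| = O_P(n^{-1/2})$; the true cross-derivative $\partial^2 g_1^{(1,0)}/(\partial p_1\partial p_2)$ is uniformly bounded on $\mathcal{S}^{(1,0)}$ by Assumption \ref{as:sieve} (which requires $s \ge 2$ continuous derivatives of $g_1^{(1,0)}$). Hence the second term is $O_P(n^{-1/2})$, which is dominated by the first term's rate, completing the proof.

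I do not anticipate a serious obstacle here: the work has effectively been done in Lemma \ref{lem:unifconv1}, and what remains is simply algebraic decomposition and invocation of Assumption \ref{as:1stage}. The only minor care point is to verify uniform boundedness of $1/\hat h$ on $\mathcal{S}^{(1,0)}$ w.p.a.1; this follows because $\max_{1 \le i \le n}|\hat P_{ji}^d - P_{ji}^d| = O_P(n^{-1/2})$ ensures that the plug-in density $\hat h$ converges to $h$ uniformly at parametric rate under the assumed Lipschitz continuity in $\rho$ and continuity in $(p_1,p_2)$.
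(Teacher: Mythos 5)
Your proposal is correct and follows essentially the same route as the paper: part (i) is immediate from Lemma \ref{lem:unifconv1}(ii), and part (ii) uses the identical decomposition into a derivative term (controlled by Lemma \ref{lem:unifconv1}(iv) and uniform positivity of $\hat h$ w.p.a.1) plus a density term controlled by $|1/\hat h - 1/h| = O_P(n^{-1/2})$ via Lipschitz continuity of $h_\rho$ in $\rho$ and $\sqrt{n}$-consistency of $\hat\rho_n$. The only cosmetic remark is that the uniform convergence of $\hat h$ needs only $|\hat\rho_n-\rho^*|=O_P(n^{-1/2})$, not the bound on $\max_i|\hat P_{ji}^d-P_{ji}^d|$, since the density is evaluated at a fixed $(p_1,p_2)$.
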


\begin{proof}
	(i) The proof of result (i) is immediate from Lemma \ref{lem:unifconv1}(ii).
	\bigskip
	
	(ii) Assumptions \ref{as:1stage}(i) and (iii) imply that $\hat h(p_1, p_2)$ is uniformly consistent for $h(p_1, p_2)$ and that $\hat h(p_1, p_2)$ is uniformly bounded away from zero w.p.a.1.
	Thus, uniformly in $(p_1, p_2) \in \mathcal{S}^{(1,0)}$, we have
	\begin{align}\label{eq:hdif}
	\begin{split}
	\left|  \frac{1}{h(p_1, p_2)} - \frac{1}{\hat h(p_1, p_2)}   \right|  
	& =  \left|  \frac{\hat h(p_1, p_2) - h(p_1, p_2)}{h(p_1, p_2) \cdot \hat h(p_1, p_2)}   \right|  \\
	& \leq  O_P(1) \cdot \left|  \hat h(p_1, p_2) - h(p_1, p_2) \right| = O_P(1)  \cdot |\hat \rho - \rho^*| = O_P(n^{-1/2}).
	\end{split}
	\end{align}
	
	From the triangle and Cauchy--Schwarz inequalities, \eqref{eq:hdif}, and Lemma \ref{lem:unifconv1}(iv), it holds that
	\begin{align*}
	&\left| \hat \bE_n[U_1^{(1, 0)} | V_1 = p_1, V_2 = p_2] -  \bE [U_1^{(1, 0)} | V_1 = p_1, V_2 = p_2]\right| \\
	& \leq 	\left| \frac{1}{\hat h(p_1, p_2)} \partial_{p_1 p_2}\left( g^{(1,0)}(p_1, p_2)  - \hat g^{(1,0)}(p_1, p_2) \right) \right| + \left| \left( \frac{1}{h(p_1, p_2)} - \frac{1}{\hat h(p_1, p_2)} \right) \partial_{p_1 p_2} [g^{(1,0)}(p_1, p_2)]  \right| \\
	& = O_P(\zeta_0(K)K \sqrt{\log n/n}) + O_P(K^{(2 - s)/2}) + O_P(n^{-1/2}),
	\end{align*}
	uniformly in $(p_1, p_2) \in \mathcal{S}^{(1,0)}$.
\end{proof}

\setcounter{table}{0}
\setcounter{figure}{0}

\section{Appendix: Supplementary Technical Material}\label{appendix:supptech}

\subsection{Identification of the parametric game model in Section \ref{sec:estimation}}\label{subsec:ident_wo_inf}

In this appendix, we establish the identification of the parametric game model in Section \ref{sec:estimation}.
For exposition, we again display the likelihood function of our game model:
\begin{align*}
	&\begin{array}{ll}
		P_j^0(\gamma) = F_{\varepsilon_j}(W_j^\top \gamma_0), & \quad P_j^1(\gamma) = F_{\varepsilon_j}(W_j^\top \gamma_0 + \Delta(W_j^\top \gamma_1)), \\ [5 pt]
		\mathcal{L}^{(1, 0)}(\theta) = P_1^0(\gamma) - H_\rho(P_1^0(\gamma), P_2^1(\gamma)), & \quad
		\mathcal{L}^{(1, 1)}(\theta) = H_\rho(P_1^1(\gamma),P_2^1(\gamma)) - \Lambda(\tilde X^\top \lambda) \cdot \mathcal{L}_{\text{mul}}(\theta), \\ [5pt]
		\mathcal{L}^{(0, 1)}(\theta) = P_2^0(\gamma) - H_\rho(P_1^1(\gamma), P_2^0(\gamma)), & \quad
		\mathcal{L}^{(0, 0)}(\theta) = 1 - \sum_{(d_1,d_2) \neq (0,0)}\mathcal{L}^{(d_1, d_2)}(\theta),
	\end{array}
\end{align*}
where $\mathcal{L}_{\text{mul}}(\theta) = H_\rho(P_1^1(\gamma), P_2^1(\gamma)) - H_\rho(P_1^1(\gamma), P_2^0(\gamma)) - H_\rho(P_1^0(\gamma), P_2^1(\gamma)) + H_\rho(P_1^0(\gamma), P_2^0(\gamma))$.
Denote the true value of $\theta = (\gamma^\top, \lambda^\top, \rho)^\top$ as $\theta^* = (\gamma^{*\top}, \lambda^{*\top}, \rho^*)^\top$ and the parameter space for $\theta$ as $\Theta$.
The individual log-likelihood function is given by $\ell(\theta) \coloneqq \sum_{d_1, d_2} I^{(d_1, d_2)} \log \mathcal{L}^{(d_1, d_2)}(\theta)$.

Let $\tilde \theta \coloneqq (\gamma^\top, \lambda^\top)^\top$, and denote the parameter space for $\tilde \theta$ as $\tilde \Theta$.
For each $\rho \in (\underline{c}, \bar c)$, define
\begin{align}\label{eq:cml}
	\tilde \theta^*(\rho) 
	\coloneqq \argmax_{\tilde \theta \in \tilde \Theta} \; \bE[ \ell(\tilde \theta, \rho)].
\end{align}
Here, by a straightforward calculation with Jensen's inequality, we can show that $\theta^*$ is a maximizer of $\bE[ \ell(\theta)]$; however, its uniqueness is not ensured at this stage.
With this fact, if we can show that (i) $\tilde \theta^*(\rho)$ is unique for all $\rho$, this implies the equivalence $\theta^* = (\tilde \theta^*(\rho^*)^\top, \rho^*)^\top$.
Further, if (ii) $\rho^*$ is identifiable, $\theta^*$ is identified as $\theta^* = (\tilde \theta^*(\rho^*)^\top, \rho^*)^\top$, as desired.
We first establish result (i) under the following assumption.
\begin{assumption} \label{as:parametricid1}
	\hfil
	\begin{enumerate}[(i)]
		\item $\bE|\ell(\theta)| < \infty$ for all $\theta \in \Theta$.
		\item For each $\rho \in (\underline{c}, \bar c)$, the copula $H_{\rho}(\cdot, \cdot)$ is strictly increasing in its arguments.
		\item $\mathcal{L}_{\text{mul}}(\theta) > 0$ for all $\theta \in \Theta$.
		\item $\tilde X$ includes a constant and does not lie in a proper linear subspace of $\mathbb{R}^{\mathrm{dim}(\tilde X)-1}$ a.s.
		\item For all $\rho \in (\underline{c}, \bar c)$ and $\gamma \neq \gamma^*(\rho)$, at least one of the following probabilities is positive:
		\begin{align*}
			& \Pr \left[ W_1^\top \left( \gamma_0^*(\rho) - \gamma_0 \right) \ge 0, \; W_2^\top \left( \gamma_0^*(\rho) - \gamma_0 \right) + \left[ \Delta(W_2^\top \gamma_1^*(\rho)) - \Delta(W_2^\top \gamma_1) \right] < 0 \right], \\
			& \Pr \left[ W_1^\top \left( \gamma_0^*(\rho) - \gamma_0 \right) \le 0, \; W_2^\top \left( \gamma_0^*(\rho) - \gamma_0 \right) + \left[ \Delta(W_2^\top \gamma_1^*(\rho)) - \Delta(W_2^\top \gamma_1) \right] > 0 \right], \\
			& \Pr \left[ W_2^\top \left( \gamma_0^*(\rho) - \gamma_0 \right) \ge 0, \; W_1^\top \left( \gamma_0^*(\rho) - \gamma_0 \right) + \left[ \Delta(W_1^\top \gamma_1^*(\rho)) - \Delta(W_1^\top \gamma_1) \right] < 0 \right], \\
			& \Pr \left[ W_2^\top \left( \gamma_0^*(\rho) - \gamma_0 \right) \le 0, \; W_1^\top \left( \gamma_0^*(\rho) - \gamma_0 \right) + \left[ \Delta(W_1^\top \gamma_1^*(\rho)) - \Delta(W_1^\top \gamma_1) \right] > 0 \right].
		\end{align*}
	\end{enumerate}
\end{assumption}

Assumptions \ref{as:parametricid1}(i)--(ii) are mild regularity conditions.
Assumption \ref{as:parametricid1}(iii) requires that multiple equilibria occur under any parameter values.
Assumption \ref{as:parametricid1}(iv) is the standard rank condition.
In Assumption \ref{as:parametricid1}(v), we restrict the parameter space $\Theta$, the functional form of $\Delta$, and the distribution of $W$.
Importantly, this assumption may be satisfied without large support regressors when the parameter space of $\gamma_0$ is bounded and $\Delta$ is a bounded function.
For instance, if at least one of $\Pr[ W_1^\top a_0 \ge 0, \; W_2^\top a_0 + a_1 < 0 ]$ and $\Pr[ W_1^\top a_0 \le 0, \; W_2^\top a_0 + a_1 > 0 ]$ is positive for any non-zero $a_0$ and $a_1$ bounded by some large constants, Assumption \ref{as:parametricid1}(v) is satisfied, which is possible even when $(W_1, W_2)$ do not have full support.

\begin{theorem}\label{thm:parametricid1}
	Under Assumptions \ref{as:gamemodel} and \ref{as:parametricid1}, $\tilde \theta^*(\rho)$ is unique and identified for all $\rho \in (\underline{c}, \bar c)$.
\end{theorem}

The proof is relegated to the end of this subsection.
Now, the remaining task is to prove result (ii), the identifiability of $\rho^*$. 
To this end, define
\begin{align*}
	L^{(1,0)}(\rho, W) &\coloneqq P_1^0(\gamma^*(\rho)) - H_{\rho}[P_1^0(\gamma^*(\rho)), P_2^1(\gamma^*(\rho))], \\
	L^{(0,1)}(\rho, W) &\coloneqq P_2^0(\gamma^*(\rho)) - H_{\rho}[P_1^1(\gamma^*(\rho)), P_2^0(\gamma^*(\rho))], \\
	L^{(1,1)}(\rho, W) &\coloneqq H_{\rho}[P_1^1(\gamma^*(\rho)), P_2^1(\gamma^*(\rho))] - \Lambda(\tilde X^\top \lambda^*(\rho)) \cdot \mathcal{L}_{\text{mul}}(\tilde \theta^*(\rho), \rho), \\
	L^{(0,0)}(\rho, W) &\coloneqq 1 - \sum_{(d_1,d_2) \neq (0,0)} L^{(d_1,d_2)}(\rho, W).
\end{align*}
By construction, we have
\begin{align}\label{eq:uniquerho}
	\Pr[  D = (d_1, d_2) | W = w ] = L^{(d_1, d_2)}(\rho^*, w)
	\quad
	\text{for $(d_1,d_2) \in \{ 0, 1 \}^2$}.
\end{align}
Noting that the left-hand side is an observable quantity, $\rho^*$ is identified if some $w \in \text{supp}[W]$ and $(d_1,d_2) \in \{ 0, 1 \}^2$ exist such that this equation is uniquely solved at $\rho^*$.

To proceed, define
\begin{align*}
	\bar \ell^{(11)}(\tilde \theta, \rho) 
	\coloneqq \frac{\partial^2}{\partial \tilde \theta \partial \tilde \theta^\top} \bE[ \ell(\tilde \theta, \rho) ].
\end{align*}
The second-order condition for \eqref{eq:cml} implies that $\bar \ell^{(11)}(\tilde \theta^*(\rho), \rho)$ is negative semidefinite for all $\rho$.

\begin{assumption}\label{as:parametricid2}
	\hfil
	\begin{enumerate}[(i)]
		\item $\Theta$ is an open subset in Euclidean space.
		\item $F_{\varepsilon_1}$, $F_{\varepsilon_2}$, $\Delta$, and $\Lambda$ are twice continuously differentiable.
		\item $H_{\rho}(\cdot, \cdot)$ is twice continuously differentiable in its arguments and $\rho$.
		\item $ \bar \ell^{(11)}(\tilde \theta^*(\rho), \rho)$ is negative definite for all $\rho \in (\underline{c}, \bar c)$.
		\item There exist $w \in \text{supp}[W]$ and $(d_1,d_2) \in \{ 0, 1 \}^2$ such that $\partial L^{(d_1,d_2)}(\rho, w)/\partial \rho \neq 0$ for all $\rho \in (\underline{c}, \bar c)$.
	\end{enumerate}
\end{assumption}

Assumptions \ref{as:parametricid2}(i)--(iv) ensure that $L^{(d_1,d_2)}(\rho, W)$ is continuously differentiable in $\rho$.
Assumptions \ref{as:parametricid2}(iv) is a natural consequence from Theorem \ref{thm:parametricid1}.
In conjunction with the continuous differentiability, Assumption \ref{as:parametricid2}(v) implies that $L^{(d_1,d_2)}(\rho, w)$ is strictly monotonic in $\rho$ for some $w \in \text{supp}[W]$ and $(d_1,d_2) \in \{ 0, 1 \}^2$.
Then, $\rho^*$ can be identified as a unique solution of \eqref{eq:uniquerho}.

To clarify Assumption \ref{as:parametricid2}(v), consider the partial derivative of $L^{(1,0)}(\rho, W)$ with respect to $\rho$ for instance.
By a straightforward calculation, we can observe that
\begin{align} \label{eq:partialf1}
\begin{split}
	& \frac{\partial L^{(1,0)}(\rho, W)}{\partial \rho} \\
	& = W_1^\top \dot \gamma_0^*(\rho) f_{\varepsilon_1} \left( W_1^\top \gamma_0^*(\rho) \right) \left[ 1 - H_{\rho}^{(1)} \left[ P_1^0(\gamma^*(\rho)), P_2^1(\gamma^*(\rho)) \right] \right] \\ 
	& \quad - W_2^\top \left[ \dot \gamma_0^*(\rho) + \dot \gamma_1^*(\rho) \dot \Delta(W_2^\top \gamma_1^*(\rho)) \right] f_{\varepsilon_2} \left( W_2^\top \gamma_0^*(\rho) + \Delta(W_2^\top \gamma_1^*(\rho)) \right) H_{\rho}^{(2)} \left[ P_1^0(\gamma^*(\rho)), P_2^1(\gamma^*(\rho)) \right] \\
	& \quad - H_{\rho}^{(\rho)} \left[ P_1^0(\gamma^*(\rho)), P_2^1(\gamma^*(\rho)) \right],
\end{split}
\end{align}
where $H_{\rho}^{(1)}$, $H_{\rho}^{(2)}$, and $H_{\rho}^{(\rho)}$ are respectively the partial derivatives of $H_{\rho}(\cdot, \cdot)$ with respect to the first and second arguments and $\rho$, $f_{\varepsilon_j}$ and $\dot \Delta$ are respectively the derivatives of $F_{\varepsilon_j}$ and $\Delta$, $\dot \gamma_0^*(\rho) \coloneqq \partial \gamma_0^*(\rho) / \partial \rho$, and $\dot \gamma_1^*(\rho) \coloneqq \partial \gamma_1^*(\rho) / \partial \rho$.
From this expression, we can see that $\partial L^{(1,0)}(\rho, w)/\partial \rho \neq 0$ for all $\rho$ for some $w \in \text{supp}[W]$ if $(W_1, W_2)$ have sufficiently rich variation.
In particular, if the functions on the right-hand side of \eqref{eq:partialf1} are bounded and at least one element of $\dot \gamma_0^* f_{\varepsilon_1}  [1 - H_{\rho}^{(1)}]$ or $[\dot \gamma_0^* + \dot \gamma_1^* \dot \Delta] f_{\varepsilon_2} H_{\rho}^{(2)}$ has a non-zero constant sign, Assumption \ref{as:parametricid2}(v) holds for $w \in \text{supp}[W]$ satisfying $w_1^\top a_1 - w_2^\top a_2 - a_3 < 0$ for all $(a_1, a_2, a_3)$ in the ranges of $\dot \gamma_0^* f_{\varepsilon_1}  [ 1 - H_{\rho}^{(1)}]$, $[ \dot \gamma_0^* + \dot \gamma_1^* \dot \Delta ] f_{\varepsilon_2} H_{\rho}^{(2)}$, and $H_{\rho}^{(\rho)}$, respectively.
Again, the existence of large support regressors is not necessary to meet such conditions.

\begin{theorem}\label{thm:parametricid2}
	Under Assumptions \ref{as:gamemodel}, \ref{as:parametricid1}, and \ref{as:parametricid2}, $\rho^*$ is identified.
	As a result, $\theta^*$ is identified by $\theta^* = (\tilde \theta^*(\rho^*)^\top, \rho^*)^\top$ as a unique maximizer of $\bE[\ell(\theta)]$.
\end{theorem}

\subsubsection{Proof of Theorem \ref{thm:parametricid1}}

Fix arbitrary $\rho \in (\underline{c}, \bar c)$ and denote $\theta^*(\rho) = (\tilde \theta^*(\rho)^\top, \rho)^\top = (\gamma^*(\rho)^\top, \lambda^*(\rho)^\top, \rho)^\top$.
First, suppose that $\gamma^*(\rho)$ is identified.
It is straightforward from the functional form of $\mathcal{L}^{(1, 1)}(\theta)$ and Assumption \ref{as:parametricid1}(iii) to see that $\Lambda(\tilde x^\top \lambda^*(\rho))$ is also identified for all $\tilde x \in \text{supp}[\tilde X]$.
Further, since $\Lambda^{-1}$ exists by Assumption \ref{as:gamemodel}(iii), $\lambda^*(\rho)$ is identified under Assumption \ref{as:parametricid1}(iv).
Consequently, with Assumption \ref{as:parametricid1}(i) and Jensen's inequality, we can show that $\tilde \theta^*(\rho)$ is a unique maximizer of $\bE[\ell(\tilde \theta, \rho)]$ for each $\rho$.

We next prove that $\gamma^*(\rho)$ is indeed identifiable by showing
\begin{align}\label{eq:distinct}
	\Pr\left[ \left( \mathcal{L}^{(1, 0)}(\theta^*(\rho)) \neq \mathcal{L}^{(1, 0)}(\theta) \right) \lor \left( \mathcal{L}^{(0, 1)}(\theta^*(\rho)) \neq \mathcal{L}^{(0, 1)}(\theta) \right) \right] > 0
\end{align}
for any $\theta = (\gamma^\top, \lambda^\top, \rho)^\top$ such that $\gamma \neq \gamma^*(\rho)$ (notice that $\rho$ is common to $\theta$ and $\theta^*(\rho)$).
Since $\mathcal{L}^{(1, 0)}(\theta)$ and $\mathcal{L}^{(0, 1)}(\theta)$ do not depend on $\lambda$, the above inequality means that the distribution of the observed variables generated by $\gamma^*(\rho)$ differs from that generated by $\gamma$, implying that $\gamma^*(\rho)$ is observationally distinct from $\gamma$.

Because $\mathcal{L}^{(1, 0)}(\theta)$ is strictly increasing and decreasing in $P_1^0(\gamma)$ and $P_2^1(\gamma)$, respectively, we have 
\begin{align*}
	\left( P_1^0(\gamma^*(\rho)) \ge P_1^0(\gamma) \right) \land \left( P_2^1(\gamma^*(\rho)) < P_2^1(\gamma) \right) 
	& \quad \Longrightarrow \quad \mathcal{L}^{(1, 0)}(\theta^*(\rho)) > \mathcal{L}^{(1, 0)}(\theta),\\
	\left( P_1^0(\gamma^*(\rho)) \le P_1^0(\gamma) \right) \land \left( P_2^1(\gamma^*(\rho)) > P_2^1(\gamma) \right) 
	& \quad \Longrightarrow \quad \mathcal{L}^{(1, 0)}(\theta^*(\rho)) < \mathcal{L}^{(1, 0)}(\theta).
\end{align*}
Similarly, changing the roles of players 1 and 2 leads to
\begin{align*}
	\left( P_2^0(\gamma^*(\rho)) \ge P_2^0(\gamma) \right) \land \left( P_1^1(\gamma^*(\rho)) < P_1^1(\gamma) \right) 
	& \quad \Longrightarrow \quad \mathcal{L}^{(0, 1)}(\theta^*(\rho)) > \mathcal{L}^{(0, 1)}(\theta),\\
	\left( P_2^0(\gamma^*(\rho)) \le P_2^0(\gamma) \right) \land \left( P_1^1(\gamma^*(\rho)) > P_1^1(\gamma) \right) 
	& \quad \Longrightarrow \quad \mathcal{L}^{(0, 1)}(\theta^*(\rho)) < \mathcal{L}^{(0, 1)}(\theta).
\end{align*}
Because $F_{\varepsilon_j}$ is strictly increasing by Assumption \ref{as:gamemodel}(ii), the probabilities of the left-hand sides of these are equivalent to the probabilities given in Assumption \ref{as:parametricid1}(v).
Thus, \eqref{eq:distinct} holds.
\qed

\subsubsection{Proof of Theorem \ref{thm:parametricid2}}

We provide the proof when Assumption \ref{as:parametricid2}(v) is satisfied with $(d_1,d_2)=(1,0)$; the other cases are analogous.
First, we prove the continuous differentiability of $L^{(1,0)}(\rho, W)$ in $\rho$.
Because the terms other than $\dot{\gamma}_0^*(\rho)$ and $\dot{\gamma}_1^*(\rho)$ in \eqref{eq:partialf1} are continuous in $\rho$ under Assumptions \ref{as:parametricid2}(ii) and (iii), it suffices to prove the continuous differentiability of $\gamma^*(\rho)$.
By the first-order condition for \eqref{eq:cml}, we have
\begin{align*}
	\frac{\partial \bE[ \ell(\tilde \theta^*(\rho), \rho) ]}{\partial \tilde \theta}  = \mathbf{0}_{\mathrm{dim}(\tilde \theta)}
\end{align*}
for all $\rho \in (\underline{c}, \bar c)$.
By the implicit function theorem with Assumptions \ref{as:parametricid2}(i)--(iv), it holds that $\tilde \theta^*(\rho)$ is continuously differentiable with respect to $\rho$ in the following form:
\begin{align*}
	\frac{\partial \tilde \theta^*(\rho)}{\partial \rho}
	= - \left[ \bar \ell^{(11)}(\tilde \theta^*(\rho), \rho) \right]^{-1} \bar \ell^{(12)}(\tilde \theta^*(\rho), \rho),
\end{align*}
where $\bar \ell^{(12)}(\tilde \theta, \rho) \coloneqq \partial^2 \bE[ \ell(\tilde \theta, \rho) ] / (\partial \tilde \theta \partial \rho)$.
This implies that $\gamma^*(\rho)$ is continuously differentiable in $\rho$, and thus $\partial L^{(1,0)}(\rho, W)/\partial \rho$ is well defined.

Now, consider some $w \in \text{supp}[W]$ satisfying Assumption \ref{as:parametricid2}(v) for $(d_1,d_2) = (1,0)$.
For such $w$, for all $\rho$ either $\partial L^{(1,0)}(\rho, w) / \partial \rho > 0$ or $\partial L^{(1,0)}(\rho, w) / \partial \rho < 0$ is true, implying the strict monotonicity of $L^{(1,0)}(\rho, w)$ in $\rho$.
As a result, $\rho^*$ is identified as a unique solution of \eqref{eq:uniquerho} for $(d_1,d_2) = (1,0)$.
Thus, in conjunction with Theorem \ref{thm:parametricid1}, $\theta^*$ is identified by $\theta^* = (\tilde \theta^*(\rho^*)^\top, \rho^*)^\top$ as a unique maximizer of $\bE[ \ell(\theta) ]$.
\qed

\subsection{Identification of the marginal treatment response function for non-strategic models} \label{subsec:nointeraction}

When there is no strategic interaction, we can identify the MTR functions in a straightforward manner.
Suppose that the payoff function for player $j$ is given by $u_j(d_j, d_{-j}) = d_j [ \pi_j(W_j) - \varepsilon_j ]$ for $(d_j, d_{-j}) \in \{ 0, 1 \}^2$.
Based on the payoff maximization principle, the treatment equation is given by $D_j = \mathbf{1} \{ \pi_j(W_j) \ge \varepsilon_j \}$.
This can be rewritten as $D_j = \mathbf{1}\{ P_j \ge V_j \}$, where $P_j = F_{\varepsilon_j | X_j}( \pi_j(W_j) )$ and $V_j = F_{\varepsilon_j | X_j}( \varepsilon_j )$. 
Because the model is no longer a ``game'', conditioning on the opponent's $X_{-j}$ is unnecessary. 
By construction, $V_j$ is distributed as $\text{Uniform}[0, 1]$ given $X_j$, implying that $P_j$ is identified by $\bE [ D_j | W_j ] = P_j$ under Assumption \ref{as:IV}.

The identification strategy in this study can be directly applied to identifying the MTR $m^{(d_1, d_2)}(x, p_1, p_2) = \bE [Y_1^{(d_1, d_2)} | X = x, V_1 = p_1, V_2 = p_2]$ for all $(d_1, d_2) \in \{0,1\}^2$.
In particular, because the current model is a complete econometric model, the problem of multiple equilibria does not occur. 
Below, we present the identification result only for the case of $(d_1, d_2) = (0, 0)$; the other cases are analogous.
Letting $\psi^{(d_1, d_2)} (x, p_1, p_2) = \bE [ I^{(d_1, d_2)} Y_1 | X = x, P_1 = p_1, P_2 = p_2]$, we observe that
\begin{align*}
	\psi^{(0, 0)} (x, p_1, p_2)
	& = \bE [ Y_1^{(0, 0)} | D = (0, 0), X = x, P_1 = p_1, P_2 = p_2 ] \Pr[D = (0, 0) | X = x, P_1 = p_1, P_2 = p_2]\\
	& = \bE [ Y_1^{(0, 0)} | V_1 > p_1, V_2 > p_2, X = x ] \Pr[ V_1 > p_1, V_2 > p_2 | X = x] \\
	& = \int_{p_2}^1 \int_{p_1}^1 m^{(0, 0)}(x, v_1, v_2) h(v_1, v_2 | x) \mathrm{d}v_1 \mathrm{d}v_2,
\end{align*}
where the second equality follows from Assumption \ref{as:IV}(i).
Thus, we have
\begin{align*}
	m^{(0, 0)}(x, p_1, p_2) 
	= \frac{\partial_{p_1 p_2} [\psi^{(0, 0)} (x, p_1, p_2)]}{h(p_1, p_2 | x)} .
\end{align*}
This equality implies that $m^{(0, 0)}(x, p_1, p_2)$ is identified for any $(p_1, p_2) \in \text{supp}[P_1, P_2 | X = x ]$, provided that $h(p_1, p_2 | x)$ is identified or known.

\subsection{The estimator of the total marginal treatment effect and its limiting distribution}\label{subsec:totalMTE}

We can estimate $\text{MTE}_{\text{total}}(x, p^0_1, p^0_2) = m^{(1,1)}(x, p_1^0, p_2^0) - m^{(0,0)}(x, p_1^0, p_2^0)$ in the same manner as described in Section \ref{sec:estimation}.
The estimator of $m^{(0,0)}(x, p_1^0, p_2^0)$ can be obtained by
\begin{align*}
	\hat m^{(0, 0)}(x, p_1^0, p_2^0) \coloneqq x_1^\top \hat \beta_1^{(0, 0)} + \frac{\ddot{b}_K(p_1^0, p_2^0)^\top \hat \alpha_1^{(0, 0)}}{\hat h(p_1^0, p_2^0)} .
\end{align*}
The definitions of $\hat \beta_1^{(0, 0)}$ and $\hat \alpha_1^{(0, 0)}$ can be found in \eqref{eq:coef00}.
For the estimation of $m^{(1,1)}(x, p_1^0, p_2^0)$, we first estimate the following partially linear additive regression model:
\begin{align*}
\begin{split}
	\tilde I^{(1, 1)} Y_1 = \tilde I^{(1, 1)} X_{1}^\top \beta_1^{(1, 1)} 
	& + \tilde T^{(1, 1)} \lambda_X^* g_1^{(1, 1)}(P_1^0, P_2^0) + \tilde T^{(1, 1)} \lambda_X^* g_2^{(1, 1)}(P_1^1, P_2^0) \\
	& + \tilde T^{(1, 1)} \lambda_X^* g_3^{(1, 1)}(P_1^0, P_2^1) + \tilde T^{(1, 1)} (1 - \lambda_X^*) g_4^{(1, 1)}(P_1^1, P_2^1) + \tilde e^{(1, 1)},
\end{split}
\end{align*}
where $\tilde I^{(1, 1)} \coloneqq \tau_\varpi(\mathcal{L}^{(1, 1)}) I^{(1, 1)}$, $\tilde T^{(1, 1)} \coloneqq \tilde I^{(1, 1)} / \mathcal{L}^{(1,1)}$, and $\bE [ \tilde e^{(1, 1)} | I^{(1, 1)}, X, \mathbf{P}] = 0$.\footnote{
	The definitions of $g_l^{(1,1)}$'s are as follows: $g_1^{(1,1)}(p_1^0, p_2^0) = -\int_0^{p_1^0} \int_0^{p_2^0} \bE [U_{1}^{(1, 1)}| V_1= v_1, V_2= v_2] h(v_1, v_2) \mathrm{d}v_1 \mathrm{d}v_2$, $g_2^{(1,1)}(p_1^1, p_2^0) = \int_0^{p_1^1} \int_0^{p_2^0} \bE [U_{1}^{(1, 1)}| V_1= v_1, V_2= v_2] h(v_1, v_2) \mathrm{d}v_1 \mathrm{d}v_2$, and $g_3^{(1,1)}(p_1^0, p_2^1)$ and $g_4^{(1,1)}(p_1^1, p_2^1)$ are obtained by replacing $(p_1^1, p_2^0)$ in the right-hand side of $g_2^{(1,1)}(p_1^1, p_2^0)$ with $(p_1^0, p_2^1)$ and $(p_1^1, p_2^1)$, respectively.
}
Using the series approximation $g_l^{(1,1)}(\cdot, \cdot) \approx b_K(\cdot, \cdot)^\top \alpha_l^{(1,1)}$ for each $l = 1, \ldots, 4$, the same estimation procedure as in the case of $D = (0,0)$ gives the LS estimator $(\hat \beta_1^{(1,1)}, \hat \alpha_l^{(1, 1)})$ of $(\beta_1^{(1, 1)}, \alpha_l^{(1,1)})$.
Then, we can estimate $m^{(1, 1)}(x, p_1^0, p_2^0)$ as
\begin{align*}
	\hat m^{(1, 1)}(x, p_1^0, p_2^0) \coloneqq x_1^\top \hat \beta_1^{(1, 1)}-\frac{\ddot{b}_K(p_1^0, p_2^0)^\top \hat \alpha_1^{(1, 1)}}{\hat h(p_1^0, p_2^0)}.
\end{align*}
Finally, $\text{MTE}_{\text{total}}(x, p^0_1, p^0_2)$ can be estimated by
\begin{align*}
	 \widehat{\text{MTE}}_{\text{total}}(x, p^0_1, p^0_2) \coloneqq \hat m^{(1, 1)}(x, p_1^0, p_2^0) - \hat m^{(0, 0)}(x, p_1^0, p_2^0).
\end{align*}

Under similar conditions to those in Theorem \ref{thm:normal1}, we can have
\begin{align*}
	\frac{\sqrt{n} \left( \hat m^{(0, 0)}(x, p_1^0, p_2^0)  -  m^{(0, 0)}(x, p_1^0, p_2^0) \right)}{\sigma_{K}^{(0,0)}(p_1^0, p_2^0)} \overset{d}{\to} N(0, 1) \;\; \text{and} \;\; \frac{\sqrt{n} \left( \hat m^{(1, 1)}(x, p_1^0, p_2^0)  -  m^{(1, 1)}(x, p_1^0, p_2^0) \right)}{\sigma_{K}^{(1,1)}(p_1^0, p_2^0)} \overset{d}{\to} N(0, 1), 
\end{align*}
where
\begin{align*}
	\sigma_{K}^{(0,0)}(p_1^0, p_2^0) &\coloneqq \frac{\sqrt{ \ddot{b}_K(p_1^0, p_2^0)^\top \mathbb{S}_{1K} \left[ \Psi_K^{(0,0)}\right]^{-1} \Sigma^{(0, 0)}_K  \left[ \Psi_K^{(0,0)}\right]^{-1} \mathbb{S}_{1K}^\top \ddot{b}_K(p_1^0, p_2^0)}}{h(p_1^0, p_2^0) },\\
	\sigma_{K}^{(1,1)}(p_1^0, p_2^0) &\coloneqq \frac{\sqrt{ \ddot{b}_K(p_1^0, p_2^0)^\top \mathbb{S}_{1K} \left[ \Psi_K^{(1,1)}\right]^{-1} \Sigma^{(1, 1)}_K  \left[ \Psi_K^{(1,1)}\right]^{-1} \mathbb{S}_{1K}^\top \ddot{b}_K(p_1^0, p_2^0)}}{h(p_1^0, p_2^0) },
\end{align*}
and $\mathbb{S}_{1K} \coloneqq (\mathbf{0}_{K \times \mathrm{dim}(X)}, \mathbf{I}_K, \mathbf{0}_{K \times 3K} )$.
The definitions of the matrices $\Psi_K^{(1,1)}$ and $\Sigma_K^{(1,1)}$ are clear from the context.
Consequently, the limiting distribution of $\widehat{\text{MTE}}_{\text{total}}(x, p_1^0, p_2^0)$ can be given by
\begin{align*}
    \frac{\sqrt{n} \left( \widehat{\text{MTE}}_{\text{total}}(x, p_1^0, p_2^0)  -  \text{MTE}_{\text{total}}(x, p_1^0, p_2^0) \right)}{\sqrt{\left[\sigma_K^{(1,1)}(p_1^0, p_2^0)\right]^2 + \left[\sigma_{K}^{(0,0)}(p_1^0, p_2^0)\right]^2}} \overset{d}{\to} N(0, 1).
\end{align*}

\subsection{Alternative estimation strategies for $m^{(0,0)}(x, p_1, p_2)$ and $m^{(1,1)}(x, p_1, p_2)$}\label{subsec:alt_m00}

\Copy{R2-7-b-2}{
	Here, we describe two alternative estimators for $m^{(0, 0)}(x, p_1, p_2) = x_1^\top \beta_1^{(0, 0)} + \bE [U_1^{(0, 0)} | V_1 = p_1, V_2 = p_2]$.
	The same discussions as below apply to the estimation of $m^{(1, 1)}(x, p_1, p_2)$.
	
	Let 
	\begin{align*}
		g^{(0,0)}(p_1, p_2) \coloneqq \int_{p_1}^1 \int_{p_2}^1 \bE [U_{1}^{(0, 0)}| V_1= v_1, V_2= v_2] h(v_1, v_2) \mathrm{d}v_1 \mathrm{d}v_2.
	\end{align*}
	Then, recalling \eqref{eq:additive} and the discussion in Footnote \ref{foot:gfunc}, we can observe that
	\begin{align*}
		\begin{split}
			& \bE [U_{1}^{(0, 0)}| I^{(0, 0)} = 1, X, \mathbf{P}]  \\
			& = \frac{\lambda_X^* g^{(0, 0)}(P_1^0, P_2^0) + (1 - \lambda_X^*) g^{(0, 0)}(P_1^1, P_2^0) + (1 - \lambda_X^*) g^{(0, 0)}(P_1^0, P_2^1) - (1 - \lambda_X^*) g^{(0, 0)}(P_1^1, P_2^1)}{\mathcal{L}^{(0, 0)}(\mathbf{P})}.
		\end{split}
	\end{align*}
	Hence, assuming that $g^{(0,0)}(p_1, p_2)$ can be uniformly approximated by $g^{(0,0)}(\cdot, \cdot) \approx b_K(\cdot, \cdot)^\top \alpha^{(0,0)}$, we obtain the following regression model:
	\begin{align*}
		\begin{split}
			\tilde I^{(0, 0)} Y_1
			\approx \tilde I^{(0, 0)} X_1^\top \beta_1^{(0, 0)} 
			& + \tilde T^{(0, 0)} B_K(\mathbf{P})^\top \alpha^{(0,0)} + \tilde e^{(0, 0)},
		\end{split}
	\end{align*}
	where 
	\begin{align*}
		B_K(\mathbf{P}) \coloneqq \lambda_X^* b_K(P_1^0, P_2^0) + (1 - \lambda_X^*) b_K(P_1^1, P_2^0) + (1 - \lambda_X^*) b_K(P_1^0, P_2^1) - (1 - \lambda_X^*) b_K(P_1^1, P_2^1).
	\end{align*}
	Then, we can obtain LS estimators for $\beta_1^{(0, 0)}$ and $\alpha^{(0,0)}$, say $\hat \beta_1^{(0, 0)}$ and $\hat \alpha^{(0,0)}$, in a similar manner to our main estimator.
	Consequently, $\bE [U_{1}^{(0, 0)}| V_1= p_1, V_2= p_2]$ can be estimated by
	\begin{align*}
		\begin{split}
			\hat \bE_n[U_1^{(0, 0)} | V_1 = p_1, V_2 = p_2] 
			= \frac{\ddot{b}_K(p_1, p_2)^\top \hat \alpha^{(0, 0)}}{\hat h(p_1, p_2)}.
		\end{split}
	\end{align*}
	
	Several comments on this estimation procedure are as follows.
	First, it is clear that this estimation method does not involve the post-estimation adjustment required in our over-identified estimator.
	Therefore, this method is easier to implement than our proposed method.
	Second, to the best of our knowledge, this estimator gives a new class of series regression models, namely, generalized additive models with a prior knowledge that the additive functions share the same functional form.
	How to choose appropriate basis functions for this model is a difficult problem.
	In particular, when using splines basis, where to place the inner knots is a critical issue that affects the theoretical properties of the spline approximation (cf. \citealpappendix{huang2003local}).
	Third, when we numerically compared the performance of this estimator with that of our over-identified estimator, there was no significant difference, and we cannot conclude which is superior.\footnote{
		In this numerical simulation, we simply located the inner knots uniformly on $[0,1]^2$.
		Detailed simulation results are available on request.
	}
	This result would not be surprising because the alternative estimator also utilizes the identification power of all four points (A, B, C, and D in Figure \ref{fig:nash-comp2}) for the estimation of $g^{(0,0)}(p_1, p_2)$, as long as $(p_1, p_2) \in \underline{\mathcal{S}}$.
	
	Lastly, we suggest one more alternative estimator for $m^{(0, 0)}(x, p_1, p_2)$.
	That is, we can consider directly series expand $\bE [U_1^{(0, 0)} | V_1 = p_1, V_2 = p_2]$ as $\bE [U_1^{(0, 0)} | V_1 = p_1, V_2 = p_2] \approx b_K(p_1, p_2)^\top \alpha^{(0,0)}$ (with abuse of notations).
	Then, the function $g^{(0,0)}(p_1, p_2)$ would be approximated by $g^{(0,0)}(p_1, p_2) \approx \tilde b_K(p_1, p_2)^\top \alpha^{(0,0)}$, with $\tilde b_K(p_1, p_2) \coloneqq \int_{p_1}^1 \int_{p_2}^1 b_K(v_1, v_2) h(v_1, v_2) \mathrm{d}v_1 \mathrm{d}v_2$.
	The estimation of $\alpha^{(0,0)}$ is straightforward by LS regression on the ``integrated'' basis functions $\tilde b_K(p_1, p_2)$.
	A similar idea to this estimator can be found in \citeappendix{mogstad2018using} in a different context, but we believe that the theoretical properties of this type of series regression is largely unknown.
	It would be quite intriguing to investigate the properties of these alternative estimators, but this should be discussed in another study.
}

\subsection{Verification of $\zeta_1(K) = O(K)$ for tensor-product B-splines}\label{subsec:zeta}

We consider univariate B-splines of order $r$ with quasi-uniform $k$ internal knots, i.e., $b_r(p) = (b_{r,1}(p), \ldots  b_{r,k + r}(p))^\top$ for $p \in [0,1]$, where the length of each knot interval is proportional to $1/k$ with the internal knots at $\{t_j\}_{j=1}^k$.
In the notation of the main text, $b_K(p_1, p_2) = b_r(p_1) \otimes b_r(p_2)$ such that $k^2 \asymp K$.
As is well known, the derivatives of B-spline functions can be simply expressed in terms of lower-order B-spline functions.
Specifically, the first derivative of $b_r(p)$ can be written as $\partial b_r(p) / \partial p = (r - 1) \Omega_r b_{r-1}(p)$ (e.g., \citealpappendix{zhou2000derivative}), where
\begin{align*}
	\Omega_r \coloneqq  \underbrace{\left( 
	\begin{array}{cccccc}
		\frac{-1}{t_1 - t_{2 - r}} & 0                                         & 0                                           & \cdots & 0           & 0 \\
		\frac{1}{t_1 - t_{2 - r}}  & \frac{-1}{t_2 - t_{3 - r}} & 0                                           & \cdots & 0           & 0  \\
		0                                         & \frac{1}{t_2 - t_{3 - r}}  &  \frac{-1}{t_3 - t_{4 - r}}  & \cdots & 0           & 0  \\
		\vdots                                & \vdots                                  & \vdots                                  & \ddots & \vdots & \vdots \\
		0                                         & 0                                          & 0                                          & \cdots & 0           & \frac{1}{t_{k + r - 1} - t_k}
	\end{array}\right)}_{(k +r) \times (k + r -1)}.
\end{align*}
Then, since $\chi_{\mathrm{max}}(\Omega_r^\top \Omega_r) = O(k^2)$, we have $\| \partial b_r(p) / \partial p \|^2 = (r - 1)^2 \cdot b^\top_{r-1}(p) \Omega_r^\top \Omega_r b_{r-1}(p) \leq O(k^2 ) \|b_{r-1}(p)\|^2 = O(k^3)$ uniformly in $p$.
Hence,
\begin{align*}
	\left\| \frac{\partial b_K(p_1, p_2)}{\partial p_1} \right\|^2  = \left\| \frac{\partial b_r(p_1)}{\partial p_1} \otimes b_r(p_2) \right\|^2 
	& = \left\| \frac{\partial b_r(p_1)}{\partial p_1} \right\|^2 \cdot \left\| b_r(p_2) \right\|^2 \\
	& \leq O(k^4) = O(K^2),
\end{align*}
uniformly in $(p_1, p_2) \in [0,1]^2$.
This implies the desired result. 

\subsection{Verification of Assumption \ref{as:opnorm}(i)}\label{subsec:opnorm}

Below is a sketch of how we can verify Assumption \ref{as:opnorm}(i) for basis functions with a locally supported polynomial structure, such as B-splines and the partitioning polynomial series in \citeappendix{cattaneo2013optimal}.
Let $\mathbf{g} \coloneqq (g(Q_1), \ldots, g(Q_n))^\top$ for $g \in \mathcal{C}(\text{supp}[Q])$. 
Assuming that \eqref{eq:eigenPsi} holds, we can observe that
\begin{align*}
	\left| \left(\mathcal{P}_{nK}g\right)(p_1,p_2) \right|^2
	& = n^{-2} \cdot \mathrm{tr}\left\{ \Psi_{nK}^{-1} \mathbf{R}_K\mathbf{g} \mathbf{g}^\top \mathbf{R}_K \Psi_{nK}^{-1} \mathbb{S}_K^\top b_K(p_1, p_2) b_K(p_1, p_2)^\top \mathbb{S}_K  \right\}\\
	& \le O_P(n^{-2}) \cdot \mathrm{tr}\left\{\mathbb{S}_K^\top b_K(p_1, p_2) b_K(p_1, p_2)^\top \mathbb{S}_K \mathbf{R}_K\mathbf{g} \mathbf{g}^\top \mathbf{R}_K  \right\} \\
	&= O_P(1) \cdot  \left| \frac{1}{n}\sum_{i=1}^n b_K(p_1, p_2)^\top \mathbb{S}_K R_{K,i}g(Q_i)  \right|^2,
\end{align*}
where the superscript $(1,0)$ is suppressed for simplicity.
Here, by appropriately placing the partitioning knots based on the quantiles of $(P_1^0, P_2^1)$, for any $(p_1, p_2) \in \mathcal{S}^{(1,0)}$,  we can find a cell $\mathcal{R}(p_1, p_2) \subset \mathcal{S}^{(1,0)}$ such that $b_K(p_1, p_2)^\top b_K(P_1^0, P_2^1) = 0$ if $(P_1^0, P_2^1) \not\in \mathcal{R}(p_1, p_2)$ and $\Pr[ (P_1^0, P_2^1)  \in \mathcal{R}(p_1, p_2) ] = O(1/K)$.
Then, we have
\begin{align*}
	\left| \left(\mathcal{P}_{nK}g\right)(p_1,p_2) \right|
	& \le O_P(1) \cdot   \frac{1}{n}\sum_{i=1}^n \left| b_K(p_1, p_2)^\top b_K(P_{1i}^0, P_{2i}^1) g(Q_i)  \right| \\
	& \le O_P(1) \cdot  \frac{1}{n}\sum_{i=1}^n \mathbf{1}\{ (P_{1i}^0, P_{2i}^1)  \in \mathcal{R}(p_1, p_2)\} \underbrace{\left| b_K(p_1, p_2)^\top b_K( P_{1i}^0, P_{2i}^1) \right|}_{\zeta^2_0(K) = O(K)} \sup_{q \in \text{supp}[Q]} | g(q) |  = O_P(1).
\end{align*}
Since the above inequality holds for any $(p_1, p_2) \in \mathcal{S}^{(1,0)}$, we obtain the desired result.

\section{Appendix: Identification of Several Treatment Parameters} \label{sec:severalparameter}

\subsection{Individual-specific treatment effects}

It may be of interest to estimate the treatment effects when only player 1's treatment status switches, whereas that of player 2 is unspecified and subject to change endogenously. 
The parameter of interest in this situation would be $\text{MTE}_{\text{indiv}}(x, p_1, p_2) \coloneqq \bE [Y_1^{(1, D_2)} - Y_1^{(0, D_2)}| X = x, V_1 = p_1, V_2 = p_2]$ where $Y_1^{(d_1, D_2)} \coloneqq (1 - D_2) Y_1^{(d_1, 0)} + D_2 Y_1^{(d_1, 1)}$.
We call this MTE parameter the \textit{individual} MTE.\footnote{
	This is somewhat similar to the framework in \citetappendix{frolich2017direct}, where the identification of causal models that allow the presence of an endogenous ``mediator'' variable was investigated. 
	In our model, $D_2$ may be regarded as the mediator of $D_1$.
}
Let $m^{(d_1, D_2)}(x, p_1, p_2) \coloneqq \bE [Y_1^{(d_1, D_2)} | X = x, V_1 = p_1, V_2 = p_2]$.
After some calculations, we can show that
\begin{alignat*}{2}
	m^{(0, D_2)}(x, p_1^{d_1}, p_2^{d_2}) 
	&= m^{(0,0)}(x, p_1^{d_1}, p_2^{d_2}) && \quad \text{for $(d_1, d_2) \neq (1,0)$},\\
	m^{(0, D_2)}(x, p_1^1, p_2^0) 
	&= \frac{1}{\lambda_x}\left( m^{(0,1)}(x, p_1^1, p_2^0) - (1 - \lambda_x) \cdot m^{(0, 0)}(x, p_1^1, p_2^0) \right) && \quad \text{for $\lambda_x > 0$},\\
	m^{(1, D_2)}(x, p_1^{d_1}, p_2^{d_2}) 
	&= m^{(1,1)}(x, p_1^{d_1}, p_2^{d_2}) && \quad \text{for $(d_1, d_2) \neq (0,1)$},\\
	m^{(1, D_2)}(x, p_1^0, p_2^1)
	&= \frac{1}{(1 - \lambda_x)} \left( m^{(1,0)}(x, p_1^0, p_2^1) - \lambda_x \cdot m^{(1, 1)}(x, p_1^0, p_2^1) \right) && \quad \text{for $\lambda_x < 1$},
\end{alignat*}
under Assumptions \ref{as:complement}, \ref{as:multiple}, and \ref{as:IV}.
This implies that the individual MTE can be identified as follows:
\begin{align*}
	\text{MTE}_{\text{indiv}}(x, p_1^0, p_2^0)  
	& = \text{MTE}_{\text{total}}(x, p_1^0, p_2^0), \\
	\text{MTE}_{\text{indiv}}(x, p_1^0, p_2^1)  
	& = \frac{1}{1 - \lambda_x} \text{MTE}_{\text{direct}}^{(0)}(x, p_1^0, p_2^1) - \frac{\lambda_x}{1 - \lambda_x} \text{MTE}_{\text{total}}(x, p_1^0, p_2^1),
\end{align*}
for example.
Thus, for the estimation of the individual MTE, it is sufficient to calculate the direct MTE and total MTE, so that no additional estimation is required.

\subsection{LATE}

It is also possible to identify the LATE: average causal effect for agents whose treatment status is strictly altered by the IVs.
To define the LATE parameters in our context, we consider $z, z' \in \text{supp}[Z]$.
Suppose that the values of $\mathbf{P}$ are $\mathbf{p} = (p_1^0, p_1^1, p_2^0, p_2^1)$ when $Z = z$, and $\mathbf{p}' = (p_1^{0'}, p_1^{1'}, p_2^{0'}, p_2^{1'})$ when $Z = z'$.
For illustrative purposes, we assume that $\mathbf{p} > \mathbf{p}'$ (where the inequality is element-wise), as depicted in Figure \ref{fig:LATE1}.
Although we can consider several different LATE parameters, as examples, we here focus on the \textit{direct} LATE: $\bE [Y_1^{(1, 0)} - Y_1^{(0, 0)} | X = x, p_1^{0'} < V_1 \le p_1^0, p_2^1 < V_2 \le 1]$; and the \textit{total} LATE: $\bE [Y_1^{(1, 1)} - Y_1^{(0, 0)} | X = x, p_1^{1'} < V_1 \le p_1^1, p_2^{1'} < V_2 \le p_2^1]$.
The former and latter indicate the average causal effects for the players in regions [A] and [B], respectively.
The pairs of players in region [A] change their treatment status from $D = (1,0)$ to $(0,0)$ as the value of $Z$ shifts from $z$ to $z'$. 
Similarly, the pairs of players in region [B] select $D = (0, 0)$ when $Z = z'$, but $D = (1, 1)$ or $(0, 0)$ when $Z = z$.
As in \citetappendix{heckman2005structural}, we can write the LATE parameters as the weighted averages of the MTE parameters:
\begin{align*}
	\text{Direct LATE} 
	& = \int_{p_2^1}^{1} \int_{p_1^{0'}}^{p_1^0} \text{MTE}_{\text{direct}}^{(0)} (x, v_1, v_2) \frac{h(v_1, v_2 | x)}{\Pr[p_1^{0'} < V_1 \le p_1^0, p_2^1 < V_2 \le 1 | X = x]} \mathrm{d}v_1 \mathrm{d}v_2, \\
	\text{Total LATE}
	& = \int_{p_2^{1'}}^{p_2^1} \int_{p_1^{1'}}^{p_1^1} \text{MTE}_{\text{total}}(x, v_1, v_2) \frac{h(v_1, v_2 | x)}{\Pr[p_1^{1'} < V_1 \le p_1^1, p_2^{1'} < V_2 \le p_2^1 | X = x]}  \mathrm{d}v_1 \mathrm{d}v_2.
\end{align*}
Because the MTE parameters and weight functions in the integrals are identified, the LATE parameters are also identified.

\begin{figure}[!h]
	\centering
	\fbox{\includegraphics[width = 8cm, bb = 0 0 720 540]{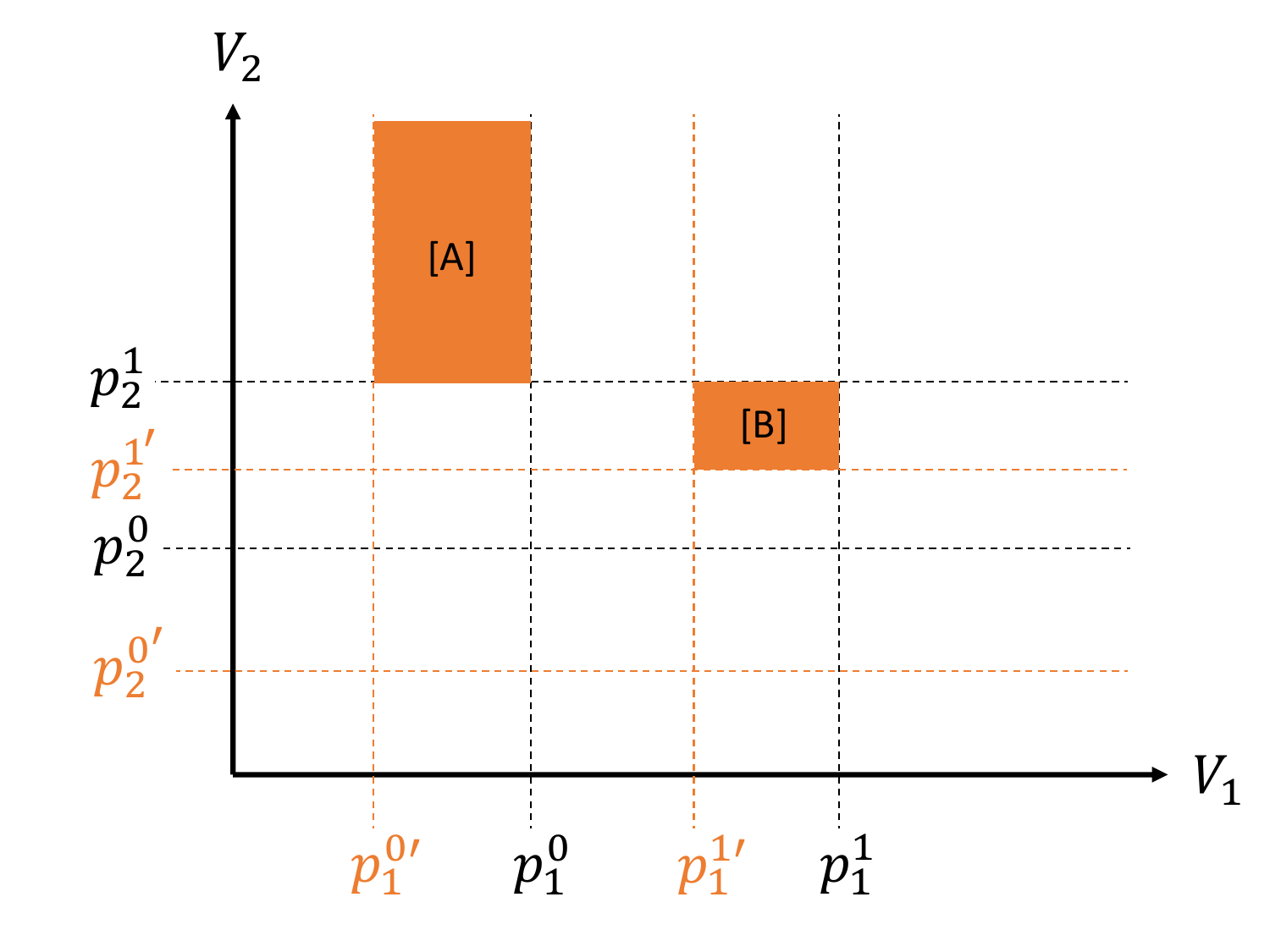}}
	\caption{LATE.}
	\label{fig:LATE1}
\end{figure}

\subsection{PRTE}\label{subsec:PRTE}

The PRTE is the difference in the average outcomes when switching from the baseline policy to a counterfactual policy that induces a change in the distribution of IV (\citealpappendix{heckman2005structural}).
To proceed, given $X = x$, we write $I^{(d_1, d_2)} = \mathcal{I}^{(d_1, d_2)}(\mathbf{P}, V, \epsilon; \lambda_x)$ with a known function $\mathcal{I}^{(d_1, d_2)}$ to clarify the dependence of the treatment decisions on $\mathbf{P}$, $V$, $\epsilon$, and $\lambda_x$.
For example, $I^{(0, 0)} = \mathcal{I}^{(0, 0)}(\mathbf{P}, V, \epsilon; \lambda_x) = \mathbf{1}\{ V \in \mathcal{V}_{\text{uni}}^{(0, 0)}(\mathbf{P}) \} + \mathbf{1}\{ V \in \mathcal{V}_{\text{mul}}(\mathbf{P}), \epsilon \le \lambda_x \}$ as in Assumption \ref{as:multiple}.
We consider a counterfactual policy that does alter $\mathbf{P}$ and/or $\lambda_x$ but does not affect $Y_1^{(d_1, d_2)}$, $X$, $V$, and $\epsilon$.
Let $\mathbf{P}^\star$ be a counterfactual variable of $\mathbf{P}$ with a known distribution (given $X = x$) and $\lambda_x^\star$ be a known counterfactual value of $\lambda_x$ (given $X = x$).
We denote the treatment decisions under $(\mathbf{P}^\star, \lambda_x^\star)$ as $D^\star = (D_1^\star, D_2^\star)$.
Given $X = x$, the outcome for player 1 under the counterfactual policy is $Y_1^\star = \sum_{d_1, d_2} \mathbf{1}\{ D^\star = (d_1, d_2) \} Y_1^{(d_1, d_2)}$, where $\mathbf{1}\{ D^\star = (d_1, d_2) \} = \mathcal{I}^{(d_1, d_2)}(\mathbf{P}^\star, V, \epsilon; \lambda_x^\star)$.
The PRTE is defined as $\bE [Y_1^\star | X = x] - \bE [Y_1 | X = x]$.
The law of iterated expectations leads to $\bE [Y_1^\star | X = x] = \sum_{d_1, d_2} \bE [ \bE [ \mathbf{1}\{ D^\star = (d_1, d_2) \} Y_1^{(d_1, d_2)} | X = x, \mathbf{P}^\star ] | X = x ]$.
\phantomsection\label{page:R2-6}\Copy{R2-6}{
	Under the assumptions made here, we can observe that
	\begin{align*}
		\bE \left[ \mathbf{1}\{ D^\star = (d_1, d_2) \} Y_1^{(d_1, d_2)} \middle| X = x, \mathbf{P}^\star = \mathbf{p}^\star \right] 
		&= \bE \left[ \mathcal{I}^{(d_1, d_2)}(\mathbf{p}^\star, V, \epsilon; \lambda_x^\star) Y_1^{(d_1, d_2)}  \middle| X = x \right] \\
		&= \bE \left[ \mathcal{I}^{(d_1, d_2)}(\mathbf{p}^\star, V, \epsilon; \lambda_x^\star) \bE \left[ Y_1^{(d_1, d_2)} \middle| X = x, V, \epsilon \right] \middle| X = x \right] \\
		&= \int_{[0,1]^3} \mathcal{I}^{(d_1, d_2)}(\mathbf{p}^\star, v, e; \lambda_x^\star) m^{(d_1, d_2)}(x, v_1, v_2) h(v_1, v_2|x) \mathrm{d}v_1 \mathrm{d}v_2 \mathrm{d}e.
	\end{align*}
	Hence, writing $\mathbb{P}^{\star (d_1,d_2)}(x, v_1, v_2) \coloneqq  \int_{[0,1]}\bE \left[ \mathcal{I}^{(d_1, d_2)}(\mathbf{P}^\star, v, e; \lambda_x^\star) \middle| X = x \right]\mathrm{d}e$ for simplicity and noting that $\sum_{d_1, d_2} \mathbb{P}^{\star (d_1,d_2)}(x, v_1, v_2) = 1$, it holds that
	\begin{align*}
		\bE [Y_1^\star | X = x] 
		& = \sum_{d_1, d_2} \int_{[0,1]^2} \mathbb{P}^{\star (d_1,d_2)}(x, v_1, v_2) m^{(d_1, d_2)}(x, v_1, v_2) h(v_1, v_2|x) \mathrm{d}v_1 \mathrm{d}v_2 \\
		& = \int_{[0,1]^2} \mathbb{P}^{\star (1,1)}(x, v_1, v_2) \text{MTE}_\text{total}(x, v_1, v_2) h(v_1, v_2|x) \mathrm{d}v_1 \mathrm{d}v_2 \\
		& \quad + \int_{[0,1]^2} \mathbb{P}^{\star (1,0)}(x, v_1, v_2) \text{MTE}_\text{direct}^{(0)}(x, v_1, v_2) h(v_1, v_2|x) \mathrm{d}v_1 \mathrm{d}v_2 \\
		& \quad + \int_{[0,1]^2} \mathbb{P}^{\star (0,1)}(x, v_1, v_2) \text{MTE}_\text{indirect}^{(0)}(x, v_1, v_2) h(v_1, v_2|x) \mathrm{d}v_1 \mathrm{d}v_2 \\
		& \quad + \int_{[0,1]^2} m^{(0, 0)}(x, v_1, v_2) h(v_1, v_2|x) \mathrm{d}v_1 \mathrm{d}v_2.
	\end{align*}
	In the same manner, defining $\mathbb{P}^{(d_1,d_2)}(x, v_1, v_2)$ similarly, we can observe the following for the baseline policy:
	\begin{align*}
		\bE [Y_1 | X = x] 
		& = \int_{[0,1]^2} \mathbb{P}^{(1,1)}(x, v_1, v_2) \text{MTE}_\text{total}(x, v_1, v_2) h(v_1, v_2|x) \mathrm{d}v_1 \mathrm{d}v_2 \\
		& \quad + \int_{[0,1]^2} \mathbb{P}^{(1,0)}(x, v_1, v_2) \text{MTE}_\text{direct}^{(0)}(x, v_1, v_2) h(v_1, v_2|x) \mathrm{d}v_1 \mathrm{d}v_2 \\
		& \quad + \int_{[0,1]^2} \mathbb{P}^{(0,1)}(x, v_1, v_2) \text{MTE}_\text{indirect}^{(0)}(x, v_1, v_2) h(v_1, v_2|x) \mathrm{d}v_1 \mathrm{d}v_2 \\
		& \quad + \int_{[0,1]^2} m^{(0, 0)}(x, v_1, v_2) h(v_1, v_2|x) \mathrm{d}v_1 \mathrm{d}v_2.
	\end{align*}
	Therefore, we can find that the PRTE can be expressed as the summation of weighted total, direct, and indirect MTEs: 
	\begin{align*}
		\text{PRTE}
		& = \int_{[0,1]^2} \left\{\mathbb{P}^{\star (1,1)}(x, v_1, v_2) - \mathbb{P}^{(1,1)}(x, v_1, v_2)\right\}\text{MTE}_\text{total}(x, v_1, v_2) h(v_1, v_2|x) \mathrm{d}v_1 \mathrm{d}v_2 \\
		& \quad + \int_{[0,1]^2} \left\{\mathbb{P}^{\star (1,0)}(x, v_1, v_2) - \mathbb{P}^{(1,0)}(x, v_1, v_2)\right\} \text{MTE}_\text{direct}^{(0)}(x, v_1, v_2) h(v_1, v_2|x) \mathrm{d}v_1 \mathrm{d}v_2 \\
		& \quad + \int_{[0,1]^2} \left\{\mathbb{P}^{\star (0,1)}(x, v_1, v_2) - \mathbb{P}^{(0,1)}(x, v_1, v_2)\right\}  \text{MTE}_\text{indirect}^{(0)}(x, v_1, v_2) h(v_1, v_2|x) \mathrm{d}v_1 \mathrm{d}v_2.
	\end{align*}
}
\setcounter{table}{0}
\setcounter{figure}{0}

\section{Appendix: Detailed Information on the Monte Carlo Simulations}\label{sec:MC}

This section presents the detailed information on the Monte Carlo simulation analysis summarized in Subsection \ref{subsec:MC}.
The treatment variable is generated by $D_j = \mathbf{1}\{ \gamma_{01} + \gamma_{02} Z_{0j} + D_{-j} \cdot \exp ( \gamma_{11} + \gamma_{12} Z_{1j} ) \ge \varepsilon_j \}$ where $Z_{0j} \sim N(0, 1)$, $Z_{1j} \sim N(0, 1)$, and $\varepsilon_j \sim N(0, 1)$.
The true values of the parameters are $\gamma_0^* = (\gamma_{01}^*, \gamma_{02}^*) = (-0.5, 1)$ and $\gamma_1^* = (\gamma_{11}^*, \gamma_{12}^*) = (-0.3, 0.5)$.
The joint distribution of $(V_j, V_{-j})$ is defined by the FGM copula with the dependence parameter of $\rho^* = 0.7$.
As in Remark \ref{remark:constant}, $D = (0, 0)$ occurs in the region of multiple equilibria if and only if $\epsilon \le \lambda^*$ with $\epsilon \sim \text{Uniform}[0, 1]$ and $\lambda^* = 0.5$.

The potential outcomes are generated by $Y_j^{(d_j, d_{-j})} = \beta_0^{(d_j, d_{-j})} + X_j \beta_1^{(d_j, d_{-j})} + U_j^{(d_j, d_{-j})}$ where $X_j \sim N(0, 1)$ and $U_j^{(d_j, d_{-j})} = \sigma^{(d_j, d_{-j})} \cdot \varphi(V_1, V_2) + \varsigma_j$ with $\sigma^{(1, 1)} = 1.5$, $\sigma^{(1, 0)} = \sigma^{(0, 1)} = 1$, $\sigma^{(0, 0)} = 0.5$, and $\varsigma_j \sim N(0, 0.5^2)$.
For the function $\varphi$, we consider two designs: $\varphi(V_1, V_2) = V_1 + V_2$ in design 1 and $\varphi(V_1, V_2) = \exp(V_1 + V_2)$ in design 2.
The true values of $\beta^{(d_j, d_{-j})} = (\beta_0^{(d_j, d_{-j})}, \beta_1^{(d_j, d_{-j})})$ are set to $\beta^{(1, 0)} = \beta^{(0, 1)} = (2, 1)$ and $\beta^{(1, 1)} = \beta^{(0, 0)} = (1, 2)$.

For the above data-generating processes, we consider two sample sizes $n \in \{1500, 6000\}$ for each.
We evaluate both the feasible and infeasible estimators for $\text{MTE}_{\text{direct}}^{(0)}(x, p_1, p_2)$; the feasible estimator is based on the first-stage ML estimates of $\theta^*$, while the infeasible one treats $\theta^*$ as known. 
We fix $x = 0.5$ and $p_1 = 0.5$ and consider four values of $p_2 \in \{0.2, 0.4, 0.6, 0.8\}$, which results in four MTE values as the parameters of interest, labeled respectively as ``MTE1'', ``MTE2'', and so on.
For the choice of basis function, we employ a bivariate power series and tensor-product B-splines of order 3.
When $n = 1500$, the order of the bivariate power series is set to 3, and the number of inner knots of the tensor-product B-splines is set to 1 in each coordinate.
When $n = 6000$, we consider both 3 and 4 for the orders of the power series and both 1 and 2 for the the number of inner knots of the B-splines.
We estimate the MTR values for $D=(1,0)$ and $D=(0,0)$ in two ways: standard least-squares regression and ridge regression.
For the ridge regression, we set the regularization parameters equal to $n^{-1}$ and $10 \cdot n^{-1}$ for the estimation based on the power series and the B-splines, respectively.\footnote{
	Note that the intercept term should not be penalized (see, e.g., Chapter 3 of \citealpappendix{hastie2009elements}).
}
To estimate the MTR values for $D=(0,0)$, we employ the over-identified estimator introduced in Subsection \ref{subsec:overidentification:est}.
For the smoothed indicator function $\tau_\varpi$, we use
\begin{align*}
	\tau_\varpi(a) = \mathbf{1}\{\varpi \le a < 3 \varpi\} \left( \frac{\exp(a - 2\varpi)}{1 + \exp(a - 2\varpi)} - \frac{\exp(-\varpi)}{1 + \exp(-\varpi)} \right) \left( \frac{2 + \exp(\varpi) + \exp(-\varpi)}{\exp(\varpi) - \exp(-\varpi)} \right) + \mathbf{1}\{ a \ge 3\varpi\},
\end{align*}
with $\varpi = 0.01$.\footnote{
	\phantomsection\label{page:R2-7-c}\Copy{R2-7-c}{
		This trimming function is taken from \citeappendix{buchinsky1998alternative}.
		We observed that about 0.6 percent of the observations were trimmed out on average with this trimming function.
		We also found that altering the trimming constant $\varpi$ to 0.005 or 0.05 had only negligible impacts on our results.
}}
The following results are based on 1,000 Monte Carlo replications. 

Table \ref{table:MC} presents the bias and RMSE for the direct MTE estimation.
Notably, the performances of the feasible and infeasible estimators are almost identical, which is consistent with our theory.
Further, the estimator based on the bivariate power series outperforms that based on the tensor-product B-splines.
The precision for the bivariate power series is satisfactory even with the sample size of 1,500, while the estimator with the tensor-product B-splines is not reliable for small samples.
However, the accuracy of the B-splines-based estimator can be improved rapidly as the sample size increases from 1,500 to 6,000.
In addition, using the ridge regression can further improve the performance of the estimators, particularly for the B-splines-based estimator with the sample size of 1,500.
The performances of the estimators deteriorate when $p_2$ is near the boundary.
Since the realizations are often sparse near the boundary, this observation is reasonable.

Table \ref{table:ML} presents the results for the first-stage treatment decision model.
The initial values for the ML estimation are randomly chosen from uniform distributions with reasonable ranges.
We can observe that both the bias and RMSE are satisfactorily small even for a small sample size.
This also demonstrates the validity of the global identification result established in Subsection \ref{subsec:first}.

\begin{table}[!tbp]
	{\scriptsize
		\caption{Simulation results for the direct MTE estimation\label{table:MC}} 
		\begin{center}
			\begin{tabular}{rrrrrcrrrrcrrrr}
				\hline\hline
				&\multicolumn{4}{c}{\bfseries }&\multicolumn{1}{c}{\bfseries }&\multicolumn{4}{c}{\bfseries Feasible estimator}&\multicolumn{1}{c}{\bfseries }&\multicolumn{4}{c}{\bfseries Infeasible estimator}\tabularnewline
				\cline{7-10} \cline{12-15}
				&\multicolumn{1}{c}{design}&\multicolumn{1}{c}{$n$}&\multicolumn{1}{c}{order/\#knots}&\multicolumn{1}{c}{ridge}&\multicolumn{1}{c}{}&\multicolumn{1}{c}{MTE1}&\multicolumn{1}{c}{MTE2}&\multicolumn{1}{c}{MTE3}&\multicolumn{1}{c}{MTE4}&\multicolumn{1}{c}{}&\multicolumn{1}{c}{MTE1}&\multicolumn{1}{c}{MTE2}&\multicolumn{1}{c}{MTE3}&\multicolumn{1}{c}{MTE4}\tabularnewline
				\hline
				\multicolumn{15}{l}{\bfseries Bias (bivariate power series)} \tabularnewline
				&$1$&$1500$&$3$&$0$&&$-0.073$&$-0.029$&$ 0.020$&$ 0.063$&&$-0.061$&$-0.026$&$ 0.012$&$ 0.045$\tabularnewline
				&$1$&$1500$&$3$&$1$&&$-0.120$&$-0.071$&$-0.023$&$ 0.024$&&$-0.115$&$-0.071$&$-0.027$&$ 0.018$\tabularnewline
				&$1$&$6000$&$3$&$0$&&$-0.012$&$-0.002$&$ 0.010$&$ 0.014$&&$-0.012$&$-0.003$&$ 0.009$&$ 0.013$\tabularnewline
				&$1$&$6000$&$3$&$1$&&$-0.090$&$-0.046$&$-0.004$&$ 0.039$&&$-0.091$&$-0.048$&$-0.005$&$ 0.038$\tabularnewline
				&$1$&$6000$&$4$&$0$&&$-0.050$&$-0.005$&$ 0.017$&$ 0.005$&&$-0.048$&$-0.003$&$ 0.018$&$ 0.007$\tabularnewline
				&$1$&$6000$&$4$&$1$&&$-0.124$&$-0.075$&$-0.014$&$ 0.059$&&$-0.126$&$-0.076$&$-0.014$&$ 0.060$\tabularnewline
				&$2$&$1500$&$3$&$0$&&$-0.170$&$-0.021$&$ 0.085$&$ 0.129$&&$-0.136$&$ 0.000$&$ 0.078$&$ 0.103$\tabularnewline
				&$2$&$1500$&$3$&$1$&&$-0.250$&$-0.086$&$ 0.024$&$ 0.073$&&$-0.240$&$-0.087$&$ 0.016$&$ 0.059$\tabularnewline
				&$2$&$6000$&$3$&$0$&&$-0.076$&$ 0.010$&$ 0.049$&$ 0.020$&&$-0.074$&$ 0.011$&$ 0.050$&$ 0.020$\tabularnewline
				&$2$&$6000$&$3$&$1$&&$-0.179$&$-0.032$&$ 0.063$&$ 0.098$&&$-0.183$&$-0.036$&$ 0.059$&$ 0.093$\tabularnewline
				&$2$&$6000$&$4$&$0$&&$-0.127$&$-0.031$&$ 0.034$&$ 0.045$&&$-0.121$&$-0.026$&$ 0.038$&$ 0.052$\tabularnewline
				&$2$&$6000$&$4$&$1$&&$-0.306$&$-0.162$&$-0.003$&$ 0.160$&&$-0.311$&$-0.165$&$-0.004$&$ 0.161$\tabularnewline
				\hline
				\multicolumn{15}{l}{\bfseries RMSE (bivariate power series)} \tabularnewline
				&$1$&$1500$&$3$&$0$&&$ 0.649$&$ 0.432$&$ 0.481$&$ 0.701$&&$ 0.659$&$ 0.442$&$ 0.487$&$ 0.709$\tabularnewline
				&$1$&$1500$&$3$&$1$&&$ 0.242$&$ 0.188$&$ 0.182$&$ 0.227$&&$ 0.244$&$ 0.194$&$ 0.190$&$ 0.234$\tabularnewline
				&$1$&$6000$&$3$&$0$&&$ 0.284$&$ 0.179$&$ 0.177$&$ 0.273$&&$ 0.282$&$ 0.179$&$ 0.177$&$ 0.270$\tabularnewline
				&$1$&$6000$&$3$&$1$&&$ 0.187$&$ 0.118$&$ 0.100$&$ 0.152$&&$ 0.187$&$ 0.119$&$ 0.100$&$ 0.151$\tabularnewline
				&$1$&$6000$&$4$&$0$&&$ 0.459$&$ 0.388$&$ 0.379$&$ 0.441$&&$ 0.460$&$ 0.393$&$ 0.383$&$ 0.440$\tabularnewline
				&$1$&$6000$&$4$&$1$&&$ 0.217$&$ 0.142$&$ 0.114$&$ 0.207$&&$ 0.217$&$ 0.142$&$ 0.114$&$ 0.205$\tabularnewline
				&$2$&$1500$&$3$&$0$&&$ 1.102$&$ 0.684$&$ 0.713$&$ 1.081$&&$ 1.100$&$ 0.692$&$ 0.729$&$ 1.101$\tabularnewline
				&$2$&$1500$&$3$&$1$&&$ 0.456$&$ 0.322$&$ 0.313$&$ 0.394$&&$ 0.453$&$ 0.328$&$ 0.324$&$ 0.405$\tabularnewline
				&$2$&$6000$&$3$&$0$&&$ 0.469$&$ 0.279$&$ 0.263$&$ 0.409$&&$ 0.468$&$ 0.279$&$ 0.262$&$ 0.404$\tabularnewline
				&$2$&$6000$&$3$&$1$&&$ 0.334$&$ 0.185$&$ 0.174$&$ 0.260$&&$ 0.333$&$ 0.185$&$ 0.173$&$ 0.257$\tabularnewline
				&$2$&$6000$&$4$&$0$&&$ 0.750$&$ 0.577$&$ 0.584$&$ 0.658$&&$ 0.751$&$ 0.585$&$ 0.589$&$ 0.652$\tabularnewline
				&$2$&$6000$&$4$&$1$&&$ 0.431$&$ 0.254$&$ 0.175$&$ 0.351$&&$ 0.432$&$ 0.255$&$ 0.173$&$ 0.344$\tabularnewline
				\hline
				\multicolumn{15}{l}{\bfseries Bias (tensor-product B-splines)} \tabularnewline
				&$1$&$1500$&$1$&$0$&&$-0.937$&$-0.243$&$ 0.354$&$-0.113$&&$-0.932$&$-0.267$&$ 0.356$&$-0.143$\tabularnewline
				&$1$&$1500$&$1$&$1$&&$-0.628$&$-0.298$&$ 0.016$&$-0.074$&&$-0.624$&$-0.288$&$ 0.028$&$-0.079$\tabularnewline
				&$1$&$6000$&$1$&$0$&&$-0.975$&$-0.312$&$ 0.297$&$-0.038$&&$-0.974$&$-0.307$&$ 0.300$&$-0.035$\tabularnewline
				&$1$&$6000$&$1$&$1$&&$-0.780$&$-0.309$&$ 0.152$&$-0.077$&&$-0.780$&$-0.313$&$ 0.149$&$-0.076$\tabularnewline
				&$1$&$6000$&$2$&$0$&&$-0.586$&$ 0.247$&$ 0.105$&$-0.322$&&$-0.557$&$ 0.280$&$ 0.114$&$-0.330$\tabularnewline
				&$1$&$6000$&$2$&$1$&&$-0.777$&$ 0.114$&$ 0.035$&$-0.150$&&$-0.772$&$ 0.118$&$ 0.029$&$-0.158$\tabularnewline
				&$2$&$1500$&$1$&$0$&&$-2.523$&$-0.645$&$ 0.827$&$-0.257$&&$-2.525$&$-0.672$&$ 0.858$&$-0.294$\tabularnewline
				&$2$&$1500$&$1$&$1$&&$-1.606$&$-0.768$&$-0.042$&$-0.132$&&$-1.606$&$-0.740$&$-0.003$&$-0.133$\tabularnewline
				&$2$&$6000$&$1$&$0$&&$-2.587$&$-0.797$&$ 0.676$&$ 0.038$&&$-2.587$&$-0.793$&$ 0.685$&$ 0.049$\tabularnewline
				&$2$&$6000$&$1$&$1$&&$-2.016$&$-0.829$&$ 0.269$&$-0.114$&&$-2.018$&$-0.840$&$ 0.262$&$-0.109$\tabularnewline
				&$2$&$6000$&$2$&$0$&&$-1.336$&$ 0.544$&$ 0.287$&$-0.635$&&$-1.308$&$ 0.630$&$ 0.313$&$-0.644$\tabularnewline
				&$2$&$6000$&$2$&$1$&&$-1.871$&$ 0.186$&$-0.006$&$-0.403$&&$-1.863$&$ 0.196$&$-0.017$&$-0.427$\tabularnewline
				\hline
				\multicolumn{15}{l}{\bfseries RMSE (tensor-product B-splines)} \tabularnewline
				&$1$&$1500$&$1$&$0$&&$ 1.360$&$ 1.247$&$ 1.351$&$ 1.979$&&$ 1.349$&$ 1.268$&$ 1.429$&$ 1.933$\tabularnewline
				&$1$&$1500$&$1$&$1$&&$ 0.663$&$ 0.340$&$ 0.293$&$ 0.331$&&$ 0.660$&$ 0.332$&$ 0.302$&$ 0.342$\tabularnewline
				&$1$&$6000$&$1$&$0$&&$ 1.058$&$ 0.629$&$ 0.544$&$ 0.539$&&$ 1.055$&$ 0.629$&$ 0.544$&$ 0.540$\tabularnewline
				&$1$&$6000$&$1$&$1$&&$ 0.796$&$ 0.331$&$ 0.258$&$ 0.255$&&$ 0.795$&$ 0.334$&$ 0.256$&$ 0.255$\tabularnewline
				&$1$&$6000$&$2$&$0$&&$ 2.154$&$ 1.404$&$ 0.950$&$ 1.871$&&$ 2.074$&$ 1.422$&$ 0.943$&$ 1.834$\tabularnewline
				&$1$&$6000$&$2$&$1$&&$ 0.837$&$ 0.576$&$ 0.350$&$ 0.696$&&$ 0.833$&$ 0.575$&$ 0.345$&$ 0.691$\tabularnewline
				&$2$&$1500$&$1$&$0$&&$ 2.950$&$ 1.935$&$ 2.080$&$ 2.916$&&$ 2.937$&$ 1.978$&$ 2.193$&$ 2.950$\tabularnewline
				&$2$&$1500$&$1$&$1$&&$ 1.644$&$ 0.817$&$ 0.485$&$ 0.536$&&$ 1.644$&$ 0.790$&$ 0.492$&$ 0.550$\tabularnewline
				&$2$&$6000$&$1$&$0$&&$ 2.660$&$ 1.122$&$ 0.958$&$ 0.783$&&$ 2.658$&$ 1.122$&$ 0.959$&$ 0.779$\tabularnewline
				&$2$&$6000$&$1$&$1$&&$ 2.033$&$ 0.849$&$ 0.441$&$ 0.382$&&$ 2.034$&$ 0.859$&$ 0.429$&$ 0.375$\tabularnewline
				&$2$&$6000$&$2$&$0$&&$ 3.418$&$ 2.166$&$ 1.453$&$ 2.759$&&$ 3.284$&$ 2.187$&$ 1.435$&$ 2.740$\tabularnewline
				&$2$&$6000$&$2$&$1$&&$ 1.933$&$ 0.867$&$ 0.539$&$ 1.077$&&$ 1.925$&$ 0.862$&$ 0.532$&$ 1.069$\tabularnewline
				\hline
		\end{tabular}
	\end{center}
	Note: The column labeled ``order/\#knots'' indicates the order of the bivariate power series or the number of inner knots of the univariate B-splines.
	The column labeled ``ridge'' indicates whether the ridge regression is used (1 for ``yes'' and 0 for ``no'').
}
\end{table}

\begin{table}[!tbp]
	{\scriptsize
		\caption{Simulation results for the ML estimation of the treatment decision game\label{table:ML}} 
		\begin{center}
			\begin{tabular}{rrrrrrrr}
				\hline\hline
				&\multicolumn{1}{c}{$n$}&\multicolumn{1}{c}{$\gamma_{01}$}&\multicolumn{1}{c}{$\gamma_{02}$}&\multicolumn{1}{c}{$\gamma_{11}$}&\multicolumn{1}{c}{$\gamma_{12}$}&\multicolumn{1}{c}{$\rho$}&\multicolumn{1}{c}{$\lambda$}\tabularnewline
				\hline
				\multicolumn{8}{l}{\bfseries Bias} \tabularnewline
				&$1500$&$-0.004$&$0.002$&$ 0.003$&$0.001$&$-0.027$&$-0.007$\tabularnewline
				&$6000$&$ 0.001$&$0.000$&$-0.003$&$0.003$&$-0.002$&$ 0.002$\tabularnewline
				\hline
				\multicolumn{8}{l}{\bfseries RMSE} \tabularnewline
				&$1500$&$ 0.060$&$0.038$&$ 0.133$&$0.073$&$ 0.264$&$ 0.219$\tabularnewline
				&$6000$&$ 0.032$&$0.020$&$ 0.073$&$0.037$&$ 0.143$&$ 0.116$\tabularnewline
				\hline
		\end{tabular}
	\end{center}}
\end{table}

\setcounter{table}{0}
\setcounter{figure}{0}

\section{Appendix: Supplementary Material for the Empirical Analysis}\label{sec:supp}

This appendix provides supplementary tables and figures for the empirical illustration in Subsection \ref{subsec:empirical}.

\begingroup
\renewcommand{\arraystretch}{1.2}
\begin{table}[!h]
	\caption{Definitions of the variables}
	\begin{center}
		\scriptsize{\begin{tabular}{l|l}
				\hline \hline
				\multicolumn{1}{c|}{Variables} & \multicolumn{1}{|c}{Definitions}\\
				\hline
				Outcome & log(GPA + 1)\\
				Treatment & 1 if at least one of (Smoke, Drink, Skip, Fight) is larger than 2; 0 otherwise:\\
				\quad Smoke &  How often the respondent smoked cigarettes. (0: never - 6: nearly every day)\\
				\quad Drink &  How often the respondent drank alcohols. (0: never - 6: nearly every day)\\
				\quad Skip & How often the respondent skipped school without excuses. (0: never - 6: nearly everyday)\\
				\quad Fight & How often the respondent got into a physical fight in the past year. (0: never - 4: more than 7 times)\\
				Age & Age\\
				Grade & Grade\\
				White & 1 if the respondent is White; 0 otherwise.\\
				Black & 1 if the respondent is Black or African; 0 otherwise.\\
				Asian & 1 if the respondent is Asian; 0 otherwise.\\
				Mother's education & The respondent's mother's education level in years.\\
				Mother's job (professional) & 1 if the respondent's mother is a worker with expertise or a managerial worker; 0 otherwise.\\
				Mother's job (unemployed) & 1 if the respondent's mother is not employed (except for housewife); 0 otherwise.\\
				Father's education & The respondent's father's education level in years.\\
				Father's job (professional) & 1 if the respondent's father is a worker with expertise or a managerial worker; 0 otherwise.\\
				Father's job (unemployed) & 1 if the respondent's father is not employed (except for househusband); 0 otherwise.\\
				Academic club & 1 if the respondent belongs to an academic club; 0 otherwise.\\
				Sports club & 1 if the respondent belongs to a sport club; 0 otherwise.\\
				\hline
		\end{tabular}}
		\label{table:def}
	\end{center}
\end{table}
\endgroup

\phantomsection\label{page:AE-19-2}\Copy{AE-19-2}{
	Table \ref{table:KMtest} summarizes the results of \citemain{kedagni2020generalized}'s test for the independence of each IV from the potential outcomes.
	The test is based on checking whether an inequality: $\sum_{k = 1}^K \sup_z \mathbb{E}[W_k | Z = z] \le 0$ is true (in their notation), which must hold when the IV $Z$ is actually independent from the potential outcomes.
	In its basic form, the outcome variable is supposed to be binary.
	Thus, we binarize the GPA variable, and consider the following three cases: $\mathbf{1}\{\text{GPA} \le \text{25th percentile point}\}$, $\mathbf{1}\{\text{GPA} \le \text{50th percentile point}\}$, and $\mathbf{1}\{\text{GPA} \le \text{75th percentile point}\}$. 
	Because the testing procedure requires splitting the data into many subsamples, we performed the test by merging the male and female data to maintain the size of each subsample.
	\citemain{kedagni2020generalized} also discuss how to incorporate additional covariates into the testing procedure, but for simplicity of implementation, we ignore additional control variables.
	The numbers reported in Table \ref{table:KMtest} are the lower bounds of the one-sided confidence intervals of $\sum_{k = 1}^K \sup_z \mathbb{E}[W_k | Z = z]$ at each corresponding confidence level.
	The lower bounds are negative for all IV candidates, implying that there is no strong evidence to doubt their independence property.
}

\begingroup
\renewcommand{\arraystretch}{1.2}
\begin{table}[!h]
	\caption{Testing the IV independence assumption}
	\begin{center}
		\scriptsize{\begin{tabular}{l|ccc|ccc|ccc}
			\hline \hline
			\multicolumn{1}{r|}{Dependent variable} & \multicolumn{3}{c|}{$\mathbf{1}\{\text{GPA} \le \text{25th percentile point}\}$} & \multicolumn{3}{c|}{$\mathbf{1}\{\text{GPA} \le \text{50th percentile point}\}$} & \multicolumn{3}{c}{$\mathbf{1}\{\text{GPA} \le \text{75th percentile point}\}$} \\ 
				\multicolumn{1}{r|}{Confidence level} & 90\% & 95\% & 99\% & 90\% & 95\% & 99\% & 90\% & 95\% & 99\% \\ \hline
				log(Mother's education + 1) & -0.4000 & -0.4044 & -0.4143 & -0.4042 & -0.4081 & -0.4152 & -0.3873 & -0.3907 & -0.3978 \\ 
				Mother's job (professional) & -0.4090 & -0.4170 & -0.4253 & -0.3873 & -0.3967 & -0.4084 & -0.3690 & -0.3797 & -0.3905 \\ 
				Mother's job (unemployed) & -0.4166 & -0.4194 & -0.4253 & -0.4052 & -0.4087 & -0.4155 & -0.3971 & -0.4003 & -0.4070 \\ 
				log(Father's education + 1) & -0.3799 & -0.3852 & -0.3956 & -0.4080 & -0.4113 & -0.4183 & -0.3896 & -0.3933 & -0.3997 \\ 
				Father's job (professional) & -0.4109 & -0.4148 & -0.4224 & -0.4034 & -0.4076 & -0.4159 & -0.3976 & -0.4018 & -0.4102 \\ 
				Father's job (unemployed) & -0.3812 & -0.4029 & -0.4183 & -0.4145 & -0.4179 & -0.4251 & -0.3980 & -0.4012 & -0.4075 \\ 
				\hline
			\end{tabular}}
		\label{table:KMtest}
	\end{center}
\end{table}
\endgroup

\begingroup
\renewcommand{\arraystretch}{1.2}
\begin{table}[!h]
	\caption{Estimation results of the treatment decision model}
	\begin{center}
		\footnotesize{\begin{tabular}{llccllcc}
				\hline \hline
				&  \multicolumn{1}{c}{$\gamma_0$} & Estimate & $t$-value &  &   \multicolumn{1}{c}{$\gamma_1$} & Estimate & $t$-value\\
				\hline
				$X$ & Intercept & -5.600  & -8.066  & $X$ & Intercept & -0.450  & -0.061  \\
				 & log(Age) & 2.210  & 6.822  &  & log(Age) & -0.993  & -0.307  \\
				 & Grade  & -0.047  & -2.148  &  & Grade  & 0.064  & 0.301  \\
				 & White & -0.078  & -1.647  &  & White & 1.437  & 2.011  \\
				 & Black & -0.360  & -6.026  &  & Black & -0.062  & -0.075  \\
				 & Asian & -0.360  & -4.698  &  & Asian & 1.906  & 2.294  \\
				 & log(Mother's education + 1) & -0.064  & -2.652  &  & log(Mother's education + 1) & -0.159  & -0.598  \\
				 & Mother's job (professional) & -0.048  & -1.264  &  & Mother's job (professional) & 0.134  & 0.324  \\
				 & Mother's job (unemployed) & -0.038  & -0.508  &  & Mother's job (unemployed) & -0.199  & -0.208  \\
				 & log(Father's education + 1) & -0.088  & -5.658  &  & log(Father's education + 1) & 0.577  & 2.191  \\
				 & Father's job (professional) & -0.044  & -0.874  &  & Father's job (professional) & -0.021  & -0.040  \\
				 & Father's job (unemployed) & -0.048  & -0.588  &  & Father's job (unemployed) & 0.045  & 0.053  \\
				 & Academic club & -0.130  & -3.446  &  & Academic club & -0.950  & -1.669  \\
				 & Sports club & 0.006  & 0.181  &  & Sports club & -1.265  & -2.720  \\
				$Z$ & log(Mother's education + 1) & 0.059  & 2.349  & $Z$ & log(Mother's education + 1) & -0.627  & -1.979  \\
				 & Mother's job (professional) & -0.153  & -2.653  &  & Mother's job (professional) & 1.145  & 1.755  \\
				 & Mother's job (unemployed) & 0.128  & 1.111  &  & Mother's job (unemployed) & -0.180  & -0.121  \\
				 & log(Father's education + 1) & -0.044  & -1.822  &  & log(Father's education + 1) & 0.211  & 0.712  \\
				 & Father's job (professional) & -0.142  & -1.996  &  & Father's job (professional) & 0.150  & 0.197  \\
				 & Father's job (unemployed) & -0.218  & -1.734  &  & Father's job (unemployed) & 0.123  & 0.095  \\
				 &  &  &  &  &  &  &  \\
				 & $\rho$ & -0.810  & -1.698  &  & Sample size & 6053 &  \\
				 & $\lambda$ & 0.761  & 2.723  &  & Log-likelihood & -7425.106  &  \\
				\hline
		\end{tabular}}
		\label{table:1ststage}
	\end{center}
	\footnotesize{
		Note: The above result is based on the following model: $D_j = \mathbf{1}\{ \tilde{W}_j^\top\gamma_0 + D_{-j}\cdot \exp(W_j^\top\gamma_1)^{0.25} \ge \varepsilon_j \}$, where $\tilde{W}_j = (X_j, Z_j, \text{School dummies}_j)$, and $W_j = (X_j, Z_j)$.
		The IVs $(Z_j)$ are variables that are defined by averaging the variables listed above (i.e., log(Mother's education + 1), \ldots, Father's job (unemployed)) over the 4th-5th best friends.
		The school dummy variable is introduced only for large schools where the number of respondents in the sample data is larger than or equal to 120.
		The results for the school dummies are omitted to save space.
		The estimation procedure is the same as in the Monte Carlo experiments.
	}
\end{table}
\endgroup

\clearpage

\begin{figure}[!h]
	\centering
	\includegraphics[width = 16cm]{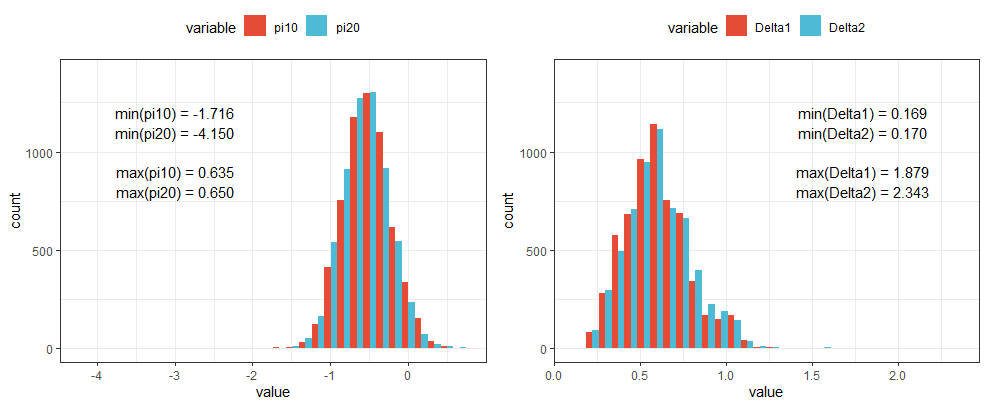}
	\caption{Histograms of the estimated $(\pi_1^0(W_1), \pi_2^0(W_2))$ and $(\Delta_1(W_1), \Delta_2(W_2))$ in the left and right panels, respectively.}
	\label{fig:PiDelhist}
\end{figure}

\begin{figure}[!h]
	\centering
	\includegraphics[width = 16cm]{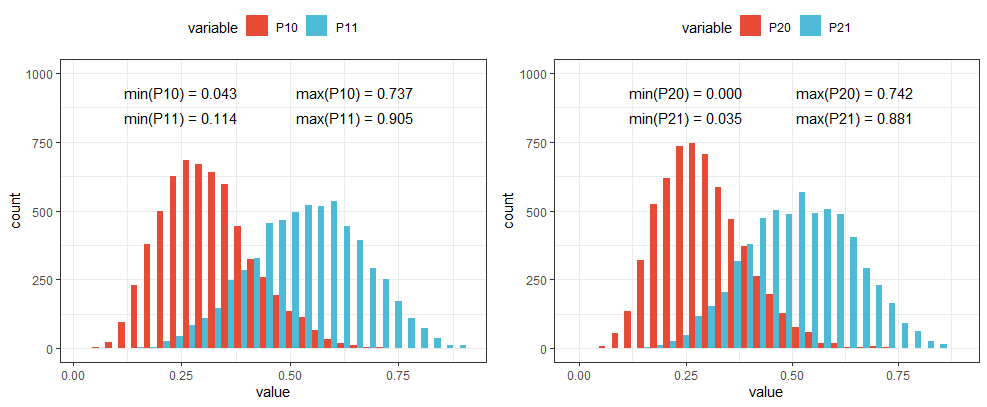}
	\caption{Histograms of $(\hat P_1^0, \hat P_1^1)$ and $(\hat P_2^0, \hat P_2^1)$ in left and right panels, respectively. }
	\label{fig:Phist}
\end{figure}

\clearpage
\begin{center}
	\it \large{References cited in the main text}.
\end{center}
\begin{small}
	\bibliographystylemain{tandfx}
	\bibliographymain{ref}
\end{small}

\clearpage
\begin{center}
	\it \large{References cited in the appendices}.
\end{center}
\begin{small}
	\bibliographystyleappendix{tandfx}
	\bibliographyappendix{ref}
\end{small}

\end{document}